\definecolor{forestgreen}{rgb}{0.13, 0.55, 0.13}
\titleclass{\subsubsubsection}{straight}[\subsection]
\newcounter{subsubsubsection}[subsubsection]
\renewcommand\thesubsubsubsection{\thesubsubsection.\arabic{subsubsubsection}}
\def\toclevel@subsubsubsection{4}
\def\l@subsubsubsection{\@dottedtocline{4}{7em}{4em}}
\crefname{equation}{}{}
\crefname{lemma}{Lemma}{Lemmas}
\crefname{section}{Section}{Sections}
\crefname{subsubsubsection}{Section}{Sections}
\crefname{remark}{Remark}{Remarks}
\crefname{figure}{Figure}{Figures}
\crefname{table}{Table}{Tables}
\Crefname{lemma}{Lemma}{Lemmas}
\crefname{theorem}{Theorem}{Theorems}
\Crefname{theorem}{Theorem}{Theorems}
\newcommand{\eps}{\varepsilon}
\newtheorem{theorem}{Theorem}[section]
\newtheorem{conjecture}[theorem]{Conjecture}
\newtheorem{remark}[theorem]{Remark}
\newtheorem{fact}[theorem]{Fact}
\newtheorem{lemma}[theorem]{Lemma}
\newtheorem{proposition}[theorem]{Proposition}
\newtheorem{corollary}[theorem]{Corollary}
\newtheorem{THM}[theorem]{Theorem}
\crefname{THM}{Theorem}{Theorems}
\theoremstyle{definition}
\theoremstyle{definition}
\newtheorem{definition}{Definition}[section]
\title{Fractional Pseudorandom Generators from Any Fourier Level}
\author{Eshan Chattopadhyay\thanks{Supported by NSF grant CCF-1849899.} \\
Cornell University\\
\texttt{eshanc@cornell.edu}
\and
Jason Gaitonde\thanks{Supported by NSF grant CCF-1408673 and AFOSR grant FA9550-19-1-0183.} \\
Cornell University\\
\texttt{jsg355@cornell.edu}
 \and
Chin Ho Lee\thanks{Supported by the Croucher Foundation and the Simons Collaboration on Algorithms and Geometry.} \\
Columbia University\\
\texttt{c.h.lee@columbia.edu}
\and
Shachar Lovett\thanks{Supprted by NSF grants CCF-2006443 and DMS-1953928.} \\
University of California, San Diego\\
\texttt{slovett@cs.ucsd.edu}
\and
Abhishek Shetty\thanks{Supported by a Cornell University Fellowship and a JP Morgan Chase Faculty Fellowship.} \\
Cornell University\\
\texttt{shetty@cs.cornell.edu}
}
\begin{document}
 \pagenumbering{gobble}

\maketitle
  \begin{abstract}
  We prove new results on the polarizing random walk framework introduced in recent works of Chattopadhyay {et al.} \cite{CHHL,CHLT} that exploit $L_1$ Fourier tail bounds for classes of Boolean functions to construct pseudorandom generators (PRGs).  We show that given a bound on the $k$-th level of the Fourier spectrum, one can construct a PRG with a seed length whose quality scales with $k$.  This interpolates previous works, which either require Fourier bounds on all levels \cite{CHHL}, or have polynomial dependence on the error parameter in the seed length \cite{CHLT}, and thus answers an open question in \cite{CHLT}.  As an example, we show that for polynomial error, Fourier bounds on the first $O(\log n)$ levels is sufficient to recover the seed length in \cite{CHHL}, which requires bounds on the entire tail.

We obtain our results by an alternate analysis of fractional PRGs using Taylor's theorem and bounding the degree-$k$ Lagrange remainder term using multilinearity and random restrictions.  Interestingly, our analysis relies only on the \emph{level-k unsigned Fourier sum}, which is  potentially a much smaller quantity than the $L_1$ notion in previous works.  By generalizing a connection established in \cite{xorlemma}, we give a new reduction from constructing PRGs to proving correlation bounds. Finally, using these improvements we show how to obtain a PRG for $\mathbb{F}_2$ polynomials with seed length close to the state-of-the-art construction due to Viola \cite{viola2009sum}, which was not known to be possible using this framework.
\end{abstract}

\newpage
\pagenumbering{arabic}
\section{Introduction}
A central pursuit in complexity theory is to understand the need of randomness in efficient computation. Indeed there are important conjectures (such as $\mathbf{P}=\mathbf{BPP}$) in complexity theory which state that one can completely remove the use of randomness without losing much in efficiency. While we are quite far from proving such results, a rich line of work has focused on \emph{derandomizing} simpler models of computation (see \cite{vadhan2012pseudorandomness} for a survey of prior work on derandomization). A key tool for proving such derandomization results is through the notion of a \emph{pseudorandom generator} defined as follows.
\begin{definition}\label{def:prg}
Let $\mathcal{F}$ be a class of $n$-variate Boolean functions. A \emph{pseudorandom generator} (PRG) for $\mathcal{F}$ with error $\eps>0$ is a random variable $\mathbf{X}\in \{-1,1\}^n$ such that for all $f\in \mathcal{F}$,
\begin{equation*}
    \bigl| \mathbb{E}_{\mathbf{X}}[f(\mathbf{X})]-\mathbb{E}_{\mathbf{U}_n}[f(\mathbf{U}_n)] \bigr| \leq \eps,
\end{equation*}
where $\mathbf{U}_n$ is the uniform distribution on $\{-1,1\}^n$.  We also say that $\mathbf{X}$ \emph{fools} $\mathcal{F}$ with error $\eps$.  If $\mathbf{X}=G(\mathbf{U}_s)$ for some explicit function $G:\{-1,1\}^s\to \{-1,1\}^n$, then $\mathbf{X}$ has \emph{seed length} $s$.
\end{definition} 
There is a long line of research on explicit constructions of PRGs for various classes of Boolean functions in the literature and it is well beyond our scope to survey prior work here. We focus on a recent line of works initiated by Chattopadhyay {et al.} \cite{CHHL, CHLT} that provide a framework for constructing pseudorandom generators for any Boolean function classes that exhibit \emph{Fourier tail bounds} (we will define and discuss this in more details in the next subsection; see \Cref{subsec:fourier} for a brief introduction to Fourier analysis of Boolean functions). This provides a unified PRG for several well-studied function classes such as small-depth circuits, low-sensitivity functions, and read-once branching programs that exhibit such Fourier tails.

We now briefly discuss this new framework, and then in \Cref{intro:results} we present our new results,  which significantly generalize this approach.

\subsection{The Polarizing Random Walk Framework} \label{intro:polarize}
The \emph{polarizing random walk} framework was introduced by Chattopadhyay, Hatami, Hosseini, and Lovett \cite{CHHL}. 
The authors showed that for any class of $n$-variate Boolean functions that are closed under restrictions, one can flexibly construct pseudorandom generators via the following local-to-global principle: it suffices to construct \emph{fractional pseudorandom generators (fractional PRGs)}, a notion that generalizes PRGs to allow the random variable $\mathbf{X}$ (in \Cref{def:prg}) to be supported on the solid cube $[-1,1]^n$ instead of $\{-1,1\}^n$, while still requiring that $\mathbf{X}$ fools (the multilinear extension) of each Boolean function in the class.
Ideally, the variance of each coordinate of $\mathbf{X}$ should be as large as possible.  
Towards this,  we define a fractional PRG $\mathbf{X}$ to be $p$-noticeable if the variance in each of its coordinates is least $p$ (See \Cref{def:fprg} for a formal definition of a fractional PRG).

To obtain a genuine pseudorandom generator from a fractional PRG,
the authors give a random walk gadget that composes together independent copies of the fractional PRG in a random walk that polarizes  $\mathbf{X}$ quickly to take values from the Boolean hypercube $\{-1,1\}^n$.
The analysis for how the error accumulates in this process relies on interpreting the intermediate points of $\mathbf{X}$ in this 
random walk as an average of \emph{random restrictions} of the original Boolean function. As the fractional PRG locally fools the class by definition, this analysis shows that the random walk does not incur much additional error at each intermediate step and the rapid polarization shows that it does not take too many steps.  Taken together, these two facts imply that the final random variable (supported on  $\{-1,1\}^n$) successfully fools the class. 

\jgdelete{The above framework shows that if one can construct  non-Boolean random variables with sufficiently large variance in each coordinate, then one immediately obtains a pseudorandom generator using their random walk gadget.} 
Through this construction, the design of pseudorandom generators reduces to the easier task of designing fractional pseudorandom generators.
It is easier as such random variables need not be Boolean-valued.
 The authors further construct such fractional pseudorandom generators for any class of functions satisfying \emph{Fourier tail bounds}, that is, every function in the class is such that the $L_1$ Fourier mass at each level $1\leq k\leq n$ is at most $b^k$ for some fixed $b\geq 1$.
 For error $\eps$, their fractional pseudorandom generators have seed length $O(\log\log n+ \log(1/\eps))$ and variance $\Theta(b^{-2})$ in each coordinate.  Combining this fractional pseudorandom generator with their random walk gadget yields a pseudorandom generator with seed length $b^2\cdot \mathrm{polylog}(n/\eps)$ for \emph{any} class with such Fourier tail bounds. 

As a result, if one can show that a function class admits nontrivial Fourier tail bounds (and is closed under restriction), then the \cite{CHHL} construction immediately implies a pseudorandom generator for this class. Some examples of Boolean functions that exhibit such tail bounds include $\mathbf{AC}^0$ circuits with the parameter $b =\mathrm{poly}(\log n)$ \cite{LMN89,tal2017tight}, constant width read-once branching programs with $b =\mathrm{poly}(\log n)$ \cite{chattopadhyay2018improved}, $s$-sensitive functions with $b=O(s)$ \cite{GSW16,GSTW}, and product tests~\cite{DBLP:conf/coco/Lee19}. Using these tail bounds, \cite{CHHL} immediately gave  
PRGs for these function classes. It was also conjectured in \cite{CHHL} that the class of $n$-variate degree-$d$ polynomials over $\mathbb{F}_2$ satisfy such tail bounds.  We discuss this in more details in \Cref{intro:results}.

A natural question is whether the complete control on the entire Fourier tail of a class is necessary to obtain a PRG in this framework. In the subsequent work by Chattopadhyay, Hatami, Lovett, and Tal \cite{CHLT}, the authors show how to construct fractional pseudorandom generators using different pseudorandom primitives whose seed length depends on just the \emph{second Fourier level} of the class. They construct their fractional PRGs by derandomizing the celebrated work of Raz and Tal \cite{RazTal}, which establishes an oracle separation of $\mathbf{BQP}$ and $\mathbf{PH}$.  Raz and Tal show that classes of multilinear functions with small level-two Fourier mass cannot significantly distinguish between a suitable variant of the Forrelation distribution and the uniform distribution.\footnote{It turns out that this fact can be interpreted via It\^o's Lemma, which shows that the local behavior of a smooth function of Brownian motion is essentially determined by the first two derivatives~\cite{wu2020stochastic}.}
\jgdelete{\cite{CHLT} show that one can derandomize this analysis by efficiently constructing fractional PRGs that simulate Gaussian random variables with small covariance using the best-known constructions of error-correcting codes.}
However, this construction incurs exponentially worse dependence on the error parameter in each fractional step to sample sufficiently good approximations to Gaussian random variables. The final seed length given by this construction has the form $O((b^2/\eps)^{2+o(1)} \mathrm{polylog}(n))$, where $b^2$ is the level-two Fourier mass of the class. This yields exponentially worse dependence on the error compared to the generator of \cite{CHHL}, as well as quadratically worse dependence on the level-two mass (though {without assumptions on the} rest of the Fourier levels).

\subsection{Our Contribution}\label{intro:results}
{In this paper, we address several open questions in this framework by leveraging a novel connection between polarizing random walk and the classical theory of polynomial approximation.} Given these {prior} works, a very natural question (also explicitly asked in \cite{CHLT}) is whether it is possible to interpolate between these {previous} constructions by assuming Fourier bounds on an intermediate level. Concretely, can this framework still succeed if one has Fourier control at just level $k$? If the class further has such Fourier bounds up to and including level $k$, can one interpolate between the seed lengths of \cite{CHHL} and \cite{CHLT}? Given Fourier bounds from level $1$ up to level $k$, what range of error $\eps>0$ can the resulting PRG tolerate while maintaining polylogarithmic dependence on $1/\eps$ in the seed length (or equivalently, given a desired error $\eps>0$, how many levels of Fourier bounds are sufficient to ensure that the seed length remains polylogarithmic in $1/\eps$)?

Moreover, {it was previously not known whether $L_1$ control of Fourier tails is really necessary for this framework to yield effective PRGs, or whether weaker Fourier quantities would suffice. To this end, define
\begin{equation*}
    L_{1,k}(f)\triangleq \sum_{S\subseteq [n]:\vert S\vert=k} \vert \hat{f}(S)\vert
\end{equation*}
to be the \emph{level-$k$ $L_1$ Fourier mass} of $f$, and 
\begin{equation*}
    M_k(f)\triangleq \max_{\mathbf{x}\in [-1,1]^n}\bigg\vert \sum_{S:\vert S\vert=k} \hat{f}(S)\mathbf{x}^S\bigg\vert =  \max_{\mathbf{x}\in \{-1,1\}^n}\bigg\vert \sum_{S:\vert S\vert=k} \hat{f}(S)\mathbf{x}^S\bigg\vert .
\end{equation*}
to be the \emph{level-$k$ absolute Fourier sum} of $f$.
For a function class $\mathcal{F}$, we define $L_{1,k}(\mathcal{F})$ and $M_k( \mathcal{F} )$ as the maximum of $L_{1,k}(f)$ and $M_k(f)$ taken over $f\in \mathcal{F}$.}  The recent work by Chattopadhyay, Hatami, Hosseini, Lovett, and Zuckerman \cite{xorlemma} considers the weaker quantity of the level-two {\em unsigned Fourier sum}, defined as the absolute value of the sum of the Fourier coefficients rather than the sum of their absolute values that is considered in \cite{CHHL,CHLT}.  The authors show that the problem of bounding the level-two unsigned Fourier sum corresponds to the problem of bounding the covariance of the function class and the $\mathsf{XOR}$ of shifted majority functions. {For a class that is closed under negations of the variables, the level-two unsigned Fourier sum is precisely the quantity $M_2(\mathcal{F})$.} In particular, using this connection to this weaker object, the authors explicitly ask whether bounding the weaker Fourier quantity $M_2(\mathcal{F})$ (or more generally, $M_k(\mathcal{F})$) yield pseudorandom generators.

{In this work, we positively resolve all of these questions. To do so, we establish novel connections between the polarizing random walk framework and the classical theory of polynomial approximations of Boolean functions. We show that the seed length of a fractional PRG for a given class of functions $\mathcal{F}$ is intimately connected to the uniform error of low-degree approximations of functions on \emph{subcubes} of the form $[-c,c]^n$ for some $c<1$.} 

{Our main technical result provides an upper bound on this quantity in terms of $M_k(\mathcal{F})$ for every function $f$ in a class $\mathcal{F}$ that is closed under restrictions. For any multilinear polynomial $f:\{-1,1\}^n\to \mathbb{R}$, define $f_{\geq k}$ to be component of $f$ with monomials of degree at least $k$. Then our main result asserts the following bound:}

\begin{THM}
\label{thm:upperbound}
        Let $f\in \mathcal{F}$ with $\mathcal{F}$ closed under restrictions.  Then for all $c \in (0,1)$, we have
    \begin{equation*}
        \max_{\mathbf{x} \in [-c , c]^n } \vert f_{\geq k} (\mathbf{x})\vert \leq \left( \frac{c}{1 - c} \right)^k M_k(\mathcal{F}).
    \end{equation*}
\end{THM}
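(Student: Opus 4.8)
The plan is to write $f_{\ge k}(\mathbf x)$ exactly as a nonnegative weighted average, over a one-parameter family of random restrictions, of the degree-$k$ homogeneous component of the restricted function evaluated at a point of the solid cube, and then to bound each such component by $M_k(\mathcal F)$ via closure under restrictions. Write $\mathbf x = c\,\mathbf y$ with $\mathbf y\in[-1,1]^n$, and for $p\in[1-c,1]$ let $\boldsymbol\rho_p$ be the random restriction that keeps each coordinate $i$ free independently with probability $p$ and otherwise fixes it to a random sign $\rho_i\in\{-1,1\}$ with $\mathbb E[\rho_i]=y_i$ (admissible since $|y_i|\le 1$). Let $\mathbf T$ be the set of free coordinates, let $f|_{\boldsymbol\rho_p}$ be the resulting multilinear function on $\mathbf T$, and for a multilinear $g$ write $g_{=k}$ for its degree-exactly-$k$ part.

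The first step is a direct Fourier computation. A set $S$ contributes to $(f|_{\boldsymbol\rho_p})_{=k}$ precisely when exactly $k$ of its coordinates are free; conditioning on which $k$-subset this is, the free coordinates of $S$ are evaluated at $y_i$ while the fixed ones contribute their mean $y_i$, so each of the $\binom{|S|}{k}$ choices contributes $p^k(1-p)^{|S|-k}\,\hat f(S)\prod_{i\in S}y_i$ in expectation. Hence
\[
  \mathbb E\bigl[(f|_{\boldsymbol\rho_p})_{=k}(\mathbf y|_{\mathbf T})\bigr]\;=\;\sum_{S}\binom{|S|}{k}p^{k}(1-p)^{|S|-k}\,\hat f(S)\,\mathbf y^{S}.
\]
I would then integrate this against the weight $w(p):=k\,(p-1+c)^{k-1}/p^{k}$ on $[1-c,1]$. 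The substitution $p=1-c(1-q)$ turns $\int_{1-c}^{1}w(p)\,p^{k}(1-p)^{|S|-k}\,dp$ into a Beta integral, and a short calculation gives $\binom{|S|}{k}\int_{1-c}^{1}w(p)\,p^{k}(1-p)^{|S|-k}\,dp=c^{|S|}$ for every $|S|\ge k$ (the terms with $|S|<k$ drop out because of the binomial coefficient). Therefore
\[
  \int_{1-c}^{1}w(p)\,\mathbb E\bigl[(f|_{\boldsymbol\rho_p})_{=k}(\mathbf y|_{\mathbf T})\bigr]\,dp\;=\;\sum_{|S|\ge k}\hat f(S)\,c^{|S|}\mathbf y^{S}\;=\;\sum_{|S|\ge k}\hat f(S)\,\mathbf x^{S}\;=\;f_{\ge k}(\mathbf x).
\]

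For the bound, note that for every realization $\boldsymbol\rho_p$ the function $f|_{\boldsymbol\rho_p}$ is a restriction of $f\in\mathcal F$, hence lies in $\mathcal F$; thus its degree-$k$ part is a multilinear form whose maximum modulus over $[-1,1]^{\mathbf T}$ equals $M_k(f|_{\boldsymbol\rho_p})\le M_k(\mathcal F)$, and since $\mathbf y|_{\mathbf T}\in[-1,1]^{\mathbf T}$ each integrand has modulus at most $M_k(\mathcal F)$. As $w\ge 0$ on $[1-c,1]$, the triangle inequality gives $|f_{\ge k}(\mathbf x)|\le M_k(\mathcal F)\int_{1-c}^{1}w(p)\,dp$; bounding $p\ge 1-c$ in the denominator of $w$ yields $\int_{1-c}^{1}w(p)\,dp\le (1-c)^{-k}\int_{1-c}^{1}k(p-1+c)^{k-1}\,dp=\bigl(c/(1-c)\bigr)^{k}$, which is the claimed inequality (and taking the maximum over $\mathbf x\in[-c,c]^n$ finishes).

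The step I expect to be the crux is finding the right random process and weight. Evaluating the restricted function at $\mathbf x$ itself rather than rescaling to $\mathbf y=\mathbf x/c$ (which exploits homogeneity of the degree-$k$ part to move onto the unit cube), or letting $p$ range over all of $[0,1]$, forces the reconstruction weight to be non-integrable at $p=0$ and leaks spurious factors of $n$ or $\log n$; supporting $p$ on $[1-c,1]$ and rescaling is exactly what makes $\int w$ collapse to $(c/(1-c))^{k}$. It is also worth noting that the $\binom{|S|}{k}$ over-counting (``which $k$ coordinates of $S$ are free'') is cancelled precisely by the Beta integral, so the argument never uses cancellation among Fourier coefficients beyond what is already built into $M_k$; this is the reason only $M_k(\mathcal F)$, and not the full $L_1$ tail, appears.
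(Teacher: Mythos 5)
Your proof is correct, and it takes a genuinely different route from the paper. The paper proves this bound (as \cref{thm:main}) by setting $g(t)=f(t\mathbf{x})$, invoking Taylor's theorem with the Lagrange remainder $R_k=g^{(k)}(s)/k!$ at an unknown $s\in(0,1)$, and then bounding $g^{(k)}(s)$ via a ``recentering'' lemma (\cref{lemma:finallem}): one writes $\tilde f(\mathbf{y})=f(s\mathbf{x}+(1-c)\mathbf{y})$, uses \cref{lem:conv} to see that $\tilde f\in\mathrm{conv}(\mathcal{F})$, and relates $g^{(k)}(s)$ to $k!\,\tilde f_k(\mathbf{x})$ with a $\lambda^{-k}=(1-c)^{-k}$ loss. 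You instead construct an \emph{exact} integral representation $f_{\geq k}(\mathbf{x})=\int_{1-c}^{1}w(p)\,\mathbb{E}\bigl[(f|_{\boldsymbol\rho_p})_{=k}(\mathbf{y}|_{\mathbf T})\bigr]\,dp$ with a nonnegative weight chosen so that the Beta integral $k\int_{1-c}^1 (p-1+c)^{k-1}(1-p)^{|S|-k}\,dp=c^{|S|}/\binom{|S|}{k}$ reproduces each Fourier coefficient of $f_{\ge k}$ exactly, and then bound $\int w\le (c/(1-c))^k$. Both proofs share the core mechanism — a random restriction that recenters $f$ so its level-$k$ part is controlled by $M_k(\mathcal F)$ — but yours avoids Taylor's theorem and the mean-value subtlety entirely, giving a self-contained, constructive averaging identity (your $p$-parameterized restriction is effectively what the paper's Lemma~\ref{lem:conv} does for each fixed $s$, fused with the integral form of the Taylor remainder and then evaluated in closed form). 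A minor stylistic note: the paper works over $\mathrm{conv}(\mathcal F)$ to keep convex averages inside the class; you sidestep this by pushing the expectation and triangle inequality directly past the bound $M_k(\mathcal F)$, which is equivalent but slightly cleaner.
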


{For intuition, recall that by Parseval's identity in Fourier analysis the low-degree Fourier expansion of any Boolean function $f$ is provably the best $\ell_2$-approximator on $\{-1,1\}^n$. Conversely, from elementary analysis, one can show that the best uniform (i.e. $\ell_{\infty}$) low-degree approximators of $f$ converge, coefficient-by-coefficient, to the low-degree expansion of $f$ as the domain converges to $\mathbf{0}$. Our main result shows that one can strongly quantify the $\ell_{\infty}$ error of the low-degree approximator of Boolean functions on subcubes so long as $c$ is not too close to $1$ (compare this bound to when $f$ has degree exactly $k$).}

{We complement this result with a corresponding lower bound on the best attainable uniform error for \emph{any} low-degree approximation on these subcubes that will be comparable for sufficiently small values of $c$ (see \Cref{thm:cheby}).  These results combined together imply that the low-order expansion of a Boolean function is a reasonable uniform approximation for small domains.  Note that the properties of low-degree approximations on subcubes with $c\ll 1$ can be quite different than for $c=1$; for instance the $\mathsf{PARITY}$ function on $n$ bits is well-known to be inapproximable on $\{-1,1\}^n$ to constant error unless the approximating polynomial has degree $\Omega(n)$, but is trivially approximable for any $c$ bounded away from $1$.}

{From this main result, we can positively resolve the above open questions in the polarizing random walk framework as a nearly immediate corollary. To do so, we provide a new analysis of the fractional pseudorandom generator of \cite{CHHL} that views fractional pseudorandom generators as fooling a low-degree part of a function on $[-c,c]^n$ for some $c<1$, where the high-degree part has small $\ell_{\infty}$ norm on $[-c,c]^n$. Recall that the seed length of the final generator depends on the variance of the constituent fractional generator; the connection to the above result is that for a given error $\eps$, the largest subcube on which the above approximation holds can be lower-bounded using just the weaker $M_k(\mathcal{F})$ quantity.}
Leveraging this insight, our main result in the polarizing random walk framework is the following analysis of a fractional pseudorandom generator:
\begin{THM}
\label{thm:fracprg1}
Let $\mathcal{F}$ be any class of $n$-variate Boolean functions that is closed under restrictions.  Suppose $M_k(\mathcal{F})\leq b^{k}$ for some $b\geq 1$ and  $k \geq 1$. Then for any $\eps>0$, there exists an explicit $\Omega(\eps^{2/k}/b^2)$-noticeable fractional PRG for $\mathcal{F}$ with error $\eps$ and seed length $O(k\cdot \log n)$.\footnote{We remark that at this level of generality, this linear dependence on $k$ is essentially necessary. Indeed, any Boolean function on $n$-variables has $L_1$ level-$n$ mass at most $1$, but one cannot hope to generically fool all Boolean functions simultaneously without using $n$ bits.}

Further, if it holds that $L_{1,i}(\mathcal{F})\leq b^{i}$ for all $1\leq i<k$, then the seed length can be improved to $O(\log\log n + \log k+\log(1/\eps))$.
\end{THM}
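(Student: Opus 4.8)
The plan is to revisit the scaled-down bounded-independence fractional PRG of \cite{CHHL}, but to analyze it by splitting the target function at level $k$ rather than by controlling its entire Fourier tail. I would take the fractional PRG to be $\mathbf{X} = c\cdot\mathbf{D}$, where $c\in(0,1)$ is a scaling parameter to be chosen and $\mathbf{D}\in\{-1,1\}^n$ is a pseudorandom distribution with (almost) bounded independence and uniform marginals. Then $\mathbf{X}\in[-c,c]^n\subseteq[-1,1]^n$ and $\mathrm{Var}(X_i)=c^2$, so the construction is $\Omega(c^2)$-noticeable. Writing $f=f_{<k}+f_{\ge k}$ for the decomposition of the multilinear extension of $f$ into monomials of degree $<k$ and $\ge k$ (so that, since $k\ge1$, the constant term lies in $f_{<k}$ and $\mathbb{E}[f_{\ge k}(\mathbf{U}_n)]=0$), I would bound
\[
\bigl|\mathbb{E}[f(\mathbf{X})]-\mathbb{E}[f(\mathbf{U}_n)]\bigr|
\;\le\;
\underbrace{\bigl|\mathbb{E}[f_{<k}(\mathbf{X})]-\mathbb{E}[f_{<k}(\mathbf{U}_n)]\bigr|}_{\mathrm{(I)}}
\;+\;
\underbrace{\bigl|\mathbb{E}[f_{\ge k}(\mathbf{X})]\bigr|}_{\mathrm{(II)}},
\]
and aim to make each of (I) and (II) at most $\eps/2$.

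Term (II) is where \Cref{thm:upperbound} does the real work. Since $\mathbb{E}[f_{\ge k}(\mathbf{X})]=\mathbb{E}_{\mathbf{D}}[f_{\ge k}(c\mathbf{D})]$ and $c\mathbf{D}\in[-c,c]^n$, we get $\mathrm{(II)}\le\max_{\mathbf{x}\in[-c,c]^n}|f_{\ge k}(\mathbf{x})|\le (c/(1-c))^k M_k(\mathcal{F})\le (cb/(1-c))^k$ by \Cref{thm:upperbound} (this is the one place closure under restrictions is invoked, applying the theorem to $f\in\mathcal{F}$). I would then choose $c$ so that $cb/(1-c)=\min\{(\eps/2)^{1/k},\,1/2\}$, i.e.\ $c=\Theta(\eps^{1/k}/b)$; this makes $\mathrm{(II)}\le\eps/2$, gives $c^2=\Omega(\eps^{2/k}/b^2)$ — the claimed noticeability — and, importantly for (I) below, keeps $cb\le1/2$.

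For the first statement, I would take $\mathbf{D}$ to be \emph{exactly} $(k-1)$-wise independent over $\{-1,1\}^n$ (and uniform in each coordinate), which is samplable with seed length $O(k\log n)$. Then $\mathbb{E}[\mathbf{D}^S]=0$ for every $S$ with $1\le|S|\le k-1$, so the character terms in (I) all vanish, $\mathrm{(I)}=0$ — no hypothesis on the low Fourier levels is needed — and combined with the bound on (II) this gives a fractional PRG with error $\eps/2\le\eps$. For the second statement, I would instead take $\mathbf{D}$ to be $\delta$-almost $(k-1)$-wise independent, so $|\mathbb{E}[\mathbf{D}^S]|\le\delta$ whenever $1\le|S|\le k-1$; such a distribution is samplable with seed length $O(\log\log n+\log k+\log(1/\delta))$, for instance by applying a fixed $\mathbb{F}_2$-linear map whose every $\le k-1$ rows are linearly independent (a BCH-type construction, needing $O(k\log n)$ bits) to a $\delta$-biased string on $O(k\log n)$ bits, whose seed length is $O(\log(k\log n/\delta))$. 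The hypothesis $L_{1,i}(\mathcal{F})\le b^i$ for $1\le i<k$ then gives
\[
\mathrm{(I)}\;\le\;\sum_{i=1}^{k-1}c^i\,L_{1,i}(f)\,\delta\;\le\;\delta\sum_{i\ge1}(cb)^i\;\le\;\delta,
\]
using $cb\le1/2$, so taking $\delta=\Theta(\eps)$ makes $\mathrm{(I)}\le\eps/2$ and yields the improved seed length $O(\log\log n+\log k+\log(1/\eps))$.

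The main obstacle is term (II): controlling the high-degree part of $f$ on the subcube $[-c,c]^n$ is exactly what distinguishes this analysis from \cite{CHHL}, and the key point is that \Cref{thm:upperbound} does this using only the weaker quantity $M_k(\mathcal{F})$ rather than an $L_1$ bound on every Fourier level. The secondary difficulty is parameter balancing: $c$ must be large enough that $c^2\gtrsim\eps^{2/k}/b^2$ for the required noticeability, yet small enough ($cb\lesssim1/2$) that the geometric series bounding (I) converges; checking that $cb/(1-c)=\min\{(\eps/2)^{1/k},1/2\}$ achieves all of these simultaneously — and verifying the seed length of the almost-bounded-independence primitive — is routine but should be carried out carefully. (If one does not clamp $c$ at $\Theta(1/b)$, the series bounding (I) is only $\le k-1$ and one instead takes $\delta=\Theta(\eps/k)$, which is asymptotically the same.)
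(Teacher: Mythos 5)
Your proposal is correct and takes essentially the same approach as the paper. The paper proves the first part (Theorem~\ref{thm:mkfprg}) by an abstract appeal to the quantity $c_k(\eps/2,\mathcal{F})$ and a best degree-$(k-1)$ approximator $\widetilde f$, while you directly split $f=f_{<k}+f_{\ge k}$ and invoke \Cref{thm:upperbound}; since \Cref{cor:ckbound} lower-bounds $c_k$ precisely by taking $\widetilde f = f_{<k}$, this is the same argument, and your treatment of the second part via a scaled $\delta$-almost $(k-1)$-wise independent distribution matches Theorem~\ref{thm:l1bounds} exactly.
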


Using the fractional pseudorandom generator from \Cref{thm:fracprg1}, we obtain the following consequences almost immediately from the random walk gadget of \cite{CHHL} (see \cref{thm:amplification}):

\begin{enumerate}
\item \textbf{Pseudorandom Generators from Fourier Bounds at Level $k$}: From our fractional pseudorandom generator, we show that the random walk framework yields nontrivial pseudorandom generators assuming Fourier bounds \emph{just at} level $k$ of the associated class, with improvements if we assume bounds from level $1$ \emph{up to} level $k$. The informal statement is the following:

\begin{THM}
\label{thm:levelkprg}
Let $\mathcal{F}$ be any class of $n$-variate Boolean functions that is closed under restrictions. Suppose that $\mathcal{F}$ satisfies $M_{k}(\mathcal{F})\leq b^{k}$ for some  $b\geq 1$ and  $k>2$. Then there exists an explicit pseudorandom generator for $\mathcal{F}$  for error $\eps$ with seed length $k\cdot b^{2+4/(k-2)}\mathrm{polylog}(n/\eps)/\eps^{2/(k-2)}$. The seed length can be improved if $L_{1,i}(\mathcal{F}) \le b^i$ for all levels $i \le k$.
\end{THM}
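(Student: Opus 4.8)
The plan is to combine the fractional pseudorandom generator of \Cref{thm:fracprg1} with the polarizing random walk amplification gadget of \cite{CHHL} (stated as \Cref{thm:amplification}) and then solve for the optimal parameters. Recall that \Cref{thm:amplification} converts a $p$-noticeable fractional PRG for a restriction-closed class $\mathcal{F}$ with error $\delta$ and seed length $s'$ into a genuine PRG for $\mathcal{F}$: one runs a $t$-step random walk, drawing a fresh copy of the fractional PRG at each step, so the final seed length is $t\cdot s'$; polarizing all $n$ coordinates to $\{-1,1\}$ up to a rounding error of $\eps/2$ requires $t = O(\log(n/\eps)/p)$ steps, and the accumulated error from the $t$ local applications is $O(\delta t)$. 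Hence, to achieve final error $\eps$ it is enough to instantiate the fractional PRG with error $\delta = \Theta(\eps p/\log(n/\eps))$.

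Next I would invoke \Cref{thm:fracprg1}: under the hypothesis $M_k(\mathcal{F}) \le b^k$, for the chosen error $\delta$ it produces a fractional PRG that is $p$-noticeable with $p = \Omega(\delta^{2/k}/b^2)$ and has seed length $s' = O(k\log n)$. The point is that $p$ and $\delta$ now satisfy a self-referential constraint, and this is exactly where the $k-2$ exponents come from. Substituting $\delta = \Theta(\eps p/\log(n/\eps))$ into $p = \Omega(\delta^{2/k}/b^2)$ and rearranging gives $p^{\,1-2/k} = \Omega\!\left((\eps/\log(n/\eps))^{2/k}/b^2\right)$; since $k>2$ the exponent $1-2/k$ is positive, so we may solve for $p$ to obtain
\[
  p \;=\; \Omega\!\left( \frac{\eps^{2/(k-2)}}{\,b^{2k/(k-2)}\,\mathrm{polylog}(n/\eps)\,} \right),
\]
and using $2k/(k-2) = 2 + 4/(k-2)$ the denominator becomes $b^{2+4/(k-2)}$. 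Plugging this back into the seed length $t\cdot s' = O\!\left(k\log n\cdot \log(n/\eps)/p\right)$ and absorbing the logarithmic factors into $\mathrm{polylog}(n/\eps)$ gives the stated bound $k\cdot b^{2+4/(k-2)}\,\mathrm{polylog}(n/\eps)/\eps^{2/(k-2)}$. Explicitness is preserved since both \Cref{thm:fracprg1} and \Cref{thm:amplification} are explicit.

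For the improved bound under the stronger hypothesis $L_{1,i}(\mathcal{F}) \le b^i$ for all $i\le k$, I would repeat the computation verbatim but use the second half of \Cref{thm:fracprg1}, which instead gives seed length $s' = O(\log\log n + \log k + \log(1/\delta))$. For the value of $p$ derived above one has $\log(1/\delta) = O(\log(1/\eps)+\log(1/p)+\log\log(n/\eps)) = O(\log(1/\eps) + \log(bk) + \log\log(n/\eps))$, so $s'$ is polylogarithmic in all parameters; multiplying by $t = O(\log(n/\eps)/p)$ removes the $k\log n$ factor, leaving seed length $b^{2+4/(k-2)}\,\mathrm{polylog}(n/\eps)/\eps^{2/(k-2)}$.

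The only genuine work is the bookkeeping: tracking how the polylogarithmic factors propagate through the self-referential relation between $p$ and $\delta$, and checking that $k>2$ is precisely the condition that makes $1-2/k>0$ so the relation can be inverted — consistent with the fact that the exponents $4/(k-2)$ and $2/(k-2)$ blow up as $k\to 2^+$, degenerating to the $\mathrm{poly}(1/\eps)$-type behavior of \cite{CHLT} at level two. No new ideas beyond \Cref{thm:fracprg1} and \Cref{thm:amplification} are required.
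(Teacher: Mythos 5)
Your proposal is correct and follows essentially the same route as the paper: instantiate the fractional PRG from \Cref{thm:fracprg1} (\Cref{thm:mkfprg}/\Cref{thm:l1bounds}) with error $\delta$, feed it into the amplification gadget of \Cref{thm:amplification}, and balance the resulting coupled constraints on $p$ and $\delta$ to land at the $2/(k-2)$ exponent. The only cosmetic difference is that you solve the self-referential system for $p$ directly, whereas the paper eliminates $p$ and solves for $\delta = \Theta\bigl((\eps/(b^2\log(n/\eps)))^{k/(k-2)}\bigr)$; the algebra and the resulting seed lengths are the same up to the polylog factors both arguments sweep under the rug.
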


 See \cref{thm:combined} for the precise statement. One immediate consequence is that if one has a non-trivial bound on $M_3(\mathcal{F})$, then the seed length of our PRG has the same dependence on the error $\eps$ as the one in \cite{CHLT}.  Further, given $M_4(\mathcal{F}) \le b^4$, one obtains better seed length than \cite{CHLT}; in particular it has quadratically better dependence on $1/\eps$ in the seed length (as well as polylogarithmic factors in $n/\eps$).  More generally, given an appropriate Fourier bound of $b^k$ on just some level $k\leq \mathrm{polylog}(n)$, one obtains a pseudorandom generator with error $\eps$ with seed length $O(b^{2+4/(k-2)}\mathrm{polylog}(n/\eps)/\eps^{2/(k-2)})$. 
 
We note that the fractional PRG from \Cref{thm:fracprg1} cannot be converted into a PRG for $k=1,2$. Informally, this is because of the following reason: the number of steps one needs to take in the random walk gadget of \cite{CHHL} (with each step using an independent copy of the fractional PRG) scales roughly with the variance of the fractional PRG, and the error adds up in each step.  As is clear from \Cref{thm:fracprg1}, for the variance of the fractional PRG to scale sublinearly with the error, one requires $k>2$. See \cref{rem:lev-2} for more discussion. \jgcomment{What do people think about deleting this? While accurate, it seems to put a slight damper on the results and is addressed later on.}

\item \textbf{Pseudorandom Generators with Polylogarithmic Error Dependence from Up-to-level-$k$ Bounds}: A simple corollary of our fractional pseudorandom generator is that one can recover the polylogarithmic dependence on $1/\eps$ from \cite{CHHL} if $\eps \geq b\cdot \log n \cdot 2^{-O(k)}$ and we have Fourier bounds \emph{up to} level $k$. 
\begin{corollary}
\label{cor:polylogerr}
Let $\mathcal{F}$ be any class of $n$-variate Boolean functions that is closed under restrictions.  Suppose that for some level $k>2$ and $b\geq 1$, we have $M_k(\mathcal{F})\leq b^k$ and  $L_{1,i}(\mathcal{F})\leq b^{i}$ for $i<k$. Then, for any $\eps\geq b\cdot\log n \cdot 2^{-O(k)}$, there exists an explicit pseudorandom generator for $\mathcal{F}$ with error $\eps$ and seed length $O( b^{2} \mathrm{polylog}(n/\eps))$.
\end{corollary}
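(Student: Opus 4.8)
The plan is to read off \Cref{cor:polylogerr} from machinery already in place, by running the pipeline ``fractional PRG of \Cref{thm:fracprg1} $+$ random-walk amplification of \cite{CHHL} (\Cref{thm:amplification})''---which is exactly \Cref{thm:levelkprg}---not at level $k$, but at a level $k'$ chosen as a function of the target error $\eps$. The first point is that the hypotheses are downward compatible: since $M_j(\mathcal{F})\le L_{1,j}(\mathcal{F})$ for every $j$, the assumptions $M_k(\mathcal{F})\le b^k$ and $L_{1,i}(\mathcal{F})\le b^i$ for $i<k$ entail, for \emph{every} $k'\in\{3,\dots,k\}$, both $M_{k'}(\mathcal{F})\le b^{k'}$ and $L_{1,i}(\mathcal{F})\le b^i$ for all $i<k'$. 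Hence the ``up-to-level-$k'$'' generator---using the polylogarithmic-seed fractional PRG from the second part of \Cref{thm:fracprg1}---is available for any such $k'$.

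Concretely, I would set $k'\triangleq\min\bigl(k,\ \lceil\log_2(b\log(n)/\eps)\rceil+3\bigr)$, which is a legal choice, i.e.\ $3\le k'\le k$, precisely because the hypothesis $\eps\ge b\log(n)\cdot 2^{-O(k)}$ forces $\log_2(b\log(n)/\eps)=O(k)$. I then run \Cref{thm:fracprg1} at level $k'$ with fractional error $\eps'\triangleq\bigl(\eps/(b^2\log(n/\eps))\bigr)^{k'/(k'-2)}$; this choice is rigged so that when \Cref{thm:amplification} composes $t=O(\log(n/\eps)/p)$ independent copies (with noticeability $p=\Omega((\eps')^{2/k'}/b^2)$), the accumulated error $t\cdot\eps'$ is $O(\eps)$. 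The fractional PRG has seed length $O(\log\log n+\log k'+\log(1/\eps'))=\mathrm{polylog}(n/\eps)$, so the final seed length is $t\cdot\mathrm{polylog}(n/\eps)$ with
\begin{equation*}
    t \;=\; O\!\left(b^2\log(n/\eps)\cdot\Bigl(\tfrac{b^2\log(n/\eps)}{\eps}\Bigr)^{2/(k'-2)}\right).
\end{equation*}
The crux is that the choice of $k'$ flattens the middle factor to a constant: $k'-2=\Theta\bigl(\log(b\log(n)/\eps)\bigr)=\Omega\bigl(\log(b^2\log(n/\eps)/\eps)\bigr)$ (the argument $b^2\log(n/\eps)/\eps$ being polynomially bounded in $b\log(n)/\eps$), so $\bigl(b^2\log(n/\eps)/\eps\bigr)^{2/(k'-2)}=2^{O(1)}=O(1)$. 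Hence $t=O(b^2\,\mathrm{polylog}(n/\eps))$ and the seed length is $O(b^2\,\mathrm{polylog}(n/\eps))$, as claimed. (Equivalently, this is \Cref{thm:levelkprg} instantiated at level $k'$: with the up-to-level-$k'$ $L_1$ bounds its seed length has the form $b^2\cdot(b^2/\eps)^{2/(k'-2)}\cdot\mathrm{polylog}(n/\eps)$, and the same estimate kills the $(b^2/\eps)^{2/(k'-2)}$ factor.)

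I expect the entire difficulty to be the bookkeeping behind the choice of $k'$: one must simultaneously (i) keep the amplification error below $\eps$, (ii) keep the number of random-walk steps---equivalently, control the $(b^2/\eps)^{2/(k'-2)}$ blow-up---at $O(b^2\,\mathrm{polylog}(n/\eps))$, and (iii) keep $k'\le k$ so the assumed Fourier bounds are actually in force. The hypothesis $\eps\ge b\log(n)\cdot 2^{-O(k)}$ is exactly calibrated so these three constraints are simultaneously satisfiable (it is what lets $k'$ be taken $\asymp\log(b\log(n)/\eps)$ while staying $\le k$); past that, everything reduces to elementary manipulation of the bounds already proved in \Cref{thm:fracprg1,thm:levelkprg}.
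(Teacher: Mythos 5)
Your proposal is correct and is exactly the paper's argument: the paper also derives \Cref{cor:polylogerr} by instantiating \Cref{thm:combined} (equivalently \Cref{thm:levelkprg}) at an intermediate level $k' = \Theta(\log(b\log(n)/\eps))$, using the hypothesis $\eps \ge b\log n \cdot 2^{-O(k)}$ to ensure $k' \le k$ and the observation that this choice makes all $2/(k'-2)$-exponent factors $O(1)$. You simply spell out more of the bookkeeping (the downward compatibility $M_{k'}\le L_{1,k'}$ and the explicit error-balancing via $\eps'$), but the route is the same.
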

 
  This actually subsumes the analysis of \cite{CHHL} without requiring anything on the full Fourier tail, and addresses an open question of \cite{CHLT} asking how many levels of Fourier bounds one needs control of to regain polylogarithmic dependence on $\eps$. In particular, if one requires error $\eps = 1/\mathrm{poly}(n)$, then it suffices to have Fourier bounds up to level $\Theta(\log n)$ to get the same dependence.
\end{enumerate}
We view this work as a proof of concept that it is indeed possible to interpolate between the two extremes of \cite{CHHL,CHLT} in the polarizing random walk framework and obtain better results using weakened Fourier assumptions.  We prove \Cref{thm:fracprg1} in \Cref{sec:patoprg}, from which \Cref{thm:levelkprg} and \Cref{cor:polylogerr} follow without much difficulty using the existing random walk gadget of \cite{CHHL}. 

Note that for some Boolean classes of great interest such as the class of low-degree $\mathbb{F}_2$-polynomials, Fourier tail bounds as required by \cite{CHHL} are not yet known and thus \Cref{thm:fracprg1} allows us to leverage potentially much weaker bounds proved in \cite{CHHL} to construct a PRG with polylogarithmic dependence on $n/\eps$ in the seed length (see \Cref{thm:main_poly}).  This almost matches the best known PRG due to Viola \cite{viola2009sum}. In particular, we show the following:

\begin{THM}\label{thm:main_poly}
Let $\mathcal{F}$ be the class of degree-$d$ polynomials over $\mathbb{F}_2$ on $n$ variables. Then there exists an explicit pseudorandom generator for $\mathcal{F}$ with error $\eps$ and seed length $2^{O(d)}\mathrm{polylog}(n/\eps)$.
\end{THM}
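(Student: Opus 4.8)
The plan is to instantiate the framework developed above on the class $\mathcal{F}$ of degree-$d$ polynomials over $\mathbb{F}_2$, regarded as functions $f : \{-1,1\}^n \to \{-1,1\}$ via $f = (-1)^p$. First I would observe that $\mathcal{F}$ is closed under restrictions --- fixing any variables to constants can only lower the degree --- and also under negation of variables, so both \Cref{thm:fracprg1} and the random walk gadget of \cite{CHHL} (see \Cref{thm:amplification}) apply to it. Hence it suffices to establish, for a suitable single level $k$, a bound of the form $M_k(\mathcal{F}) \le b^k$, and then feed it into \Cref{thm:levelkprg}; if moreover a bound $L_{1,i}(\mathcal{F}) \le b^i$ is available for every level $i \le k$, one may instead use the cleaner \Cref{cor:polylogerr}.

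For this input I would invoke the Fourier estimate for low-degree $\mathbb{F}_2$-polynomials proved in \cite{CHHL}. Although this estimate is \emph{too weak} for the all-levels framework of \cite{CHHL}, at any fixed level $k$ it yields $M_k(\mathcal{F}) \le L_{1,k}(\mathcal{F}) \le b^k$ with a base $b$ that is $2^{O(d)}$ up to $\mathrm{poly}(k, \log n)$ factors. Taking $k = \Theta(\log(1/\eps))$ (and at least $3$), the error-dependent factor satisfies $\eps^{-2/(k-2)} = O(1)$, while $k = O(\log(1/\eps))$ and $b^{2+4/(k-2)} \le b^6 = 2^{O(d)}\,\mathrm{polylog}(n/\eps)$; substituting into the seed length $k \cdot b^{2+4/(k-2)}\,\mathrm{polylog}(n/\eps)/\eps^{2/(k-2)}$ of \Cref{thm:levelkprg} gives $2^{O(d)}\,\mathrm{polylog}(n/\eps)$, which is the claim. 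Alternatively, if \cite{CHHL} supplies $L_{1,i}(\mathcal{F}) \le b^i$ for all $i \le k$, I would take $k = \Theta(d + \log(1/\eps) + \log\log n)$ --- large enough that $\eps \ge b\log n\cdot 2^{-O(k)}$ --- so that \Cref{cor:polylogerr} applies and outputs a PRG with seed length $O(b^2\,\mathrm{polylog}(n/\eps)) = 2^{O(d)}\,\mathrm{polylog}(n/\eps)$.

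All of the real work lies upstream: in the analysis of the fractional PRG (\Cref{thm:fracprg1}, resting on \Cref{thm:upperbound}), in the random walk amplification of \cite{CHHL}, and in the cited Fourier estimate for $\mathbb{F}_2$-polynomials. Given these, the derivation above is essentially bookkeeping, and the only genuine obstacle is the interplay of parameters: $k$ must be taken large enough to absorb the $\eps^{-2/(k-2)}$ (equivalently $2^{-\Omega(k)}$) loss introduced by the random walk, yet small enough that $k$ itself and the raised base $b^{O(1)}$ both stay within $2^{O(d)}\,\mathrm{polylog}(n/\eps)$. This works precisely because we only ever appeal to the Fourier estimate at a polylogarithmic level, where its $n$-dependence is merely polylogarithmic and is not entangled with $d$ --- whereas a uniform bound of base $2^{O(d)}$ over all $n$ levels, which would be needed to run the original \cite{CHHL} framework, is exactly the conjecture that remains open and would otherwise recover Viola's construction \cite{viola2009sum} directly.
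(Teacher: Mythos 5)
Your proposal is correct and matches the paper's proof: both invoke the Fourier estimate $L_{1,j}(\mathcal{F})\le (j\cdot 2^{3d})^j$ from \cite{CHHL}, set $b=\Theta(k\cdot 2^{3d})$ for a logarithmically chosen $k$ (the paper takes $k=\Theta(\log(\log(n)/\eps))$, you take $k=\Theta(\log(1/\eps))$; both make $\eps^{-2/(k-2)}=O(1)$), and then plug into \Cref{thm:combined}. The only cosmetic difference is that the paper uses the up-to-level-$k$ $L_{1,i}$ bounds and the second, seed-length-improved branch of \Cref{thm:combined}, whereas your primary route invokes only the single-level $M_k$ bound and the first branch --- both yield the claimed $2^{O(d)}\mathrm{polylog}(n/\eps)$.
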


We present the proof of \Cref{thm:main_poly} in \Cref{sec:apps}. While this result does not quite match the current state-of-the-art PRG for this class due to Viola \cite{viola2009sum}  (and therefore fails to give anything nontrivial for $d=\Omega(\log n)$),  we view this as a conceptual contribution  that the random walk framework can yield an explicit pseudorandom generator with seed length that is polylogarithmic in $n/\eps$, which was not known from previous works~\cite{CHHL,CHLT}.  As we discuss below, the results in \cite{CHHL,CHLT} do not give a PRG for the class of $\mathbb{F}_2$-polynomials using known Fourier tail bounds. 

As a concrete application of this approach which would dramatically improve the state-of-the-art PRGs for $\mathbb{F}_2$-polynomials, both \cite{CHHL} and \cite{CHLT} conjecture Fourier bounds on the $L_1$ mass of the class of degree-$d$ $\mathbb{F}_2$ polynomials. The former conjectures that this class satisfies a tail bound of the form $c_d^k$ for some constant $c_d$ at all levels $1\leq k\leq n$ (so as to apply their approach), while the latter conjectures just that the level-two $L_1$ mass is $O(d^2)$.  While neither conjecture seems close to being resolved, {our work shows that one can instead prove bounds for the smaller quantities $M_k(\mathcal{F})$ for any $k\geq 3$.} If one could prove such bounds of the form $(\text{poly}(d,\log n))^k$ for some level $k=\Omega(1)$, or even more optimistically, for some $k=\Omega(\log n)$, this would immediately imply a breakthrough pseudorandom generator for $\mathbf{AC^0[\oplus]}$ using the results of Razborov \cite{Razborov1987} and Smolensky \cite{Smolensky1,Smolensky2} (see the discussion in \cite{CHLT}).

\jgdelete{Moreover, as stated before, recent work \cite{xorlemma} has shown that it is possible to deduce bounds on $M_2(\mathcal{F})$ using covariance bounds with the $\mathsf{XOR}$ of certain resilient functions. As we are able to show that bounds on such quantities imply pseudorandom generators, we give an analogous argument for an appropriate generalization of this result to $M_k(\mathcal{F})$ in \cref{sec:corrsec}, thus reducing the problem of constructing PRGs in this framework to proving correlation bounds.}
{To our knowledge, our application of $M_k(\mathcal{F})$ bounds is new to the pseudorandomness literature.  There are several advantages to proving $M_k(\mathcal{F})$ bounds over $L_{1,k}(\mathcal{F})$ bounds. For one, from the definition we clearly have $M_{k}(\mathcal{F})\leq L_{1,k}(\mathcal{F})$ for any class $\mathcal{F}$.  This improvement alone potentially gives smaller seed length for any class. From an analytical perspective, we believe that the quantity $M_k(\mathcal{F})$ is easier to estimate.  Specifically, for a class $\mathcal{F}$ that is closed under negation of input variables, $M_k(\mathcal{F})$ is precisely an \emph{unsigned
Fourier sum} and can be bounded via the recent connections established by Chattopadhyay {et al.}~\cite{xorlemma}, which reduces $M_2(\mathcal{F})$ bounds to proving correlation bounds against certain resilient functions.  We straightforwardly generalize their reduction to $M_k(\mathcal{F})$ bounds in \Cref{sec:corrsec}.}

\subsection{Overview of Our Approach}
{To prove \Cref{thm:upperbound}, we rely on Taylor's Theorem, as well as multilinearity and the random restriction trick of \cite{CHHL}. Recall that Taylor's Theorem, when applied to a sufficiently smooth function $h\colon[-1,1]\to \mathbb{R}$, asserts that the Taylor expansion at $0$ can be expressed in terms of its first $(k-1)$-th order derivatives at $0$ along with a Lagrange error term that depends on its $k$-th order derivatives at some intermediate point in our domain.  In doing so, the higher-order components of the function ``collapse'' down to the $k$-th order term.  While Taylor's Theorem has been extensively applied in the construction of pseudorandom generators, often in tandem with \emph{invariance principles}, we somewhat counterintuitively apply it to the \emph{multilinear expansion of the Boolean functions} themselves.} 

{To apply Taylor's theorem here, we consider one-dimensional restrictions of (the multilinear extension) of a Boolean function $f\colon\{-1,1\}^n \to \{-1,1\}$. While the full Taylor expansion of a polynomial is trivially the same polynomial, the Lagrange error term  eliminates the dependence on the high order Fourier coefficients (corresponding to the terms of degree $>k$). Moreover, the low-order terms of the  Taylor expansion of $f$ at $0$
are precisely the original low-degree part of its Fourier expansion.  However, the Lagrange error term requires the derivatives to be evaluated at a point away from $0$.
While the derivatives of $f$ at a nonzero point are related to the \emph{biased} Fourier coefficients of $f$, it is not clear how to estimate these quantities.
To overcome this difficulty, recall that we are interested in bounds on $\vert f_{\geq k}(\mathbf{x})\vert$ for $\mathbf{x}\in \{-c,c\}^n$ where $c<1$.  In \Cref{lemma:finallem}, we show that by ``recentering'' $\mathbf{x}$ using the random restriction technique of \cite{CHHL}, we can write the error term as an average of the $k$-th order derivatives \emph{at 0} of some random restrictions of our original function $f$, up to a multiplicative factor depending on $c$.  We can then apply multilinearity to bound these error terms using $M_k(\mathcal{F})$ to obtain \Cref{thm:main}.}

{While \Cref{thm:main} shows that the low-order Taylor expansion of a Boolean function is a decent \emph{uniform} approximator on subcubes $[-c,c]^n$ for some sufficiently small $c$ that depends on the class $\mathcal{F}$, it is natural to wonder if one can obtain a better 
low-order approximation.  Using our upper bound along with Chebyshev polynomials on the univariate restrictions, we give a lower bound
showing that no low-order approximator can give significantly smaller error over $[-c,c]^n$ for any $c$ less than some quantity depending on the ratio $M_k(\mathcal{F})/M_{k+1}(\mathcal{F})$ for some $k$. This quantifies the intuition that the low-degree Fourier expansion is a near optimal uniform approximator of $f$ over small enough neighborhoods of $\mathbf{0}$. These arguments are formally carried out in \Cref{sec:lowdeg}.}

To prove our results {in the polarizing random walk framework}, we rely on an alternate, simple analysis of fractional pseudorandom generators.
{The original analysis in \cite{CHHL} assumes control of $L_{1,k}(\mathcal{F})$ at all levels of the Fourier spectrum.
We now explain how these assumptions can be weakened using \Cref{thm:main}.
Consider a candidate fractional PRG $\mathbf{X}\in [-1,1]^n$.} 
We first decompose the multilinear (Fourier) expansion of $f\in \mathcal{F}$
in the same manner as \cite{CHHL}:
\begin{equation}
   \bigl| \mathbb{E}_{\mathbf{X}}[f(\mathbf{X})]-\mathbb{E}_{\mathbf{U}}[f(\mathbf{U})] \bigr|
   \leq  \underbrace{\sum_{i=1}^{k-1}\sum_{S\subseteq [n]:\vert S\vert = i} \bigl| \hat{f}(S) \bigr|
   \bigl| \mathbb{E}_{\mathbf{X}}[\mathbf{X}^S] \bigr|}_{\text{low-order terms}} + 
   \underbrace{\vphantom{\sum_{S\subseteq[n}} \bigl| \mathbb{E}_{\mathbf{X}}[f_{\geq k}(\mathbf{X})] \bigr| }_{\text{high-order term}}.
\end{equation}
{\cite{CHHL} requires bounding $L_{1,\ell}(\mathcal{F})$ for all $\ell \ge k$ to give a uniform bound on the high-order term.  
Using \Cref{thm:main}, we can obtain small error in the high-order term so long as we choose $\mathbf{X}\in [-c,c]^n$ for sufficiently small $c$ depending on $\eps$ and $M_k(\mathcal{F})$. 
To handle the low-order terms, we consider two cases: if we further have $L_{1,\ell}(\mathcal{F})$ bounds for $\ell<k$, then we may choose $\mathbf{X}$ to be a scaled $\delta$-almost $(k-1)$-wise distribution to nearly fool each of the low-order terms as in \cite{CHHL}. Otherwise, we may choose $\mathbf{X}$ to be a scaled $(k-1)$-wise independent distribution to incur zero error from the low-order terms. Note that the latter pseudorandom primitives are more expensive in terms of seed length.} Finally, to obtain pseudorandom generators, we then simply apply the random walk gadget of \cite{CHHL} to our fractional PRGs as a blackbox.  We refer the reader to \Cref{sec:patoprg} for formal proofs of the ideas in this section.

{We immediately leverage this newfound flexibility to construct new pseudorandom generators for $\mathbb{F}_2$-polynomials of degree $d = O(\log n)$.
We do this using known $L_{1,k}(\mathcal{F})$ bounds derived in \cite{CHHL}.
Previously these bounds were not sufficient to give PRGs as their analysis of fractional PRGs requires control of the entire Fourier tail,
but they can be employed here as our analysis no longer requires so. This result is given in \Cref{sec:apps}. Finally, we show how $M_k(\mathcal{F})$ bounds can be obtained using correlation bounds with shifted majority functions in \Cref{sec:corrsec}. This is done by straightforwardly generalizing the analysis of \cite{xorlemma}, which shows how such correlation bounds can be used to bound the bulk of the terms in the definition of $M_k(\mathcal{F})$.}

\subsection{Other Related Work}
{To our knowledge, our use of $M_k(\mathcal{F})$ bounds is new to the derandomization literature.  As mentioned earlier, the stronger and better-known $L_{1,k}(\mathcal{F})$ notion has been extensively studied in recent years.  In addition to derandomization, a recent line of work~\cite{DBLP:journals/eccc/Tal19,DBLP:journals/corr/abs-2008-07003,DBLP:journals/eccc/SherstovSW20} has used $L_{1,k}$ bounds for decision trees to
obtain an optimal separation of quantum and classical query complexity.  
Among these works, the work of Bansal and Sinha~\cite{DBLP:journals/corr/abs-2008-07003} generalizes the results of Raz and Tal \cite{RazTal} by considering a $k$-generalization of their Forrelation distribution and bounding the distinguishing advantage of any function with small $L_{1,\ell}$ bounds for $\ell=1,\ldots,k$. 
Much as how the results of Chattopadhyay {et al.} \cite{CHLT} derandomize the result of Raz and Tal, we believe that their construction can be derandomized for pseudorandomness purposes, but appears to give significantly worse seed length, nor obtains bounds in terms of $M_k(\mathcal{F})$. A related work by Girish, Raz, and Zhan \cite{RazXOR} establishes a similar result with a different generalization of the Forrelation distribution, but we do not know how to use their construction for pseudorandom generators.}


{The relationship between $M_k(\mathcal{F})$ and $L_{1,k}(\mathcal{F})$ has been of intense study in the mathematics literature due to renewed interest in \emph{Bohnenblust--Hille} inequalities (see, for instance, the breakthrough work of Defant, Frerick, Ortega-Cerd\`a, Ouna\"ies, and Seip \cite{defant2011bohnenblust}). The optimal constant $C_{n,k}$ satisfying $L_{1,k}(f)\leq C_{n,k}M_k(f)$ for any polynomial polynomial $f\colon\mathbb{C}^n\to \mathbb{C}$ is known as the \emph{Sidon constant}. It is known that $C_{n,k}$ is, up to small exponential factors in $k$, proportional to roughly $n^{\frac{k-1}{2}}$, and its tightness is witnessed by a random function with high probability.
The quantity $M_k(\mathcal{F})$ also has applications in other areas in theoretical computer science, such as quantum information theory (see for instance the survey of Montanaro \cite{montanaro2012some}) and Boolean function analysis \cite{DBLP:conf/stacs/Arunachalam0KSW20}.}

\section{Preliminaries}
\label{sec:preliminaries}
As in \cite{CHHL} and \cite{CHLT}, we study PRGs for classes $\mathcal{F}$ of $n$-variate Boolean functions that are closed under restriction (that is, fixing any subset of the input variables of a function in the class yields a function that remains in the class).

\subsection{Fourier Analysis} \label{subsec:fourier}
We briefly recall basic Fourier analysis: any Boolean function $f:\{-1,1\}^n \to \{-1,1\}$ admits a unique multilinear expansion, also known as the \emph{Fourier expansion}, given by 
\begin{equation}
\label{eq:fourierexpansion}
    f(\mathbf{x})=\sum_{S\subseteq [n]} \hat{f}(S)\mathbf{x}^S,
\end{equation}
where we write $\mathbf{x}^S\triangleq \prod_{i\in S} x_i$. The   Fourier coefficient $\hat{f}(S)$ is given by
\begin{equation*}
    \hat{f}(S)=\mathbb{E}_{\mathbf{X}\sim \{-1,1\}^n}[f(\mathbf{X})\mathbf{X}^S].
\end{equation*}
For more on Fourier analysis of Boolean functions, see the excellent book by O'Donnell \cite{ODonnelBook}. One may thus extend the domain of $f$ to $[-1,1]^n$, where $f(\mathbf{x})$ for arbitrary $\mathbf{x}$ is evaluated according to the expression in \Cref{eq:fourierexpansion}. Note that in this case, $f(\mathbf{0})=\hat{f}(\emptyset)=\mathbb{E}_{\mathbf{U}_n}[f(\mathbf{U}_n)]$. One of the main parameters of interest from the Fourier expansion for this framework is the following:

\begin{definition}
The \emph{level-$k$ mass of a Boolean function} $f$ is 
\begin{equation*}
    L_{1,k}(f)\triangleq \sum_{S\subseteq [n]:\vert S\vert=k} \vert \hat{f}(S)\vert,
\end{equation*}
and the \emph{level-$k$ mass of a class $\mathcal{F}$} is
$L_{1,k}(\mathcal{F})\triangleq \max_{f\in \mathcal{F}} L_{1,k}(f)$.
\end{definition}

In this work, we will show how to construct PRGs whose seed length depends on the following, smaller quantity:

\begin{definition}
\label{def:mk}
For any multilinear polynomial $f:\mathbb{R}^n\to \mathbb{R}$ given by $f(\mathbf{x})=\sum_{S\subseteq [n]} \hat{f}(S)\mathbf{x}^S$, define the level-$k$ part by
\begin{equation*}
    f_{k}(\mathbf{x})\triangleq\sum_{S\subseteq [n]:\vert S\vert=k} \hat{f}(S)\mathbf{x}^S,
\end{equation*}
and further define $f_{<k}(\mathbf{x})\triangleq \sum_{i=0}^{k-1} f_i(\mathbf{x})$ and $f_{\geq k}(\mathbf{x})\triangleq \sum_{i=k}^{n} f_i(\mathbf{x})$. 
Then we define the \emph{level-$k$ absolute Fourier sum of $f$} by
\begin{equation*}
    M_k(f)\triangleq \max_{\mathbf{x}\in [-1,1]^n}\bigg\vert \sum_{S:\vert S\vert=k} \hat{f}(S)\mathbf{x}^S\bigg\vert =  \max_{\mathbf{x}\in \{-1,1\}^n}\bigg\vert \sum_{S:\vert S\vert=k} \hat{f}(S)\mathbf{x}^S\bigg\vert
\end{equation*}
and analogously define $M_k(\mathcal{F})\triangleq \max_{f\in \mathcal{F}} M_k(f)$ for a class $\mathcal{F}$.
\end{definition}
Note that the equality arises by multilinearity, and clearly we have $M_k(f)\leq L_{1,k}(f)$ by the triangle inequality. Without loss of generality, we may further assume that our class is closed under flipping the image, i.e. we may suppose that $f\in \mathcal{F}$ if and only if $-f\in \mathcal{F}$; this transformation does not change either $L_{1,k}(f)$ or $M_k(f)$, and therefore the same bound on the class still holds when completing it to include all such functions. If this is the case, we get the more striking identity:

\begin{lemma}
\label{lem:unsigned}
Suppose that $\mathcal{F}$ is closed under negation of variables and that $f\in \mathcal{F}$ implies $-f\in \mathcal{F}$. Then
\begin{equation*}
    M_k(\mathcal{F})=\max_{f\in \mathcal{F}} \sum_{S:\vert S\vert = k} \hat{f}(S) = \max_{f \in  \mathcal{F} } f_k \left( \mathbf{1} \right) .
\end{equation*}
\end{lemma}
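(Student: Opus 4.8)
The plan is to prove the two asserted equalities separately, each following quickly from \Cref{def:mk} once the two closure hypotheses are used. The second equality is immediate: evaluating the level-$k$ part at the all-ones point gives $f_k(\mathbf{1}) = \sum_{S:|S|=k}\hat{f}(S)\mathbf{1}^S = \sum_{S:|S|=k}\hat{f}(S)$, so $\max_{f\in\mathcal{F}}\sum_{S:|S|=k}\hat{f}(S) = \max_{f\in\mathcal{F}}f_k(\mathbf{1})$ with nothing to prove. Thus everything reduces to showing $M_k(\mathcal{F}) = \max_{f\in\mathcal{F}}f_k(\mathbf{1})$, which I would do by two inequalities.

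For $\max_{f\in\mathcal{F}}f_k(\mathbf{1})\le M_k(\mathcal{F})$: for any fixed $f\in\mathcal{F}$, taking $\mathbf{x}=\mathbf{1}\in\{-1,1\}^n$ in the definition of $M_k(f)$ gives $f_k(\mathbf{1}) = \sum_{S:|S|=k}\hat{f}(S)\mathbf{1}^S \le \bigl|\sum_{S:|S|=k}\hat{f}(S)\mathbf{1}^S\bigr| \le M_k(f)\le M_k(\mathcal{F})$, and we take the max over $f$. For the reverse direction, fix $f\in\mathcal{F}$ and let $\mathbf{x}^*\in\{-1,1\}^n$ attain the maximum defining $M_k(f)$ (recall the max over $[-1,1]^n$ equals the max over $\{-1,1\}^n$ by multilinearity). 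Consider $g(\mathbf{y})\triangleq f(\mathbf{x}^*\circ\mathbf{y})$, the function obtained from $f$ by negating exactly the variables $i$ with $x^*_i=-1$; by closure under negation of variables, $g\in\mathcal{F}$. A direct computation of Fourier coefficients gives $\hat{g}(S) = (\mathbf{x}^*)^S\hat{f}(S)$, so $g_k(\mathbf{1}) = \sum_{S:|S|=k}(\mathbf{x}^*)^S\hat{f}(S)$, whose absolute value is exactly $M_k(f)$. If this sum is nonnegative then $g_k(\mathbf{1}) = M_k(f)$; if it is negative, apply the same reasoning to $-g$, which lies in $\mathcal{F}$ by the hypothesis that $f\in\mathcal{F}\Rightarrow -f\in\mathcal{F}$ and satisfies $(-g)_k(\mathbf{1}) = M_k(f)$. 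Either way $\max_{h\in\mathcal{F}}h_k(\mathbf{1})\ge M_k(f)$, and maximizing over $f$ gives $\max_{f\in\mathcal{F}}f_k(\mathbf{1})\ge M_k(\mathcal{F})$.

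There is essentially no technical obstacle; this is a bookkeeping lemma. The only points requiring a little care are (i) verifying that $\hat{g}(S) = (\mathbf{x}^*)^S\hat{f}(S)$ for the variable-negated function, and (ii) observing that flipping the overall sign of $g$ handles the case where the optimal signed level-$k$ sum is negative — these are precisely what the two closure assumptions are there to supply, and the identification of $M_k(\mathcal{F})$ with an \emph{unsigned} Fourier sum (rather than the $L_1$ sum $L_{1,k}$) falls out for free.
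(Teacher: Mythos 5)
Your proof is correct and takes essentially the same approach as the paper: both use closure under negation of variables to move the maximizing sign vector $\mathbf{x}^*$ into the function (via $g(\mathbf{y})=f(\mathbf{x}^*\circ\mathbf{y})$, which multiplies $\hat f(S)$ by $(\mathbf{x}^*)^S$), and both use the $f\Rightarrow -f$ closure to dispose of the overall sign. You merely spell out the two inequality directions more explicitly than the paper's terse remark does.
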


To see why this holds, simply note that if $(f,\mathbf{z})\in \mathcal{F}\times \{-1,1\}^n$ is a maximizer in the definition of $M_k(\mathcal{F})$ (where we may now assume that the sign is positive), then by replacing the function $f(\mathbf{x})$ with $g(\mathbf{x})=f(\mathbf{x}\circ \mathbf{z})$, where $\circ$ denotes componentwise multiplication, we have 
\begin{equation*}
    M_k(\mathcal{F})
    = \Biggl| \sum_{S:\vert S\vert=k} \hat{f}(S)\mathbf{z}^S \Biggr|
    = \sum_{S:\vert S\vert=k} \hat{g}(S)=\max_{h\in \mathcal{F}} \sum_{S:\vert S\vert=k} \hat{h}(S).
\end{equation*}
In particular, it suffices to bound the \emph{unsigned level-$k$ Fourier sum} of such a class.

Lastly, we require the following notion:

\begin{definition}
Let $\mathcal{F}$ be a class of $n$-variate multilinear polynomials that is closed under restrictions.  Define $\mathrm{conv}(\mathcal{F})$ as the convex closure of $\mathcal{F}$,
\begin{equation*}
    \mathrm{conv}(\mathcal{F})\triangleq \left\{\sum_{f\in \mathcal{F}} \lambda_f f\,\bigg\vert \sum_{f\in \mathcal{F}} \lambda_f=1, \lambda_f\geq 0 \,\,\forall f\in \mathcal{F}\right\}.
\end{equation*}
\end{definition}
We briefly note the following two elementary facts: first, by the assumption that $\mathcal{F}$ is closed under restrictions, the same is true of $\mathrm{conv}(\mathcal{F})$. The second is the following simple claim:

\begin{lemma}
For any class $\mathcal{F}$ of Boolean functions, $M_k(\mathcal{F})=M_k(\mathrm{conv}(\mathcal{F}))$.
\end{lemma}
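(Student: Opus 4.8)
The plan is to prove the two inequalities $M_k(\mathrm{conv}(\mathcal{F})) \geq M_k(\mathcal{F})$ and $M_k(\mathrm{conv}(\mathcal{F})) \leq M_k(\mathcal{F})$ separately; the first is immediate and the second is the content of the claim. For the first direction, observe that $\mathcal{F} \subseteq \mathrm{conv}(\mathcal{F})$ (take $\lambda_f = 1$ for a single $f$), so the maximum defining $M_k$ is taken over a larger set on the right-hand side, giving $M_k(\mathcal{F}) \leq M_k(\mathrm{conv}(\mathcal{F}))$ directly from \Cref{def:mk}.

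For the reverse direction, I would first note that the level-$k$ part operation $f \mapsto f_k$ is linear: if $g = \sum_{f \in \mathcal{F}} \lambda_f f$ is a finite convex combination, then $\hat{g}(S) = \sum_f \lambda_f \hat{f}(S)$ for every $S$, and hence $g_k(\mathbf{x}) = \sum_f \lambda_f f_k(\mathbf{x})$ for every $\mathbf{x} \in [-1,1]^n$. Now fix $\mathbf{x} \in [-1,1]^n$; by the triangle inequality and nonnegativity of the $\lambda_f$,
\begin{equation*}
    \bigl| g_k(\mathbf{x}) \bigr| = \biggl| \sum_{f \in \mathcal{F}} \lambda_f f_k(\mathbf{x}) \biggr| \leq \sum_{f \in \mathcal{F}} \lambda_f \bigl| f_k(\mathbf{x}) \bigr| \leq \sum_{f \in \mathcal{F}} \lambda_f M_k(f) \leq \sum_{f \in \mathcal{F}} \lambda_f M_k(\mathcal{F}) = M_k(\mathcal{F}).
\end{equation*}
Taking the maximum over $\mathbf{x} \in [-1,1]^n$ gives $M_k(g) \leq M_k(\mathcal{F})$, and then taking the supremum over $g \in \mathrm{conv}(\mathcal{F})$ yields $M_k(\mathrm{conv}(\mathcal{F})) \leq M_k(\mathcal{F})$. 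Combining the two directions gives the equality.

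The only mild subtlety — and the step I would be most careful about — is whether one wishes to allow infinite convex combinations in the definition of $\mathrm{conv}(\mathcal{F})$; as written, the sum $\sum_{f \in \mathcal{F}} \lambda_f f$ ranges over all of $\mathcal{F}$ with only finitely many $\lambda_f$ nonzero (or, if infinite combinations are intended, one restricts to those for which the sum converges pointwise), and in either case the triangle-inequality step above goes through verbatim since it only uses $\sum_f \lambda_f = 1$ and $\lambda_f \geq 0$. No compactness or continuity argument is needed because we never need the maximum over $\mathrm{conv}(\mathcal{F})$ to be attained — we only bound it. This makes the claim essentially a one-line consequence of linearity of $f \mapsto f_k$ and convexity of the absolute value, so I do not anticipate any real obstacle.
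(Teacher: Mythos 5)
Your proof is correct and is essentially the same argument as the paper's: both directions rest on $\mathcal{F}\subseteq\mathrm{conv}(\mathcal{F})$ for one inequality, and on linearity of $g\mapsto g_k$ plus the triangle inequality (applied pointwise and then taking the max over $\mathbf{x}$) for the other. The remark about finite versus infinite convex combinations is a reasonable caveat the paper leaves implicit, but it does not change the substance of the argument.
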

\begin{proof}
One direction is obvious: as $\mathcal{F}\subseteq \mathrm{conv}\mathcal{F}$, clearly $M_k(\mathcal{F})\leq M_k(\mathrm{conv}(\mathcal{F}))$. In the other direction, let $g=\sum_{f\in \mathcal{F}} \lambda_f f$ be an arbitrary element of $\mathrm{conv}(\mathcal{F})$, where $\lambda_f\geq 0$ and $\sum_{f\in \mathcal{F}}\lambda_f=1$. Then
\begin{align*}
    M_k(g) &= \max_{\mathbf{x}\in \{-1,1\}^n} \Biggl| \sum_{S\subseteq [n]:\vert S\vert=k} \widehat{g}(S) \mathbf{x}^S \Biggr| \\
    &=\max_{\mathbf{x}\in \{-1,1\}^n} \Biggl| \sum_{S\subseteq [n]:\vert S\vert=k} \Bigl( \sum_{f\in \mathcal{F}}\lambda_f\widehat{f}(S) \Bigr) \mathbf{x}^S \Biggr| \\
    &\leq \sum_{f\in \mathcal{F}} \lambda_f \max_{\mathbf{x}\in \{-1,1\}^n} \Biggl| \sum_{S\subseteq [n]:\vert S\vert=k} \widehat{f}(S) \mathbf{x}^S \Biggr| \\
    &\leq \max_{f\in \mathcal{F}} M_k(f).
\end{align*}
The reverse inequality immediately follows.
\end{proof}

\subsection{(Fractional) Pseudorandom Generators}
We now recall the (well-known) definition of a pseudorandom generator, as well as the generalization of a fractional pseudorandom generator as introduced by \cite{CHHL}:

\begin{definition}
Let $\mathcal{F}$ be a class of $n$-variate Boolean functions. Then a \emph{pseudorandom generator} (PRG) for $\mathcal{F}$ with error $\eps>0$ is a random variable $\mathbf{X}\in \{-1,1\}^n$ such that for all $f\in \mathcal{F}$,
\begin{equation*}
    \vert \mathbb{E}_{\mathbf{X}}[f(\mathbf{X})]-\mathbb{E}_{\mathbf{U}_n}[f(\mathbf{U}_n)]\vert \leq \eps,
\end{equation*}
where $\mathbf{U}_n$ is the uniform distribution on $\{-1,1\}^n$. If $\mathbf{X}=G(\mathbf{U}_s)$ for some explicit function $G:\{-1,1\}^s\to \{-1,1\}^n$, then $\mathbf{X}$ has \emph{seed length} $s$.
\end{definition}

\begin{definition}\label{def:fprg}
A \emph{fractional pseudorandom generator} (fractional PRG) for $\mathcal{F}$ with error $\eps>0$ is a random variable $\mathbf{X}\in [-1,1]^n$ such that for all $f\in \mathcal{F}$ (identifying $f$ with its multilinear expansion)
\begin{equation*}
    \vert \mathbb{E}_{\mathbf{X}}[f(\mathbf{X})]-f(\mathbf{0})\vert \leq \eps,
\end{equation*}
where the definition of seed length is the same. A fractional PRG is $p$-\emph{noticeable} if for each $i\in [n]$, $\mathbb{E}[\mathbf{X}_i^2]\geq p$.
\end{definition}

We now state the main results of \cite{CHHL} and \cite{CHLT} that show how to construct PRGs from suitably combining noticeable fractional PRGs. This is done by the following \emph{amplification theorem}, which roughly composes fractional random variables into a random walk inside the Boolean hypercube:

\begin{THM}
\label{thm:amplification}
Suppose $\mathcal{F}$ is class of $n$-variate Boolean functions that is closed under restrictions, and that $\mathbf{X}$ is a $p$-noticeable fractional PRG with error $\eps$ and seed length $s$. Then there exists an explicit PRG for $\mathcal{F}$ with seed length $O(s\log(n/\eps)/p)$ and error $O(\eps\log(n/\eps)/p)$.
\end{THM}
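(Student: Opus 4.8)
The plan is to prove this via the polarizing random walk of \cite{CHHL}, which converts a noticeable fractional PRG into a genuine one. Set $T = \Theta(\log(n/\eps)/p)$ and take $T$ mutually independent copies $\mathbf{X}^{(1)},\dots,\mathbf{X}^{(T)}$ of the given fractional PRG (this is where the $O(sT)$ seed bits go). Starting from $\mathbf{Y}^{(0)}=\mathbf{0}$, I would define $\mathbf{Y}^{(t)}$ from $\mathbf{Y}^{(t-1)}$ by a coordinatewise gadget that (i) keeps the walk inside $[-1,1]^n$, (ii) in each not-yet-frozen coordinate $i$ takes a step in direction $\mathbf{X}^{(t)}_i$ of size proportional to the remaining slack $1-|\mathbf{Y}^{(t-1)}_i|$, and (iii) additionally snaps each such coordinate to $\pm 1$, in a mean-preserving way, with probability $\Omega(p)$. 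The output is $\mathbf{Y}^{(T)}$, with any still-unfrozen coordinates rounded at the end.

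The error analysis rests on one identity. Conditioned on $\mathbf{Y}^{(t-1)}=\mathbf{y}$, the distribution of $\mathbf{Y}^{(t)}$, viewed as a function of the fresh sample $\mathbf{X}^{(t)}$, can be written as a mixture over random restrictions: there is a distribution $\bm\rho$ on restrictions of the coordinates (each coordinate $i$ independently fixed to $\mathrm{sgn}(y_i)$ with probability $|y_i|$, kept alive otherwise) such that $\mathbb{E}[f(\mathbf{Y}^{(t)}) \mid \mathbf{Y}^{(t-1)}] = \mathbb{E}_{\bm\rho}\bigl[\mathbb{E}_{\mathbf{X}^{(t)}}[f_{\bm\rho}(\mathbf{X}^{(t)})]\bigr]$, where $f_{\bm\rho}$ is $f$ with the fixed coordinates plugged in. This follows by expanding the multilinear form of $f$ coordinate by coordinate. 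Since $\mathcal{F}$ is closed under restrictions, every $f_{\bm\rho} \in \mathcal{F}$, so the fractional PRG property yields $\bigl|\mathbb{E}_{\mathbf{X}^{(t)}}[f_{\bm\rho}(\mathbf{X}^{(t)})] - f_{\bm\rho}(\mathbf{0})\bigr| \le \eps$; averaging over $\bm\rho$ and using $\mathbb{E}_{\bm\rho}[f_{\bm\rho}(\mathbf{0})] = f(\mathbf{y})$ gives $\bigl|\mathbb{E}[f(\mathbf{Y}^{(t)}) \mid \mathbf{Y}^{(t-1)}] - f(\mathbf{Y}^{(t-1)})\bigr| \le \eps$. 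Telescoping over $t = 1,\dots,T$ and recalling $f(\mathbf{0}) = \mathbb{E}_{\mathbf{U}_n}[f]$ then gives $\bigl|\mathbb{E}[f(\mathbf{Y}^{(T)})] - \mathbb{E}_{\mathbf{U}_n}[f]\bigr| \le T\eps = O(\eps\log(n/\eps)/p)$.

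For polarization, $p$-noticeability forces variance at least $p$ into each live coordinate at each step, which combined with the snapping rule in (iii) freezes each live coordinate with probability $\Omega(p)$ per step; hence after $T = O(\log(n/\eps)/p)$ steps every coordinate is frozen except with probability at most $\eps$ (union bound over the $n$ coordinates), and on that event $\mathbf{Y}^{(T)} \in \{-1,1\}^n$ is Boolean. Together with the previous paragraph this makes $\mathbf{Y}^{(T)}$ a bona fide PRG for $\mathcal{F}$ with error $O(\eps \log(n/\eps)/p)$ and seed length $O(sT) = O(s\log(n/\eps)/p)$, as claimed. I expect the real work to be in designing the gadget of step (iii): it must be simultaneously mean-preserving and ``restriction-like'' enough for the mixture identity above to hold exactly, so that the fractional PRG's guarantee, which a priori applies only to functions in $\mathcal{F}$ rather than to arbitrary multilinear polynomials, can be invoked at every step, while still freezing coordinates at rate $\Omega(p)$. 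The plain slack-scaled additive walk has the former property but polarizes far too slowly, and reconciling the two is exactly the content of \cite{CHHL}.
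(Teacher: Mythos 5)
The paper does not contain a proof of \Cref{thm:amplification}; it is quoted from \cite{CHHL} and used as a black box, so there is no internal proof here to compare against. Your reconstruction does capture the two load-bearing ideas of the CHHL argument: the slack-scaled random walk inside $[-1,1]^n$, and the interpretation of each conditional step as an average over random restrictions of $f$---the same mechanism this paper isolates separately in \Cref{lem:conv}---which, combined with closure of $\mathcal{F}$ under restrictions, lets the per-step fractional-PRG error telescope to $T\eps$.

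Where your sketch departs from \cite{CHHL} is step (iii) and the closing remark. CHHL do not graft on a separate mean-preserving ``snap to $\pm1$ with probability $\Omega(p)$'' device; as you yourself worry, such extra randomness would have to be engineered to preserve the restriction-mixture identity exactly, and it would also cost additional seed. They use just the plain update $\mathbf{Y}^{(t)}_i=\mathbf{Y}^{(t-1)}_i+\bigl(1-|\mathbf{Y}^{(t-1)}_i|\bigr)\mathbf{X}^{(t)}_i$. Your instinct that this ``polarizes far too slowly'' comes from looking at the first moment of the slack: $\mathbb{E}\bigl[1-|\mathbf{Y}^{(t)}_i|\bigr]$ decays only at a quadratic (rather than geometric) rate because the variance injected per step scales like $p\,(1-|\mathbf{Y}^{(t-1)}_i|)^2$, and you are right that this naive bound would suggest far more than $O(\log(n/\eps)/p)$ steps. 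But the fix in \cite{CHHL} is a better potential, not a new gadget: the \emph{log}-slack evolves essentially additively with per-step drift $-\Omega(p)$ (using $p$-noticeability), and a supermartingale concentration argument plus a union bound over coordinates, together with a cheap terminal rounding, then gives $T=O(\log(n/\eps)/p)$ and error $O(T\eps)$. So the ``real work'' you correctly anticipate lies in the polarization analysis of the plain walk, not in designing an auxiliary snapping mechanism.
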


Using this result, \cite{CHHL} proved the following theorem that exploits strong $L_1$ control of each Fourier level:

\begin{THM}
\label{thm:chhl}
Let $\mathcal{F}$ be any class of $n$-variate Boolean functions that is closed under restrictions. Suppose that $L_{1,k}(\mathcal{F})\leq b^k$ for some $b\geq 1$ and all $1\leq k\leq n$. Then for any $\eps>0$, there exists an explicit PRG for $\mathcal{F}$ with error $\eps$ and seed length $b^2\cdot\mathrm{polylog}(n/\eps)$.
\end{THM}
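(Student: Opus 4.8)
The plan is to prove this via \Cref{thm:amplification}: it is enough to construct, for an arbitrary target error $\eta>0$, an explicit $\Omega(b^{-2})$-noticeable fractional PRG for $\mathcal{F}$ with error $\eta$ and small seed length. Feeding this into \Cref{thm:amplification} with noticeability parameter $p=\Theta(b^{-2})$ yields a genuine PRG with seed length $O(s\log(n/\eta)/p)$ and error $O(\eta\log(n/\eta)/p)$, where $s$ is the fractional seed length. Since the final error carries a multiplicative $1/p=\Theta(b^2)$, I would take $\eta$ smaller than $\eps$ by a factor of roughly $b^2\log(n/\eps)$. We may assume $b^2\le n$ and $\eps\ge 2^{-n}$ (otherwise the identity map already has seed length $n\le b^2\,\mathrm{polylog}(n/\eps)$), so that $\log(1/\eta)=O(\log(n/\eps))$ and the resulting seed length is $b^2\,\mathrm{polylog}(n/\eps)$, as claimed.

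For the fractional PRG, set $\mathbf{X}=\tfrac{1}{2b}\,\mathbf{D}$, where $\mathbf{D}\in\{-1,1\}^n$ is drawn from an explicit $\delta$-almost $t$-wise independent distribution with $\delta=\eta$ and $t=\lceil\log_2(1/\eta)\rceil$; such a distribution has seed length $O(\log\log n+\log t+\log(1/\delta))=O(\log\log n+\log(1/\eta))$. Since $b\ge1$, every coordinate of $\mathbf{X}$ lies in $[-\tfrac12,\tfrac12]\subseteq[-1,1]$, and $\mathbb{E}[\mathbf{X}_i^2]=\tfrac{1}{4b^2}$, so $\mathbf{X}$ is $\tfrac{1}{4b^2}$-noticeable. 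To check that $\mathbf{X}$ fools the (multilinear extension of the) class, I would expand $f\in\mathcal{F}$ in its Fourier series, use $\mathbf{X}^S=(2b)^{-|S|}\mathbf{D}^S$, and split according to $|S|\le t$ versus $|S|>t$:
\begin{equation*}
\bigl|\mathbb{E}_{\mathbf{X}}[f(\mathbf{X})]-f(\mathbf{0})\bigr|
\le\sum_{k=1}^{n}(2b)^{-k}\sum_{|S|=k}|\hat f(S)|\cdot\bigl|\mathbb{E}[\mathbf{D}^S]\bigr|
\le\delta\sum_{k=1}^{t}(2b)^{-k}L_{1,k}(f)+\sum_{k=t+1}^{n}(2b)^{-k}L_{1,k}(f).
\end{equation*}
Plugging in $L_{1,k}(f)\le b^k$ turns each summand into a power of $\tfrac12$, so the first term is at most $\delta\sum_{k\ge1}2^{-k}=\delta$ and the second is at most $\sum_{k>t}2^{-k}=2^{-t}\le\delta$; hence the total error is at most $2\delta=2\eta$ (rescale $\eta$ by a constant). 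The point is that the scaling $\tfrac{1}{2b}$ is exactly tuned to convert the level-$k$ bound $b^k$ into a geometric series of ratio $\tfrac12$; any $\tfrac{1}{Cb}$ with a fixed $C>1$ works identically up to constants, and a fully $\delta$-biased $\mathbf{D}$ (with seed $O(\log(n/\delta))$) would also suffice and is conceptually cleaner, at the cost of a worse — but still acceptable — fractional seed length.

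I do not expect a genuine obstacle here once \Cref{thm:amplification} and the standard explicit small-bias / almost-independent generators are taken as black boxes; the only thing to be careful about is the parameter bookkeeping in the round trip, making sure the $\Theta(b^2)$ loss from $1/p$ in \Cref{thm:amplification} is absorbed (into $\mathrm{polylog}$ for the seed length, and into the choice of $\eta$ for the error), together with the trivial checks that $\mathbf{X}$ is supported in $[-1,1]^n$ and that $\mathcal{F}$ is closed under restrictions so that \Cref{thm:amplification} applies. Finally, note that this argument genuinely needs the full tail $L_{1,k}(\mathcal{F})\le b^k$ for \emph{all} $k$: levels above $t$ are controlled only because $(2b)^{-k}L_{1,k}(f)$ remains summable, which is precisely the restriction the present paper removes by replacing the high-order sum with the Taylor/Lagrange estimate of \Cref{thm:upperbound}.
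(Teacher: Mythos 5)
Your proposal is correct and is exactly the construction the paper (and \cite{CHHL}) describes: a fractional PRG obtained by scaling a $\delta$-almost $O(\log(1/\eps))$-wise independent distribution by $\Theta(1/b)$, with the low-order terms controlled by almost-independence, the high-order tail controlled by summing $(2b)^{-k}L_{1,k}(f)\le 2^{-k}$, and the result fed into \Cref{thm:amplification}. The parameter bookkeeping you describe (choosing $\eta=\Theta(\eps/(b^2\log(n/\eps)))$ and absorbing the $1/p$ factor into the $\mathrm{polylog}$) is the standard round trip and matches the cited result.
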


This is achieved by constructing a fractional PRG that is a scaled version of a nearly $\log(1/\eps)$-wise independent distribution. As we will be analyzing a similar fractional PRG, we defer the details to next section. To lessen the requisite assumptions on the Fourier spectrum, Chattopadhyay {et al.} \cite{CHLT} derandomize a construction of Raz and Tal \cite{RazTal} to prove the following result that requires only level-two control, albeit at a cost of exponentially worse dependence on the error $\eps$, and quadratically worse dependence on the level-two mass:

\begin{THM}
\label{thm:chlt}
Let $\mathcal{F}$ be any class of $n$-variate Boolean functions that is closed under restrictions. Suppose that $L_{1,2}(\mathcal{F})\leq b^2$ for some $b\geq 1$. Then for any $\eps>0$, there exists an explicit PRG for $\mathcal{F}$ with error $\eps$ and seed length $O((b^2/\eps)^{2+o(1)}\mathrm{polylog}(n))$.
\end{THM}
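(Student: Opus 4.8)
This theorem is due to \cite{CHLT}, so the plan is to recall the shape of their argument. The two ingredients are the amplification theorem (\Cref{thm:amplification}) and a noticeable fractional PRG obtained by derandomizing the Forrelation analysis of Raz and Tal \cite{RazTal}.

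First I would extract the analytic core of \cite{RazTal} in a form suited to fractional PRGs: if $\mathbf{g}\in\mathbb{R}^{n}$ is a centered Gaussian whose covariance is that of a scaled Forrelation pair --- small diagonal $\sigma^2$, off-diagonal structure coming from a normalized Hadamard matrix --- then any multilinear $g$ that is bounded by $1$ on $[-1,1]^n$ and has $L_{1,2}(g)\le b^2$ is fooled by the coordinatewise truncation of $\mathbf{g}$ into $[-1,1]^n$, with error controlled by $\sigma$ and $b$. The reason is essentially a discrete form of It\^o's lemma: writing the Forrelation Gaussian as a long sum of tiny independent increments, the first-order terms telescope away in expectation and only a weighted sum of the level-two coefficients $\hat g(\{i,j\})$ survives, which is exactly what $L_{1,2}$ bounds; the truncation contributes only lower-order Gaussian-tail corrections. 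This is an (inefficient) $\Theta(\sigma^2)$-noticeable fractional PRG.

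The second step is to make this explicit with a short seed. Here one replaces the true Gaussian $\mathbf{g}$ by a pseudorandom surrogate assembled from good error-correcting codes (equivalently, small-bias or bounded-independence distributions), so that all low-order moments actually used in the Raz--Tal estimate are reproduced up to negligible error while each coordinate remains a scaled $\pm1$ sum of variance $\approx\sigma^2$. Feeding the resulting $p$-noticeable, seed-$s$, error-$\eps'$ fractional PRG into \Cref{thm:amplification} produces a PRG with error $O(\eps'\log(n/\eps)/p)$ and seed $O(s\log(n/\eps)/p)$; choosing $\sigma$ (hence $p$) and $\eps'$ to drive the error down to $\eps$, and optimizing the code parameters, yields seed length $(b^2/\eps)^{2+o(1)}\,\mathrm{polylog}(n)$, with the $o(1)$ absorbing the code overhead.

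The main obstacle is precisely this quantitative Gaussian-simulation step. A bounded-independence distribution can only mimic a Gaussian faithfully when the per-coordinate scale $\sigma$ is tiny, so $p$ is forced to be polynomially small; moreover, sampling the (nearly) Gaussian coordinates to fractional error $\eps'$ costs seed polynomial rather than polylogarithmic in $1/\eps'$. The resulting $1/p$ blow-up in the number of amplification rounds, together with this $\mathrm{poly}(1/\eps')$ cost, is what produces both the exponentially worse dependence on $\eps$ and the quadratically worse dependence on $b$ compared with \Cref{thm:chhl}, and keeping the final exponent at $2+o(1)$ rather than a larger constant is the delicate part of the optimization.
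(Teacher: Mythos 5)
The paper does not prove \Cref{thm:chlt}: it is stated in the preliminaries purely as a recalled result of \cite{CHLT}, with the construction summarized informally as ``derandomize the Raz--Tal Forrelation argument via error-correcting-code-based simulation of small-covariance Gaussians, at the cost of polynomial dependence on $1/\eps$.'' Your sketch correctly identifies this and reconstructs the two ingredients (the Raz--Tal level-two analysis with its It\^o-lemma interpretation, and the amplification theorem \Cref{thm:amplification}), the role of bounded independence in faithfully simulating the Gaussian only at small per-coordinate scale $\sigma$, and the source of both the $\mathrm{poly}(1/\eps)$ and the quadratic-in-$b^2$ blow-ups; this matches the paper's own account, so your proposal is consistent with the intended proof, which lives in \cite{CHLT}.
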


\section{Low-Degree Polynomial Approximations on Subcubes}
\label{sec:lowdeg}

Throughout this section, we assume that $\mathcal{F}$ is a class of $n$-variate Boolean functions closed under restrictions.  As mentioned above, the main result from which we derive our improvements in constructing pseudorandom generators is essentially a statement about low-degree polynomial approximations on subcubes $[-c,c]^n$ for $c<1$.   We remark that this setting is equivalent to approximating \emph{noisy} versions $T_cf$ on $[-1,1]^n$, where $T_{\rho}$ is the $\rho$-noise operator.  This is because for any $\mathbf{y} \in [-c,c]^n$, we can write $\mathbf{y} = c\mathbf{x}$ for some $\mathbf{x} \in [-1,1]^n$ and
\[
f(\mathbf{y})
=f(c\mathbf{x})
= \sum_{S \subseteq [n]} \hat{f}(S) (c\mathbf{x})^S
= \sum_{S \subseteq [n]} c^{|S|} \hat{f}(S) \mathbf{x}^S
= T_cf(\mathbf{x}) .
\]
In general, given any $k\leq n$, $c\geq 0$, and any $f\in \mathcal{F}$, let $\eps_{c,k}(f)$ be defined by
\begin{equation}
\label{eq:fapprox}
    \eps_{c,k}(f) \triangleq \inf_{g: \mathrm{deg}(g)< k}\max_{\mathbf{x}\in [-c,c]^n}  \vert f(\mathbf{x})-g(\mathbf{x})\vert,
\end{equation}
and extend the definition to function classes by
\begin{equation*}
    \eps_{c,k}(\mathcal{F}) \triangleq \max_{f\in \mathcal{F}} \eps_{c,k}(f).
\end{equation*}
Now, given $\eps>0$, $k\leq n$, and the class $\mathcal{F}$, define $c_k(\mathcal{F},\eps)$ by
\begin{equation*}
    c_k(\eps,\mathcal{F})\triangleq \max\{ c\geq 0: \eps_{c,k}(\mathcal{F})\leq \eps\}.
\end{equation*}
In words, $c_k(\eps,\mathcal{F})$ measures how small a hypercube we must take to ensure that for every function in our class, there exists a degree-($k-1$) approximating polynomial that agrees with $f$ up to a uniform $\eps$ error on the subcube $[-c,c]^n$; by multilinearity, it actually suffices that this holds at the extreme points $\{-c,c\}^n$. Note that \cref{eq:fapprox} can be formulated as a linear program and its optimal solution is the best low-degree  $\ell_{\infty}$-approximation to $f$.

The main technical claim in this section is that we bound $c_k \left( \eps , \mathcal{F} \right)$ in terms of $M_k(\mathcal{F})$.  Specifically, we show that for any class $\mathcal{F}$ that is closed under restrictions, truncating the Fourier expansion of a function $f \in \mathcal{F}$ to its first $(k-1)$ levels serves as a good approximation to $f$ on a sufficiently small hypercube around the origin.

\begin{THM}
\label{thm:main}
        Let $f\in \mathcal{F}$ that is closed under restrictions. Then for all $c \in (0,1)$, we have     \begin{equation*}
        \max_{\mathbf{x} \in [-c , c]^n } \left \vert f_{\geq k} (\mathbf{x}) \right\vert \leq \left( \frac{c}{1 - c} \right)^k M_k(\mathcal{F}).
    \end{equation*}
    In particular, it follows that
    \begin{equation*}
        \eps_{c,k}(\mathcal{F})\leq \left( \frac{c}{1 - c} \right)^k M_k(\mathcal{F}).
    \end{equation*}
\end{THM}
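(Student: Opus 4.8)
The idea is to reduce the multivariate statement to a one-dimensional claim about univariate restrictions, then apply Taylor's theorem with Lagrange remainder. Fix $f \in \mathcal{F}$ and a point $\mathbf{x} \in [-c,c]^n$; by multilinearity $f_{\geq k}$ attains its maximum modulus over $[-c,c]^n$ at a vertex $\mathbf{z} \in \{-c,c\}^n$, so write $\mathbf{z} = c\mathbf{w}$ with $\mathbf{w} \in \{-1,1\}^n$. Consider the univariate function $h(t) \triangleq f(t\mathbf{w})$ for $t \in [-1,1]$. Since $f$ is multilinear of degree at most $n$, $h$ is a genuine polynomial in $t$, and crucially $h(t) = \sum_{j=0}^n f_j(\mathbf{w}) t^j$ because each degree-$j$ monomial $\mathbf{x}^S$ with $|S|=j$ becomes $\mathbf{w}^S t^j$. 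Thus the degree-$j$ coefficient of $h$ is exactly $f_j(\mathbf{w})$, whose absolute value is at most $M_j(f) \le M_j(\mathcal{F})$ — but this alone is not enough, since summing these bounds over $j \ge k$ with weights $c^j$ only gives something like $\sum_{j\ge k} c^j M_j(\mathcal{F})$, which we have no handle on without tail bounds on all the $M_j$.

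**The key step.** To get a bound purely in terms of $M_k(\mathcal{F})$, apply Taylor's theorem to $h$ at $0$ to order $k$: $h(c) = \sum_{j=0}^{k-1} \frac{h^{(j)}(0)}{j!} c^j + \frac{h^{(k)}(\xi)}{k!} c^k$ for some $\xi \in [0,c]$. The first sum is exactly $f_{<k}(\mathbf{z})$, so $f_{\geq k}(\mathbf{z}) = h(c) - f_{<k}(\mathbf{z}) = \frac{h^{(k)}(\xi)}{k!} c^k$. It remains to bound $|h^{(k)}(\xi)/k!|$ for $\xi \in [0,c]$. This is where the random restriction trick of \cite{CHHL}, invoked via \Cref{lemma:finallem}, enters: the $k$-th derivative of $h$ at the shifted point $\xi$ should be re-expressible, using multilinearity and the fact that $\mathcal{F}$ is closed under restrictions, as a convex combination (an average over a random restriction that independently fixes each coordinate to $\xi w_i$ or re-randomizes it) of $k$-th derivatives at $0$ of restricted functions, each lying in $\mathrm{conv}(\mathcal{F})$. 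Concretely, $\frac{h^{(k)}(\xi)}{k!} = \sum_{|S|=k} \hat{f}(S)\mathbf{w}^S$ evaluated in the $\xi$-biased sense, and one shows $|h^{(k)}(\xi)/k!| \le (1-\xi)^{-k}\cdot$ (something bounded by $M_k$), or more precisely bounds the biased level-$k$ sum by $M_k(\mathcal{F})/(1-\xi)^k \le M_k(\mathcal{F})/(1-c)^k$. Plugging in $\xi \le c$ and combining with the $c^k$ factor yields $|f_{\geq k}(\mathbf{z})| \le \left(\frac{c}{1-c}\right)^k M_k(\mathcal{F})$.

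**The main obstacle.** The delicate point is the passage from the $k$-th derivative of $h$ at a \emph{nonzero} point $\xi$ to quantities controlled by $M_k(\mathcal{F})$, which is defined purely at level $k$ with no biasing. Derivatives of the multilinear $f$ at a nonzero point pick up contributions from \emph{all} Fourier levels $\ge k$ (the biased level-$k$ coefficients), so naively one is back to needing tail bounds. The resolution — and the heart of the argument — is that one can write the biased point $\xi \mathbf{w}$ as the expectation of a random point that equals $\mathbf{w}$ on a random subset $T$ and is uniform on $\{-1,1\}$ off $T$, where each coordinate is kept with probability governed by $\xi$; the $k$-th directional derivative at such a point averages, over the choice of which coordinates survive, to $k$-th derivatives \emph{at the origin} of functions obtained from $f$ by restriction — and restrictions of $f$ stay in $\mathcal{F}$ by hypothesis, hence are controlled by $M_k(\mathcal{F})$. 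Getting the combinatorial bookkeeping of this random restriction exactly right (in particular extracting the clean $(1-\xi)^{-k}$ factor and confirming nonnegativity of the combination coefficients) is the step I expect to require the most care; everything else is routine once \Cref{lemma:finallem} is in hand. The final ``in particular'' about $\eps_{c,k}(\mathcal{F})$ is immediate, taking $g = f_{<k}$ as the degree-$(k-1)$ approximant, since then $f - g = f_{\geq k}$.
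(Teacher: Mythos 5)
Your proposal follows essentially the same route as the paper's proof: reduce to a univariate restriction, apply Taylor's theorem with Lagrange remainder to identify $f_{\geq k}$ with the degree-$k$ remainder term, and invoke the re-centering lemma (\Cref{lemma:finallem}) to bound the $k$-th derivative at a nonzero point by $M_k(\mathcal{F})$. Your sketch of the random-restriction mechanism is slightly off in its details (each coordinate should remain a live formal variable with probability $\lambda = 1-c$, with the $(1-c)^{-k}$ factor arising from the chain rule $\tilde g^{(k)}(0) = \lambda^k g^{(k)}(s)$ rather than from a ``$(1-\xi)^{-k}$-biased level-$k$ sum''), but since you explicitly defer the details to \Cref{lemma:finallem}, this does not constitute a gap in the deduction of \Cref{thm:main} itself.
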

From \Cref{thm:main}, one immediately obtains a lower bound on $c_k(\eps,\mathcal{F})$:
\begin{corollary}
\label{cor:ckbound}
For any class $\mathcal{F}$ that is closed under restrictions, and any $\eps>0$ and $k\leq n$, 
\begin{equation*}
    c_k(\eps,\mathcal{F}) = \Omega\left(\left(\frac{\eps}{M_k(\mathcal{F})}\right)^{1/k}\right)
\end{equation*}
\end{corollary}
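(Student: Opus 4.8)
The plan is to read this off directly from \Cref{thm:main}, which I take as given. First I would unwind the definition: $c_k(\eps,\mathcal{F})$ is by construction the largest $c \ge 0$ such that $\eps_{c,k}(\mathcal{F}) \le \eps$, i.e. such that every $f \in \mathcal{F}$ admits a degree-$(k-1)$ polynomial approximating it to uniform error $\eps$ on $[-c,c]^n$. But \Cref{thm:main} already hands us such an approximator — the Fourier truncation $f_{<k}$ — with error at most $\bigl(\tfrac{c}{1-c}\bigr)^k M_k(\mathcal{F})$. So the entire task reduces to exhibiting one value of $c$, at least as large as the claimed bound, for which $\bigl(\tfrac{c}{1-c}\bigr)^k M_k(\mathcal{F}) \le \eps$.

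Next I would carry out the (elementary) calculation. I restrict attention to $c \in (0, 1/2]$, where $1 - c \ge 1/2$ and hence $\tfrac{c}{1-c} \le 2c$; then \Cref{thm:main} gives $\eps_{c,k}(\mathcal{F}) \le (2c)^k M_k(\mathcal{F})$. Solving $(2c)^k M_k(\mathcal{F}) \le \eps$ for $c$ yields $c \le \tfrac12 \bigl(\eps/M_k(\mathcal{F})\bigr)^{1/k}$ (assuming $M_k(\mathcal{F}) > 0$; if $M_k(\mathcal{F}) = 0$ then $\eps_{c,k}(\mathcal{F}) \equiv 0$ and there is nothing to prove). Taking
\[
  c^\star \triangleq \min\Bigl\{ \tfrac12,\ \tfrac12 \bigl(\eps/M_k(\mathcal{F})\bigr)^{1/k} \Bigr\} \in (0, 1/2]
\]
we get $2c^\star \le (\eps/M_k(\mathcal{F}))^{1/k}$, hence $\eps_{c^\star,k}(\mathcal{F}) \le (2c^\star)^k M_k(\mathcal{F}) \le \eps$, and therefore $c_k(\eps,\mathcal{F}) \ge c^\star = \tfrac12 \min\{1, (\eps/M_k(\mathcal{F}))^{1/k}\} = \Omega\bigl((\eps/M_k(\mathcal{F}))^{1/k}\bigr)$. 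In the natural regime $\eps \le M_k(\mathcal{F})$ the minimum is attained by the second term, giving the clean bound $c_k(\eps,\mathcal{F}) \ge \tfrac12 (\eps/M_k(\mathcal{F}))^{1/k}$; otherwise one only obtains $c_k(\eps,\mathcal{F}) \ge \tfrac12$, which is why the statement is phrased with $\Omega(\cdot)$ rather than an equality.

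I do not expect any real obstacle in this corollary itself: all the content has been absorbed into \Cref{thm:main}, and what remains is unwinding the definition of $c_k(\eps,\mathcal{F})$ together with the one-line inequality $\tfrac{c}{1-c} \le 2c$ valid on $(0,1/2]$. For context, the step that genuinely carries the weight is the proof of \Cref{thm:main} — which I would attack by looking at $f_{\geq k}$ along one-dimensional restrictions $t \mapsto f(t\boldsymbol{\sigma})$ for $\boldsymbol{\sigma} \in \{-1,1\}^n$, applying Taylor's theorem with the Lagrange remainder at order $k$, and then rewriting that remainder as an expectation of the level-$k$ part of suitable random restrictions of $f$, so that multilinearity bounds it by $M_k(\mathcal{F})$. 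It is precisely that argument which manufactures the $\bigl(\tfrac{c}{1-c}\bigr)^k$ factor, and hence the exponent $1/k$ appearing in this corollary.
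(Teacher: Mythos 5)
Your proof is correct and follows exactly the paper's route: invoke \Cref{thm:main} with $f_{<k}$ as the degree-$(k-1)$ approximator, and choose $c$ small enough that $\bigl(\tfrac{c}{1-c}\bigr)^k M_k(\mathcal{F}) \le \eps$. The only difference is that you spell out the elementary estimate $\tfrac{c}{1-c}\le 2c$ for $c\le 1/2$ and the edge cases, which the paper leaves implicit behind the $\Omega(\cdot)$ notation.
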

\begin{proof}
Observe that by setting $c=\Omega\left(\left(\frac{\eps}{M_k(\mathcal{F})}\right)^{1/k}\right)$ in \Cref{thm:main}, the right side is bounded by $\eps$. Because $f_{\geq k}=f-f_{<k}$ and $f_{<k}$ has degree strictly less than $k$, it follows immediately from the definition of $c_k(\eps,\mathcal{F})$ that $c_k(\eps,\mathcal{F})$ is at least $c$.
\end{proof}
    We now return to the proof of \Cref{thm:main}. To prove this result, we require the following intermediate claims. The first simply shows that we may always bound the contribution of the level-$k$ part of any function in $\mathcal{F}$ by simply rescaling the argument: 
    \begin{lemma} 
    \label{lem:mult}
        Let $f \in \mathrm{conv}(\mathcal{F})$. 
        Then, for all $c\in ( 0,1 )$ and $\mathbf{x} \in [-c , c  ]^n$, we have 
        \begin{equation*}
            \vert f_k(\mathbf{x})\vert \leq c^k M_k(\mathcal{F}). 
        \end{equation*}
    \end{lemma}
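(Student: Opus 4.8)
The plan is to exploit the homogeneity of $f_k$: since every monomial $\mathbf{x}^S$ appearing in $f_k$ satisfies $|S| = k$, scaling the input by a factor $c$ scales the output by exactly $c^k$. Concretely, given $\mathbf{x} \in [-c,c]^n$ with $c \in (0,1)$, I would write $\mathbf{x} = c\mathbf{y}$ where $\mathbf{y} \triangleq \mathbf{x}/c \in [-1,1]^n$. Then
\[
    f_k(\mathbf{x}) = \sum_{S:\,|S|=k} \hat{f}(S)\,(c\mathbf{y})^S = c^k \sum_{S:\,|S|=k} \hat{f}(S)\,\mathbf{y}^S = c^k f_k(\mathbf{y}).
\]

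It remains to bound $|f_k(\mathbf{y})|$ for $\mathbf{y} \in [-1,1]^n$. By \Cref{def:mk} (applied to the multilinear polynomial $f$, which is well-defined since convex combinations of multilinear polynomials are multilinear), this is at most $M_k(f)$. Since $f \in \mathrm{conv}(\mathcal{F})$, the preceding lemma, which asserts $M_k(\mathcal{F}) = M_k(\mathrm{conv}(\mathcal{F}))$, gives $M_k(f) \le M_k(\mathrm{conv}(\mathcal{F})) = M_k(\mathcal{F})$. Combining the two bounds yields $|f_k(\mathbf{x})| = c^k |f_k(\mathbf{y})| \le c^k M_k(\mathcal{F})$, which is exactly the claim.

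There is essentially no obstacle here. The only points requiring a little care are that $M_k$ is defined through a maximum over $[-1,1]^n$ (equivalently $\{-1,1\}^n$ by multilinearity), so the substitution $\mathbf{y} = \mathbf{x}/c \in [-1,1]^n$ is precisely what allows the definition to be invoked; and that $f$ is assumed only to lie in $\mathrm{conv}(\mathcal{F})$ rather than $\mathcal{F}$, which is why the argument routes through the convexity identity instead of directly through $M_k(\mathcal{F}) = \max_{f \in \mathcal{F}} M_k(f)$. Note that the hypothesis that $\mathcal{F}$ is closed under restrictions plays no role in this lemma itself; it becomes relevant only when \Cref{lem:mult} is later applied to random restrictions of $f$ in the proof of \Cref{thm:main}.
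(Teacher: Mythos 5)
Your proof is correct and is essentially identical to the paper's: both rescale $\mathbf{x}$ to $c^{-1}\mathbf{x}\in[-1,1]^n$, use homogeneity of $f_k$ to extract the factor $c^k$, and then invoke $M_k(\mathrm{conv}(\mathcal{F}))=M_k(\mathcal{F})$. Your additional remark that closure under restrictions is not used in this lemma is accurate.
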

    
    \begin{proof}
    Observe that $c^{-1}\mathbf{x}\in [-1,1]^n$ by assumption, and by homogeneity of $f_{k}$ as a polynomial, we have
    \begin{equation*}
        \vert f_k(\mathbf{x})\vert=c^k \vert f_k(c^{-1}\mathbf{x})\vert\leq c^k M_k(\mathrm{conv}(\mathcal{F}))= c^k M_k(\mathcal{F}). \qedhere
    \end{equation*}
    \end{proof}

The next simple yet powerful claim shows that one can ``recenter'' functions in $\mathcal{F}$ and they remain in $\mathrm{conv}(\mathcal{F})$ (and therefore, enjoy the same Fourier bounds). This random restriction technique is a key tool in \cite{CHHL}.

    \begin{lemma} \label{lem:conv}
        Let $f \in \mathrm{conv}(\mathcal{F})$, $\mathbf{a} \in [-1,1]^n$ and $\mathbf{b} \in [0,1]$ such that $ \vert a_i\vert + b_i \leq 1$ for all $i\in [n]$. 
        Define $\tilde{f} $ by $\tilde{f}(\mathbf{x}) = f(\mathbf{a} + \mathbf{b} \circ \mathbf{x})$, where $\circ$ denotes componentwise multiplication. Then, $\tilde{f}  \in \mathrm{conv}(\mathcal{F})$. 
    \end{lemma}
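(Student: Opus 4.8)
The plan is to first reduce to the case of a single Boolean function $f\in\mathcal{F}$, and then realize $\tilde f$ explicitly as the expectation of a random restriction of $f$, which exhibits it as a convex combination of restrictions of $f$. For the reduction: if $f=\sum_j\lambda_j f_j$ with $f_j\in\mathcal{F}$, $\lambda_j\ge0$, $\sum_j\lambda_j=1$, then $\tilde f(\mathbf{x})=f(\mathbf{a}+\mathbf{b}\circ\mathbf{x})=\sum_j\lambda_j f_j(\mathbf{a}+\mathbf{b}\circ\mathbf{x})=\sum_j\lambda_j\widetilde{f_j}(\mathbf{x})$, so it suffices to show each $\widetilde{f_j}\in\mathrm{conv}(\mathcal{F})$, since a convex combination of elements of $\mathrm{conv}(\mathcal{F})$ again lies in $\mathrm{conv}(\mathcal{F})$. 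Hence we may assume $f\in\mathcal{F}$.

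Now I would define a random restriction $\rho$ of the coordinates in $[n]$, with the choices across distinct coordinates made independently: coordinate $i$ is fixed to $+1$ with probability $p_i^+\triangleq\frac{1-b_i+a_i}{2}$, fixed to $-1$ with probability $p_i^-\triangleq\frac{1-b_i-a_i}{2}$, and left free (kept as the variable $x_i$) with probability $b_i$. The hypotheses $b_i\in[0,1]$ and $\vert a_i\vert+b_i\le1$ are exactly what guarantee $p_i^+\ge0$, $p_i^-\ge0$, and $p_i^++p_i^-+b_i=1$, so this is a genuine probability distribution. For a fixed outcome $\rho$, let $g_\rho$ be the function obtained from $f$ by fixing the coordinates assigned $\pm1$ to those values and keeping the remaining coordinates free; since $\mathcal{F}$ is closed under restrictions, $g_\rho\in\mathcal{F}$.

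It then remains to check the identity $\tilde f(\mathbf{x})=\mathbb{E}_{\rho}[g_{\rho}(\mathbf{x})]$ for all $\mathbf{x}$, which completes the proof because the right-hand side is a finite (at most $3^n$-term) convex combination of functions in $\mathcal{F}$, hence lies in $\mathrm{conv}(\mathcal{F})$. To verify it, write $f$ in its multilinear form $f(\mathbf{x})=\sum_{S\subseteq[n]}\hat f(S)\mathbf{x}^S$; for each outcome $\rho$ we have $g_\rho(\mathbf{x})=\sum_S\hat f(S)\prod_{i\in S}y_i^{(\rho)}$, where $y_i^{(\rho)}=\rho_i$ if $i$ is fixed and $y_i^{(\rho)}=x_i$ if $i$ is free. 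Taking expectations and using independence of the coordinate choices (so the expectation of a product of $y_i^{(\rho)}$ over $i\in S$ factors),
\[
\mathbb{E}_{\rho}\bigl[g_{\rho}(\mathbf{x})\bigr]
=\sum_{S}\hat f(S)\prod_{i\in S}\mathbb{E}_{\rho}\bigl[y_i^{(\rho)}\bigr]
=\sum_{S}\hat f(S)\prod_{i\in S}\bigl(p_i^+-p_i^-+b_i x_i\bigr)
=\sum_{S}\hat f(S)\prod_{i\in S}\bigl(a_i+b_i x_i\bigr),
\]
which is precisely $f(\mathbf{a}+\mathbf{b}\circ\mathbf{x})=\tilde f(\mathbf{x})$.

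I do not expect a real obstacle: the only thing to get right is using a \emph{three-way} split at each coordinate (fix to $+1$, fix to $-1$, or keep free) rather than a two-way one, and observing that the constraint $\vert a_i\vert+b_i\le1$ is exactly what makes both fixing probabilities nonnegative; multilinearity of $f$ then does the rest by letting the expectation factor over coordinates. The mild bookkeeping point is ensuring that restrictions of $f$ are considered as functions on all $n$ variables (ignoring the fixed ones), so that $g_\rho$ and $\tilde f$ live in the same space and ``closed under restrictions'' applies in the intended sense.
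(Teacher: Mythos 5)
Your proof is correct and is essentially the paper's own argument: both realize $\tilde f$ as the expectation of a three-way random restriction (fix to $+1$, fix to $-1$, or leave free) chosen independently per coordinate so that $\mathbb{E}[y_i]=a_i+b_i x_i$, and then invoke multilinearity plus independence to factor the expectation. You make the probabilities $p_i^\pm=(1-b_i\pm a_i)/2$ explicit and spell out the routine reduction from $f\in\mathrm{conv}(\mathcal{F})$ to $f\in\mathcal{F}$, which the paper leaves implicit, but the route is the same.
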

    \begin{proof}
        Given $\mathbf{a}$ and $\mathbf{b}$, define a distribution $D_i$ on $Z_i= \{ -1,1 , x_i\} $ where $x_i$ is treated as formal variable, such that $\mathbb{E}_{y_i \sim D_i }[ y_i] = a_i + b_i x_i  $; note that this is possibly by the assumption that $\vert a_i\vert + b_i \leq 1$. 
        Let $D = \prod_i D_i  $ be the product distribution of the $D_i$. For any $\mathbf{z} \in \prod_i Z_i $, define $f_{\mathbf{z}}(\mathbf{x})$ as the function obtained by setting $x_i=z_i$ for each $i$; in particular, each variable gets set to $\pm 1$ or remains a formal variable.  By our assumption on the closure of $\mathcal{F}$, we clearly have $f_{\mathbf{z}}\in \mathcal{F}$ for any $\mathbf{z}$. By multilinearity and independence of the product distribution, we have $f(\mathbf{a} + \mathbf{b} \circ \mathbf{x}) = \mathbb{E}_{\mathbf{z} \sim D}[ f_{\mathbf{z}}(\mathbf{x})]  $. Thus $\tilde{f} \in \mathrm{conv}(\mathcal{F})$. 
    \end{proof}

As mentioned before, our approach will be to bound the higher-order terms of the Fourier expansion at the fractional points of the fractional PRG via the error term that arises in Taylor's theorem. Denote by $h^{(k)}$ the $k$-th derivative of any $C^k$ function $h:\mathbb{R}\to \mathbb{R}$. We then have the following claim:
    \begin{lemma} 
    \label{lem:derivative}
        Let $f:\mathbb{R}^n \to \mathbb{R}  $ be multilinear and let $\mathbf{x} \in \mathbb{R}^n $. 
        Define $g:\mathbb{R} \to \mathbb{R}$ by $g(t) = f(t \mathbf{x}) $. 
        Then, 
        \begin{equation*}
            g^{(k)}(0) = k! \cdot f_k ( \mathbf{x}). 
        \end{equation*}
    \end{lemma}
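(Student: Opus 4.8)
The plan is to prove \Cref{lem:derivative} by a direct computation using the multilinear (Fourier) expansion of $f$ together with the homogeneity of each level-$\ell$ part.

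First I would write $f(\mathbf{x}) = \sum_{S \subseteq [n]} \hat{f}(S) \mathbf{x}^S$ and substitute $t\mathbf{x}$ for the argument, so that
\begin{equation*}
    g(t) = f(t\mathbf{x}) = \sum_{S \subseteq [n]} \hat{f}(S)\, (t\mathbf{x})^S = \sum_{S \subseteq [n]} t^{|S|} \hat{f}(S)\, \mathbf{x}^S = \sum_{\ell = 0}^{n} t^{\ell}\, f_{\ell}(\mathbf{x}),
\end{equation*}
where I have grouped the monomials by their degree $|S| = \ell$ and used that $f_{\ell}(\mathbf{x}) = \sum_{|S| = \ell} \hat{f}(S)\mathbf{x}^S$ by \Cref{def:mk}. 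Thus $g$ is simply a univariate polynomial in $t$ of degree at most $n$ whose coefficient of $t^{\ell}$ is exactly $f_{\ell}(\mathbf{x})$, which is a fixed real number once $\mathbf{x}$ is fixed.

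The second step is to extract the $k$-th Taylor coefficient of this univariate polynomial. Differentiating $t^{\ell}$ exactly $k$ times gives $\frac{d^k}{dt^k} t^{\ell} = \ell(\ell-1)\cdots(\ell-k+1)\, t^{\ell - k}$, which vanishes at $t = 0$ unless $\ell = k$, in which case it equals $k!$. Applying this termwise to the finite sum for $g$ yields
\begin{equation*}
    g^{(k)}(0) = \sum_{\ell = 0}^{n} f_{\ell}(\mathbf{x}) \left. \frac{d^k}{dt^k} t^{\ell} \right|_{t=0} = k! \cdot f_k(\mathbf{x}),
\end{equation*}
which is the claimed identity.

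There is essentially no obstacle here: the only thing to be slightly careful about is that differentiation term-by-term is justified because $g$ is a polynomial (a finite sum), so all the smoothness hypotheses needed for Taylor's theorem are automatic, and the interchange of differentiation and summation is trivial. The lemma is really just the observation that the one-dimensional restriction $t \mapsto f(t\mathbf{x})$ repackages the graded pieces $f_{\ell}(\mathbf{x})$ as the coefficients of a univariate polynomial in $t$, so that reading off its Taylor expansion at $0$ recovers the $f_{\ell}(\mathbf{x})$ up to the standard factorial normalization.
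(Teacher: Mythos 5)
Your proof is correct and takes essentially the same approach as the paper's: both expand $g(t) = f(t\mathbf{x})$ via the Fourier expansion, differentiate term-by-term $k$ times, and evaluate at $t = 0$ to kill all terms of degree other than $k$. The only cosmetic difference is that you first group the sum by degree $\ell$ so that $g(t) = \sum_{\ell} t^{\ell} f_{\ell}(\mathbf{x})$, which slightly streamlines the bookkeeping.
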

    \begin{proof}
        From the definition, it follows that 
        \begin{equation*}
            g( t) = \sum_{S \subseteq [ n ] } t^{\vert S\vert} \hat{f}(S) \mathbf{x}^S.
         \end{equation*}
         Differentiating $g$ with respect to $t$, we get 
         \begin{equation*}
            g^{(k)}(t)=\sum_{S: \vert S\vert  \geq k} \Bigl(\, \prod_{i=0}^{k-1}( \vert S\vert-i) \Bigr) t^{ \vert S\vert -k} \hat{f}(S) \mathbf{x}^{S}. 
         \end{equation*}
         Setting $t = 0$ eliminates all of the monomials with $\vert S\vert > k $, giving us the required bound. 
    \end{proof}

The last intermediate result we require connects the function defined in the previous part with our assumed Fourier bounds:
    \begin{lemma}
    \label{lemma:finallem}
        Let $f \in \mathrm{conv}(\mathcal{F}) $, $c \in (0,1) $ and $\mathbf{x} \in [ -c , c]^n $. 
        Define $g$ as in \cref{lem:derivative}. 
        Then, 
        \begin{equation*}
            \max _{s \in[0,1]} \bigl| g^{(k)}(s) \bigr|
            \leq\left(\frac{c}{1-c}\right)^{k} \cdot k! \cdot M_{k}(\mathcal{F})
        \end{equation*}
    \end{lemma}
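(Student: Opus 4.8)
The plan is to reduce the bound on the $k$-th derivative $g^{(k)}(s)$ at an arbitrary point $s\in[0,1]$ to the already-understood bound on a $k$-th derivative \emph{at the origin}, by recentering $f$ at the point $s\mathbf{x}$ in the style of \Cref{lem:conv}. Fix $s\in[0,1]$ and set $\mathbf{a}\triangleq s\mathbf{x}$ and $b_i\triangleq 1-s\vert x_i\vert$ for each $i\in[n]$. Since $\vert x_i\vert\le c<1$ and $s\le 1$ we have $b_i\in[1-c,1]\subseteq[0,1]$ and $\vert a_i\vert+b_i = s\vert x_i\vert+(1-s\vert x_i\vert)=1$, so the hypotheses of \Cref{lem:conv} are satisfied and $\tilde{f}(\mathbf{y})\triangleq f(\mathbf{a}+\mathbf{b}\circ\mathbf{y})$ lies in $\mathrm{conv}(\mathcal{F})$; moreover $\tilde f$ is multilinear, being an affine substitution into the multilinear $f$.

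Next I would pick the fixed direction $\mathbf{z}$ with $z_i\triangleq x_i/(1-s\vert x_i\vert)=x_i/b_i$ (well-defined since $b_i\ge 1-c>0$), chosen precisely so that the $i$-th coordinate of $\mathbf{a}+\mathbf{b}\circ(u\mathbf{z})$ equals $a_i+b_i u z_i=sx_i+ux_i=(s+u)x_i$ for every scalar $u$. Hence $\tilde{f}(u\mathbf{z})=f((s+u)\mathbf{x})=g(s+u)$ as an identity in $u$, so differentiating $k$ times and evaluating at $u=0$ gives $g^{(k)}(s)=\frac{d^k}{du^k}\big\vert_{u=0}\tilde{f}(u\mathbf{z})$. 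Applying \Cref{lem:derivative} to the multilinear polynomial $\tilde{f}$ at the point $\mathbf{z}$, the right-hand side equals $k!\cdot\tilde{f}_k(\mathbf{z})$, so $g^{(k)}(s)=k!\cdot\tilde{f}_k(\mathbf{z})$.

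It then remains to bound $\vert\tilde{f}_k(\mathbf{z})\vert$. Since $s\le 1$ and $\vert x_i\vert\le c$ we have $1-s\vert x_i\vert\ge 1-c$, hence $\vert z_i\vert=\vert x_i\vert/(1-s\vert x_i\vert)\le c/(1-c)$ for every $i$, i.e. $\lVert\mathbf{z}\rVert_\infty\le c/(1-c)$. Because $\tilde{f}_k$ is homogeneous of degree $k$ and $\tilde{f}\in\mathrm{conv}(\mathcal{F})$ with $M_k(\mathrm{conv}(\mathcal{F}))=M_k(\mathcal{F})$, the rescaling argument underlying \Cref{lem:mult} — which uses only homogeneity and is insensitive to whether the $\ell_\infty$-radius exceeds $1$ — gives $\vert\tilde{f}_k(\mathbf{z})\vert\le\lVert\mathbf{z}\rVert_\infty^k\,M_k(\mathcal{F})\le(c/(1-c))^k M_k(\mathcal{F})$. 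Combining with the identity of the previous paragraph yields $\vert g^{(k)}(s)\vert\le(c/(1-c))^k\cdot k!\cdot M_k(\mathcal{F})$, and taking the maximum over $s\in[0,1]$ completes the proof.

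The only real content is the choice of the recentering pair $(\mathbf{a},\mathbf{b})$: one must simultaneously realize the translated argument $s\mathbf{x}+u\mathbf{x}$ as $\mathbf{a}+\mathbf{b}\circ(u\mathbf{z})$ for a \emph{single fixed} direction $\mathbf{z}$ (so that \Cref{lem:derivative} applies verbatim), keep $\vert a_i\vert+b_i\le 1$ so that \Cref{lem:conv} is available, and keep $\lVert\mathbf{z}\rVert_\infty$ as small as possible. The tension among these three requirements is resolved exactly by taking $b_i=1-\vert a_i\vert$, and this is what produces the $c/(1-c)$ factor in the bound; the rest of the argument is routine bookkeeping with homogeneous multilinear polynomials.
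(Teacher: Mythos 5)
Your proof is correct and follows essentially the same route as the paper: recenter at $s\mathbf{x}$ via \Cref{lem:conv}, apply \Cref{lem:derivative} to reduce $g^{(k)}(s)$ to $k!$ times a level-$k$ evaluation, and then bound that via homogeneity and $M_k(\mathcal F)=M_k(\mathrm{conv}(\mathcal F))$. The only cosmetic difference is that the paper takes the constant scaling $\mathbf b=(1-c,\dots,1-c)$ and picks up the $(1-c)^{-k}$ factor from the chain rule applied to $\tilde g(t)=g(s+t(1-c))$, whereas you take the coordinate-dependent $b_i=1-s\vert x_i\vert$ so that the identity $\tilde f(u\mathbf z)=g(s+u)$ is exact and the same factor appears instead through $\lVert\mathbf z\rVert_\infty\le c/(1-c)$; you also correctly note that the rescaling in \Cref{lem:mult} uses only homogeneity and hence works even when the $\ell_\infty$-radius exceeds $1$.
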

    \begin{proof}
        Fix $s \in [ 0,1]$ and let $\lambda=1-c \in [0,1]$.  Define the auxiliary function $\tilde{f}(\mathbf{y})=f(s\mathbf{x}+\lambda\mathbf{y})$. Writing $\mathbf{a}=s\mathbf{x}$ and $\mathbf{b}=(\lambda,\ldots,\lambda)$, we clearly have $s |x_i| + \lambda \leq 1$, so we may apply \cref{lem:conv} to see that $\tilde{f}\in \mathrm{conv}(\mathcal{F})$.
        Now writing $\tilde{g}(t) = \tilde{f}(t\mathbf{x})=f(s\mathbf{x}+\lambda t\mathbf{x})$, we also have $\tilde{g}(t)=g(s+t\lambda)$. By the chain rule, differentiating both sides $k$ times and then setting $t=0$, we have
        \begin{equation*}
            \lambda^k g^{(k)}(s)=\tilde{g}^{(k)}(0).
        \end{equation*}
        On the other hand, by \cref{lem:derivative}, we have $\tilde{g}^{(k)}(0)=k!\cdot \tilde{f}_k(\mathbf{x})$, and as $\tilde{f}\in \mathrm{conv}(\mathcal{F})$ by \cref{lem:conv}, we conclude using \cref{lem:mult} that
        \[
            \bigl|g^{(k)}(s)\bigr|=\left|\frac{\tilde{g}^{(k)}(0)}{\lambda^k}\right|\leq \left(\frac{c}{1-c}\right)^k \cdot k!\cdot M_k(\mathcal{F}) . \qedhere
        \]
    \end{proof}
    With these intermediate claims taken care of, we may now put them together to obtain \Cref{thm:main}.
    \begin{proof}[Proof of \cref{thm:main}]
    The second statement follows immediately from the first by setting $g=f_{<k}$ for any given $f$, and noticing that $f-g=f_{\geq k}$. Therefore, we focus on the first statement.
    
Let $f\in \mathcal{F}, \mathbf{x}\in [-c,c]^n$ and define $g(t)=f(t\mathbf{x})$. Then, by Taylor expanding $g$ about $t=0$ and evaluating $g$ at $t=1$, we have

\begin{align} \label{eq:decomp-g}
    g(1)=\sum_{i<k} \frac{g^{(i)}(0)}{i!} +R_k, 
\end{align}
where $R_k$ is the error term and is given in Lagrange form by
\begin{equation*}
    R_k = \frac{g^{(k)}(s)}{k!}
\end{equation*}
for some $s\in (0,1)$. By \cref{lem:derivative}, we easily see that the first term in the right hand side of \cref{eq:decomp-g} is precisely $f_{<k}(\mathbf{x})$, and as $g(1)=f(\mathbf{x})$, we clearly then must have $R_k = f_{\geq k}(\mathbf{x})$.  Therefore, by \cref{lemma:finallem}, we obtain
\begin{equation*}
    \vert f_{\geq k}(\mathbf{x})\vert = \left\vert\frac{g^{(k)}(s)}{k!}\right\vert \leq \left(\frac{c}{1-c}\right)^k M_k(\mathcal{F}),
\end{equation*}
as desired.
\end{proof}

\subsection{Lower Bounds via Chebyshev Polynomials}

In this subsection, we show that our bounds on the uniform error of any low-degree polynomial approximator are essentially tight for a reasonable range of $c<1$. Recall that \Cref{thm:main} shows that the low-degree Fourier expansion is an excellent approximator to the original function for $c$ small enough; we now show that this bound cannot be significantly improved for a reasonable range of $c$ using \emph{any} approximator. Our main result of this section is the following converse:

\begin{THM}
\label{thm:cheby}
Let $\mathcal{F}$ be any class of $n$-variate multilinear functions that are closed under restrictions.  Then for any $c\leq \min\left(\frac{1}{3}, 3^{-k} \frac{M_k(\mathcal{F})}{M_{k+1}(\mathcal{F})}\right)$, we have
\begin{equation*}
    \eps_{c,k}(\mathcal{F}) \geq \left(\frac{c}{2}\right)^k M_k(\mathcal{F}).
\end{equation*}
\end{THM}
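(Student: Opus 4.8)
The plan is to reduce the $n$-dimensional approximation problem to a one-dimensional one along a worst-case direction for $M_k(\mathcal{F})$, and then to invoke classical Chebyshev extremal theory, using \Cref{thm:main} to control the high-degree tail that the reduction leaves behind.

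First, pick $f\in\mathcal{F}$ and $\mathbf{z}\in\{-1,1\}^n$ achieving $|f_k(\mathbf{z})|=M_k(\mathcal{F})$, and set $g(t):=f(t\mathbf{z})$, so $g(t)=\sum_{i=0}^{n}f_i(\mathbf{z})\,t^i$ has degree-$k$ coefficient of absolute value $M_k(\mathcal{F})$. Since $t\mathbf{z}\in[-c,c]^n$ for every $t\in[-c,c]$, restricting any $n$-variate degree-$<k$ approximator of $f$ to this line gives a univariate degree-$<k$ approximator of $g$, so
\[
\eps_{c,k}(\mathcal{F})\ \geq\ \eps_{c,k}(f)\ \geq\ \inf_{p:\deg p<k}\ \max_{t\in[-c,c]}\,|g(t)-p(t)|.
\]
To bound the right-hand side from below, introduce the Chebyshev extremal functional on $C[-c,c]$: with $y_j:=\cos(j\pi/k)$ for $0\leq j\leq k$, set
\[
\Lambda(h)\ :=\ \frac{1}{2k}h(c)\ +\ \frac1k\sum_{j=1}^{k-1}(-1)^j h(cy_j)\ +\ \frac{(-1)^k}{2k}h(-c).
\]
Standard properties of the Chebyshev polynomial $T_k$ give that $\Lambda$ has norm $1$ on $(C[-c,c],\|\cdot\|_\infty)$, annihilates every polynomial of degree $<k$, and satisfies $\Lambda(t^k)=c^k/2^{k-1}$. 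Hence $\max_{[-c,c]}|g-p|\geq|\Lambda(g-p)|=|\Lambda(g)|$ for every such $p$.

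Now split $g = g_{<k} + f_k(\mathbf{z})\,t^k + \big(t\mapsto f_{\geq k+1}(t\mathbf{z})\big)$. The first term is killed by $\Lambda$, so by the triangle inequality
\[
|\Lambda(g)|\ \geq\ |f_k(\mathbf{z})|\cdot\Lambda(t^k)\ -\ \Bigl|\Lambda\bigl(t\mapsto f_{\geq k+1}(t\mathbf{z})\bigr)\Bigr|\ =\ \frac{c^k}{2^{k-1}}M_k(\mathcal{F})\ -\ \Bigl|\Lambda\bigl(t\mapsto f_{\geq k+1}(t\mathbf{z})\bigr)\Bigr|.
\]
The tail term is exactly where \Cref{thm:main} enters: since $\Lambda$ has norm $1$ and $t\mathbf{z}\in[-c,c]^n$, applying \Cref{thm:main} at level $k+1$ to $f\in\mathcal{F}$ gives
\[
\Bigl|\Lambda\bigl(t\mapsto f_{\geq k+1}(t\mathbf{z})\bigr)\Bigr|\ \leq\ \max_{t\in[-c,c]}|f_{\geq k+1}(t\mathbf{z})|\ \leq\ \max_{\mathbf{x}\in[-c,c]^n}|f_{\geq k+1}(\mathbf{x})|\ \leq\ \Bigl(\frac{c}{1-c}\Bigr)^{k+1}M_{k+1}(\mathcal{F}).
\]
Plugging in $c\leq \tfrac13$ (so $\tfrac{1}{1-c}\leq\tfrac32$) and $c\leq 3^{-k}M_k(\mathcal{F})/M_{k+1}(\mathcal{F})$, a short computation bounds the tail term by a constant fraction of $\tfrac{c^k}{2^{k-1}}M_k(\mathcal{F})$, leaving $|\Lambda(g)|\geq (c/2)^kM_k(\mathcal{F})$ and hence the claimed inequality (one may need to shrink the absolute constant in the hypothesis on $c$ by a factor of roughly $2$ to make the arithmetic tight).

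The only genuine obstacle is the tail term in the third step: $g$ may have degree as large as $n$, so a priori its high-degree coefficients could cancel against the level-$k$ contribution inside $\Lambda$. The content of \Cref{thm:main} --- the main result of this section --- is precisely that $f_{\geq k+1}$ is uniformly negligible on $[-c,c]^n$ once $c\lesssim 3^{-k}M_k(\mathcal{F})/M_{k+1}(\mathcal{F})$, so this cancellation cannot meaningfully erode the $\Omega\bigl(c^kM_k(\mathcal{F})\bigr)$ contribution of level $k$; everything else is routine manipulation of Chebyshev constants.
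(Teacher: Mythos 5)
Your proof is correct and takes essentially the same route as the paper: restrict $f$ to the line through the $M_k$-maximizing direction, extract the degree-$k$ coefficient via Chebyshev extremality, control the $f_{\geq k+1}$ tail via \Cref{thm:main} at level $k+1$, and finish with the same arithmetic. The one cosmetic difference is that you phrase the Chebyshev step via the norm-one extremal functional $\Lambda$ that annihilates $P_{k-1}$, whereas the paper splits $g=g_{\leq k}+g_{\geq k+1}$ and invokes the primal form of the same fact (the $2^{1-k}$ lower bound on the sup-norm of a monic degree-$k$ polynomial on $[-1,1]$, cited as \Cref{fact:cheb3}); these are dual statements of the same extremal result and yield identical bounds. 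The constant slack you flag at the end is real but is present in the paper's proof as well (both arguments actually require $c\leq \tfrac{2}{3}\cdot 3^{-k}M_k/M_{k+1}$ rather than $3^{-k}M_k/M_{k+1}$), so this is not a defect of your approach.
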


Recall that on the interval $[-1,1]$, the Chebyshev polynomials give the minimum $\ell_{\infty}$ norm among all polynomials with same leading coefficient in magnitude:

\begin{fact}[Theorem 1.5.4 of \cite{rahman2002analytic}]
\label{fact:cheb3}
If a polynomial $f:\mathbb{R}\to \mathbb{R}$ is monic of degree $n$, then $\max_{x\in [-1,1]} \vert f(x)\vert\geq 2^{-n+1}$, with equality if and only if $f=T_n$, the normalized $n$-th Chebyshev polynomial.
\end{fact}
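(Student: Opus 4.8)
The plan is to reduce this multivariate lower bound to a one‑dimensional statement about approximating $t^k$ and then invoke the extremal property of Chebyshev polynomials (\Cref{fact:cheb3}). Since there are only finitely many Boolean functions on $n$ bits, pick $f\in\mathcal{F}$ with $M_k(f)=M_k(\mathcal{F})$ and $\mathbf{z}\in\{-1,1\}^n$ with $\lvert f_k(\mathbf{z})\rvert=M_k(f)$; we may assume $M_k(\mathcal{F})>0$, as otherwise the claim is vacuous. Fix an arbitrary polynomial $p$ with $\deg p<k$ and restrict everything to the line through $\mathbf{z}$: set $g(t)\triangleq f(t\mathbf{z})$ and $q(t)\triangleq p(t\mathbf{z})$, univariate polynomials with $\deg q<k$. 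Writing $f=f_{<k}+f_k+f_{\geq k+1}$ and defining $h(t)=f_{<k}(t\mathbf{z})-q(t)+f_k(\mathbf{z})\,t^k$, \cref{lem:derivative} certifies that the $t^k$‑coefficient of $g$ is exactly $f_k(\mathbf{z})$, so $h$ is a polynomial of degree exactly $k$ with leading coefficient $f_k(\mathbf{z})$, and
\[
 g(t)-q(t)= h(t) + f_{\geq k+1}(t\mathbf{z}).
\]

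The proof then rests on two estimates on $[-c,c]$. First, a lower bound on the main term: substituting $t=cs$ makes $h(cs)/(f_k(\mathbf{z})c^k)$ a monic degree‑$k$ polynomial in $s$, so \Cref{fact:cheb3} gives $\max_{t\in[-c,c]}\lvert h(t)\rvert\geq 2^{1-k}c^k\lvert f_k(\mathbf{z})\rvert=2\,(c/2)^k M_k(\mathcal{F})$. Second, an upper bound on the tail: for $t\in[-c,c]$ the point $t\mathbf{z}$ lies in $[-c,c]^n$ since $\mathbf{z}$ is a sign vector, so \Cref{thm:main} applied at level $k+1$ yields $\lvert f_{\geq k+1}(t\mathbf{z})\rvert\leq\bigl(c/(1-c)\bigr)^{k+1}M_{k+1}(\mathcal{F})$. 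Combining these by the triangle inequality, and recalling that $p$ was arbitrary,
\[
 \eps_{c,k}(\mathcal{F})\;\geq\;\max_{t\in[-c,c]}\lvert g(t)-q(t)\rvert\;\geq\;2\Bigl(\tfrac c2\Bigr)^k M_k(\mathcal{F})-\Bigl(\tfrac{c}{1-c}\Bigr)^{k+1}M_{k+1}(\mathcal{F}).
\]
For $c\leq\tfrac13$ one has $c/(1-c)\leq\tfrac32 c$, and then the hypothesis $c\leq 3^{-k}M_k(\mathcal{F})/M_{k+1}(\mathcal{F})$ makes the subtracted tail a small constant multiple of $(c/2)^kM_k(\mathcal{F})$; collecting constants gives $\eps_{c,k}(\mathcal{F})\geq(c/2)^kM_k(\mathcal{F})$.

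The conceptual heart is the reduction in the first paragraph: restricting to the extremal direction $\mathbf{z}$ converts ``$f$ is hard to approximate by degree‑$(<k)$ polynomials on $[-c,c]^n$'' into ``$t^k$ is hard to approximate by degree‑$(<k)$ polynomials on $[-c,c]$, scaled by $\lvert f_k(\mathbf{z})\rvert=M_k(\mathcal{F})$'' — and this loses nothing, because a degree‑$(<k)$ multivariate approximator restricts to a degree‑$(<k)$ univariate one while \cref{lem:derivative} guarantees the $t^k$‑coefficient survives the restriction intact. The only genuinely delicate point, and the reason for the stated range of $c$, is that the restricted polynomial $g$ inherits the entire (possibly long) Fourier tail of $f$, which must be shown negligible against the degree‑$k$ contribution on a small enough subcube; this is precisely where the companion upper bound \Cref{thm:main} enters, and pinning down exactly how small $c$ must be — matching $\bigl(c/(1-c)\bigr)^{k+1}M_{k+1}(\mathcal{F})$ against $(c/2)^kM_k(\mathcal{F})$ — is the main, but routine, computation. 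Comparing the resulting $(c/2)^k$ with the $(c/(1-c))^k$ of \Cref{thm:main} shows that for $c$ bounded away from $1$ the two bounds agree up to a constant factor per level, which is the precise sense in which the low‑degree Fourier truncation is an essentially optimal uniform approximator on small subcubes.
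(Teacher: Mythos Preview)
Your proposal does not prove the stated \Cref{fact:cheb3} at all; that fact is a classical extremal property of Chebyshev polynomials, cited from a textbook and not proved in the paper. What you have written is instead a proof of \Cref{thm:cheby}, the lower bound $\eps_{c,k}(\mathcal{F})\geq (c/2)^k M_k(\mathcal{F})$, which \emph{uses} \Cref{fact:cheb3} as an ingredient.

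Read as a proof of \Cref{thm:cheby}, your argument is correct and follows essentially the same route as the paper: pick $(f,\mathbf{z})$ attaining $M_k(\mathcal{F})$, restrict both $f$ and an arbitrary low-degree approximator to the line $t\mapsto t\mathbf{z}$ (the paper writes this as $t\mapsto tc\mathbf{x}$ over $t\in[-1,1]$, you as $t\in[-c,c]$, which is the same reparametrization), apply \Cref{fact:cheb3} to lower-bound the degree-$\le k$ part, apply \Cref{thm:main} at level $k+1$ to upper-bound the tail, and subtract. The final constant-chasing is equally informal in both versions.
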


\begin{proof}[Proof of \Cref{thm:cheby}]
Let $(f,\mathbf{x})$ attain the maximum in the definition of $M_k(\mathcal{F})$, namely
\begin{equation*}
    M_k(\mathcal{F}) = \left\vert \sum_{S\subseteq [n]:\vert S\vert=k} \widehat{f}(S)\mathbf{x}^S\right\vert.
\end{equation*}
First, note that the claim is trivial if every function in $\mathcal{F}$ is of degree at most $k$, because then $f_{\geq k}$ is a homogeneous polynomial of degree $k$ and this lower bound is trivial. Under this assumption, $M_{k+1}(\mathcal{F})>0$. Fix $c\in (0,1)$ and let $p:[-1,1]^n\to \mathbf{R}$ be any multilinear polynomial of degree strictly less than $k$. Define the univariate function $g:[-1,1]\to \mathbb{R}$ by
\begin{equation*}
    g(t) = f(tc\mathbf{x})- p(tc\mathbf{x}).
\end{equation*}
By taking the Fourier expansion of $f$, it is easy to see that the coefficient of $t^{\ell}$ for $\ell\geq k$ is precisely
\begin{equation*}
    c^{\ell}\sum_{S\subseteq [n]:\vert S\vert=\ell} \widehat{f}(S)\mathbf{x}^S,
\end{equation*}
so that the coefficient of $t^k$ is equal to $c^k M_k(\mathcal{F})$ in magnitude.
We then have
\begin{align*}
    \sup_{\mathbf{z}\in [-c,c]^n} \vert f(\mathbf{z})-p(\mathbf{z})\vert&\geq \max_{\mathbf{z}\in [-c\mathbf{x},c\mathbf{x}]} \vert f(\mathbf{z})-p(\mathbf{z})\vert\\
    &=\sup_{t\in [-1,1]} \vert g(t)\vert\\
    &\geq \sup_{t\in [-1,1]} \vert g_{\leq k}(t)\vert - \sup_{t\in [-1,1]} \vert g_{\geq k+1}(t)\vert.
\end{align*}
By \Cref{fact:cheb3}, the first term is at least $c^kM_k(\mathcal{F})/2^{k-1}$. On the other hand, the second term can be bounded using \Cref{thm:main} by
\begin{equation*}
    \sup_{t\in [-1,1]} \vert g_{\geq k+1}(t)\vert\leq \left(\frac{c}{1-c}\right)^{k+1}M_{k+1}(\mathcal{F}).
\end{equation*}
Therefore, we obtain
\begin{equation*}
    \sup_{\mathbf{z}\in [-c,c]^n} \vert f(\mathbf{z})-p(\mathbf{z})\vert\geq 2\left(\frac{c}{2}\right)^kM_k(\mathcal{F})-\left(\frac{c}{1-c}\right)^{k+1}M_{k+1}(\mathcal{F}).
\end{equation*}
It is straightforward to verify that for $c\leq \min\left(1/3, 3^{-k}  \frac{M_k(\mathcal{F})}{M_{k+1}(\mathcal{F})}\right)$, the second term is bounded by half of the first. Because $p$ was an arbitrary low-degree multilinear polynomial, the claim follows.
\end{proof}

    \section{From Polynomial Approximations to PRGs}
    \label{sec:patoprg}
\subsection{From Polynomial Approximations to Fractional PRGs}
From \Cref{thm:main}, we now show how the construction of fractional PRGs from level-$k$ bounds reduces to efficient polynomial approximation on ``large'' subcubes.
    \begin{THM}
    \label{thm:mkfprg}
        Let $\mathcal{F}$ be closed under restrictions. Then there exists a fractional PRG for $\mathcal{F}$ with error $\eps$ and seed length $O(k\log n)$ that is $\left(c_k(\eps/2,\mathcal{F})\right)^2$-noticeable. In particular, if $M_k(\mathcal{F})=b^k$, there exists such a fractional PRG that is $\Omega\left(\frac{\eps^{2/k}}{b^2}\right)$-noticeable with seed length $O(k\log n)$.
\end{THM}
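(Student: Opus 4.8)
The plan is to realize the fractional PRG as a \emph{scaled $(k-1)$-wise independent distribution}, with the scaling factor taken to be exactly $c := c_k(\eps/2,\mathcal{F})$ (and any explicit value of comparable size in the quantitative ``in particular'' clause). Concretely, let $\mathbf{Y}$ be drawn from an explicit $(k-1)$-wise independent distribution over $\{-1,1\}^n$, which is samplable with seed length $O(k\log n)$, and set $\mathbf{X} := c\,\mathbf{Y}$. Then $\mathbf{X}$ is supported on $[-c,c]^n \subseteq [-1,1]^n$, and $\mathbb{E}[\mathbf{X}_i^2] = c^2\,\mathbb{E}[\mathbf{Y}_i^2] = c^2$ for every $i$, so $\mathbf{X}$ is automatically $c^2$-noticeable with seed length $O(k\log n)$. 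It remains to bound the error.

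Fix $f \in \mathcal{F}$ and let $g$ be a multilinear polynomial of degree strictly less than $k$ attaining the infimum defining $\eps_{c,k}(f)$; one may restrict this infimum to multilinear $g$ without loss, since the rescaled multilinearization of any degree-$(<k)$ approximator agrees with it on the vertices $\{-c,c\}^n$ and a multilinear function attains its sup-norm over $[-c,c]^n$ at a vertex, exactly as in the remark following \cref{eq:fapprox}. By the choice $c = c_k(\eps/2,\mathcal{F})$ we then have $\max_{\mathbf{z}\in[-c,c]^n}\lvert f(\mathbf{z})-g(\mathbf{z})\rvert = \eps_{c,k}(f) \le \eps_{c,k}(\mathcal{F}) \le \eps/2$. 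The key point is that $\mathbb{E}_{\mathbf{X}}[g(\mathbf{X})] = g(\mathbf{0})$: writing $g(\mathbf{X}) = \sum_{S\subseteq[n]:\,\lvert S\rvert<k} \hat g(S)\,\mathbf{X}^S$ and using that, by $(k-1)$-wise independence, $\mathbb{E}[\mathbf{X}^S] = c^{\lvert S\rvert}\prod_{i\in S}\mathbb{E}[\mathbf{Y}_i] = 0$ for every nonempty $S$ with $\lvert S\rvert \le k-1$, only the $S = \emptyset$ term survives. Combining this with the triangle inequality and the fact that both $\mathbf{X}$ and $\mathbf{0}$ lie in $[-c,c]^n$,
\[
\bigl\lvert \mathbb{E}_{\mathbf{X}}[f(\mathbf{X})] - f(\mathbf{0}) \bigr\rvert
= \bigl\lvert \mathbb{E}_{\mathbf{X}}\bigl[(f-g)(\mathbf{X})\bigr] - (f-g)(\mathbf{0}) \bigr\rvert
\le 2\max_{\mathbf{z}\in[-c,c]^n}\bigl\lvert (f-g)(\mathbf{z})\bigr\rvert \le \eps .
\]
Since $f \in \mathcal{F}$ was arbitrary, $\mathbf{X}$ is a fractional PRG for $\mathcal{F}$ with error $\eps$, establishing the first assertion.

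For the ``in particular'' clause, assume $M_k(\mathcal{F}) = b^k$. By \cref{cor:ckbound} — itself an immediate consequence of \cref{thm:main} — we have $c_k(\eps/2,\mathcal{F}) = \Omega\bigl((\eps/(2b^k))^{1/k}\bigr) = \Omega(\eps^{1/k}/b)$, using $2^{1/k}\le 2$. Squaring the noticeability bound from the first part yields a fractional PRG that is $\Omega(\eps^{2/k}/b^2)$-noticeable with seed length $O(k\log n)$; to make the construction fully explicit one simply takes the scaling factor to be any explicit $c = \Theta(\eps^{1/k}/b)$ small enough that $\bigl(c/(1-c)\bigr)^k b^k \le \eps/2$, which by \cref{thm:main} guarantees $\eps_{c,k}(\mathcal{F}) \le \eps/2$, as used above.

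The heart of the matter is really \cref{thm:main}, which is invoked here as a black box; the present argument is otherwise routine bookkeeping. The one genuine subtlety is that we must use a \emph{truly} $(k-1)$-wise independent distribution rather than a cheaper $\delta$-almost one, so that the low-degree approximator's contribution cancels \emph{exactly}: we have no control on the individual low-order Fourier coefficients $\lvert\hat f(S)\rvert$ (equivalently on $L_{1,i}(\mathcal{F})$ for $i<k$), so even a tiny nonzero bias $\mathbb{E}[\mathbf{X}^S]$ could be fatal. This is precisely why the seed length is $O(k\log n)$ here, as opposed to the $O(\log\log n + \log k + \log(1/\eps))$ obtainable in \cref{thm:fracprg1} under the extra hypothesis $L_{1,i}(\mathcal{F}) \le b^i$ for $i < k$.
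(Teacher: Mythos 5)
Your proposal is correct and is essentially the paper's own proof: both construct $\mathbf{X}$ as a $(k-1)$-wise independent $\{-1,1\}^n$-variable scaled by $c = c_k(\eps/2,\mathcal{F})$, invoke the degree-$(<k)$ approximator whose expectation under $\mathbf{X}$ collapses to its value at $\mathbf{0}$, and close with a triangle inequality against the $\eps/2$ approximation error, then derive the ``in particular'' clause from \cref{cor:ckbound}. The only differences are cosmetic reorderings of the triangle inequality and your (welcome) aside on restricting the approximator to be multilinear.
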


\begin{proof}
The second statement follows immediately from the first using \Cref{cor:ckbound}, so we focus on the first statement.

Fix $f\in \mathcal{F}$, $\eps>0$, and let $\mathbf{X}$ be a $(k-1)$-wise independent random variable over $\{-1,1\}^n$ such that $\vert \mathbf{X}_i\vert=c \leq 1/2$ for all $i \in [n]$ for some $c>0$ we specify momentarily. It is well-known that $\mathbf{X}$ can be sampled efficiently with seed length $O(k\log n)$ \cite{vadhan2012pseudorandomness}. By definition of $c:=c_k(\eps/2,\mathcal{F})$, there exists a degree-$(k-1)$ multilinear polynomial $\widetilde{f}$ which $\eps$-approximates $f$ on the subcube $[-c,c]^n$, i.e.
\begin{equation}
\label{eq:tildef}
            \max_{ y \in [-c,c]^n } \bigl| f(y) - \widetilde{f}(y) \bigr| \leq \eps/2.
\end{equation}
Then we have, via the Fourier expansion of $f$,
\begin{align*}
    \bigl| \mathbb{E}_{\mathbf{X}}[f(\mathbf{X})]-f(\mathbf{0})\bigr|
    &\leq \frac{\eps}{2} + \left\vert \mathbb{E}_{\mathbf{X}}[f(\mathbf{X})]-\widetilde{f}(\mathbf{0})\right\vert\\
    &=\frac{\eps}{2} + \left\vert \mathbb{E}_{\mathbf{X}}\bigl[f(\mathbf{X})-\widetilde{f}(\mathbf{\mathbf{X}})\bigr]\right\vert\\
    &\leq \frac{\eps}{2} +  \mathbb{E}_{\mathbf{X}}\left[ \bigl| f(\mathbf{X})-\widetilde{f}(\mathbf{\mathbf{X}})\bigr| \right]\\
    &\leq \eps.
\end{align*}
The first inequality applies \cref{eq:tildef} at the point $\mathbf{x}=\mathbf{0}$, and the second uses the fact that $\mathbf{X}$ is $(k-1)$-wise independent and $\widetilde{f}$ has degree at most $k-1$. The final inequality holds because of (\ref{eq:tildef}) and the fact that $\mathbf{X}\in [-c,c]^n$.  Therefore, $\mathbf{X}$ satisfies the definition of a fractional PRG.  Note that by construction, $\mathbf{X}$ is $c^2$-noticeable since it takes values in $\{-c,c\}^n$.
\end{proof}

Although it does not fit so neatly in this approximation framework, one can essentially recover the improved seed length of \cite{CHHL} (which we recall assumes $L_{1,i}(\mathcal{F})$ bounds for $i=1,\ldots, n$) if one further has $L_{1,i}(\mathcal{F})$ bounds just up to level $k-1$:
\begin{THM}
\label{thm:l1bounds}
        Let $\mathcal{F}$ be closed under restrictions, and suppose that $M_k(\mathcal{F})\leq b^k$ for some $b\geq 1$, $k>2$. If it further holds that $L_{1,i}(\mathcal{F})\leq b^{i}$ for all $1\leq i<k$, then there exists a $\Theta(\eps^{2/k}/b^2)$-noticeable fractional pseudorandom generator for $\mathcal{F}$ with error $\eps$ and seed length $O(\log\log n +\log k+\log(1/\eps))$.
    \end{THM}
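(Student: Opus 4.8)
The plan is to combine the high-order error bound from \Cref{thm:main} with the standard "scaled almost $(k-1)$-wise independent distribution" construction from \cite{CHHL}, which is more seed-efficient than the fully $(k-1)$-wise independent construction used in \Cref{thm:mkfprg}. The key observation is that in the decomposition
\begin{equation*}
   \bigl| \mathbb{E}_{\mathbf{X}}[f(\mathbf{X})]-f(\mathbf{0}) \bigr|
   \leq  \sum_{i=1}^{k-1}\sum_{\vert S\vert = i} \bigl| \hat{f}(S) \bigr|\,\bigl| \mathbb{E}_{\mathbf{X}}[\mathbf{X}^S] \bigr| +  \bigl| \mathbb{E}_{\mathbf{X}}[f_{\geq k}(\mathbf{X})] \bigr|,
\end{equation*}
the high-order term no longer requires $L_{1,i}$ control for $i\geq k$: if $\mathbf{X}$ is supported on $\{-c,c\}^n$, then \Cref{thm:main} bounds $\vert f_{\geq k}(\mathbf{X})\vert$ pointwise by $\left(\frac{c}{1-c}\right)^k M_k(\mathcal{F}) \leq (2c)^k b^k$ once $c\leq 1/2$. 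Choosing $c = \Theta(\eps^{1/k}/b)$ makes this at most $\eps/2$. This is exactly the $c_k(\eps/2,\mathcal{F})$-type choice, giving noticeability $c^2 = \Theta(\eps^{2/k}/b^2)$, matching the claim.

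For the low-order term, I would take $\mathbf{X} = c\cdot \mathbf{Y}$ where $\mathbf{Y}$ is a $\delta$-almost $(k-1)$-wise independent distribution on $\{-1,1\}^n$ for a suitable $\delta$; such $\mathbf{Y}$ can be sampled with seed length $O(\log\log n + \log k + \log(1/\delta))$ (see \cite{vadhan2012pseudorandomness}), as opposed to $O(k\log n)$ for exact $(k-1)$-wise independence. For any $S$ with $1\leq \vert S\vert \leq k-1$ we then have $\bigl|\mathbb{E}_{\mathbf{X}}[\mathbf{X}^S]\bigr| = c^{\vert S\vert}\bigl|\mathbb{E}_{\mathbf{Y}}[\mathbf{Y}^S]\bigr| \leq c^{\vert S\vert}\delta$. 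Summing over levels,
\begin{equation*}
    \sum_{i=1}^{k-1}\sum_{\vert S\vert = i} \bigl| \hat{f}(S)\bigr|\, c^{i}\delta
    = \delta \sum_{i=1}^{k-1} c^i L_{1,i}(f)
    \leq \delta \sum_{i=1}^{k-1} (cb)^i
    \leq \delta \sum_{i=1}^{k-1} \eps^{i/k}
    \leq \delta\cdot k,
\end{equation*}
using $cb = \Theta(\eps^{1/k}) \le 1$ (we may assume $\eps$ is at most a small constant, else the statement is vacuous since we can take the uniform distribution scaled by a constant). Hence setting $\delta = \eps/(2k)$ makes the low-order contribution at most $\eps/2$, and the total error is at most $\eps$. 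The resulting seed length is $O(\log\log n + \log k + \log(k/\eps)) = O(\log\log n + \log k + \log(1/\eps))$ as claimed, and the distribution is supported on $\{-c,c\}^n$, hence $c^2 = \Theta(\eps^{2/k}/b^2)$-noticeable.

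The only mild subtlety — and the one place requiring care rather than routine bookkeeping — is matching the bound $\left(\frac{c}{1-c}\right)^k \le (2c)^k$ cleanly against the constraint $c\le 1/2$ and simultaneously against $cb\le 1$; since $b\ge 1$, the choice $c = c'\eps^{1/k}/b$ for an appropriate absolute constant $c'$ satisfies both (for $\eps$ bounded by a constant) and makes the high-order term at most $\eps/2$ while keeping the geometric sum of the low-order terms $O(k\delta)$. I would also explicitly note that $k > 2$ is not actually needed for this fractional PRG statement itself (it is needed only downstream when feeding into \Cref{thm:amplification}), but since it is stated in the hypothesis I would simply carry it along. I expect no genuine obstacle here: the content is entirely in \Cref{thm:main}, and this theorem is the "cheap seed length" packaging of it.
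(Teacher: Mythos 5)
Your proof is correct and takes essentially the same approach as the paper: the same decomposition into low- and high-order terms, the same application of \Cref{thm:main} to bound $|f_{\geq k}(\mathbf{X})|$ pointwise on $\{-c,c\}^n$ with $c=\Theta(\eps^{1/k}/b)$, and the same scaled $\delta$-almost $(k-1)$-wise independent distribution for the low-order terms. The only cosmetic difference is that the paper arranges $bc\leq 1/2$ so the geometric sum is $\leq 1$ with $\delta=\eps/2$, whereas you bound each term by $1$ and take $\delta=\eps/(2k)$; both give the same $O(\log\log n + \log k + \log(1/\eps))$ seed length.
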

    \begin{proof}
    Fix $f\in \mathcal{F}$, and let $\mathbf{X}$ be a random variable such that $\vert \mathbf{X}_i\vert=c$ for all $i \in [n]$ for some $c>0$ we specify momentarily. Then we have, via the Fourier expansion of $f$,
\begin{equation*}
    \bigl| \mathbb{E}_{\mathbf{X}}[f(\mathbf{X})]-f(\mathbf{0}) \bigr|
    = \Biggl| \mathbb{E}_{\mathbf{X}} \Biggl[ \sum_{S\subseteq [n]:1\leq \vert S\vert\leq k-1}\hat{f}(S)\mathbf{X}^S \Biggr] \Biggr|
    + \bigl| \mathbb{E}_{\mathbf{X}}[f_{\geq k}(\mathbf{X})] \bigr| .
\end{equation*}
We first deal with the second term on the right hand side.  By \cref{thm:main} we have
\begin{equation*}
    \bigl| \mathbb{E}_{\mathbf{X}}[f_{\geq k}(\mathbf{X})] \bigr|
    \leq \left(\frac{c}{1-c}\right)^k M_k(\mathcal{F}).
\end{equation*}
By assumption, $M_k(\mathcal{F})\leq b^k$ for some $b\geq 1$; therefore, by taking $c=\Theta(\eps^{1/k}/b)$, this term is at most $\eps/2$. To deal with the first term, we take the same approach as \cite{CHHL}. Under the assumption $L_{1,i}(\mathcal{F})\leq b^i$ for all $i<k$, one may apply their analysis by letting $\mathbf{X}=c\cdot \mathbf{Y}'$, where $\mathbf{Y}'$ is an $(\eps/2)$-almost $(k-1)$-wise independent random variable over $\{-1,1\}^n$.  It is clear that $\mathbf{X}$ is $c^2 = \Theta(\eps^{2/k}/b^2)$-noticeable.  Moreover, exactly as in \cite{CHHL}, we have
\begin{equation*}
    \Biggl| \mathbb{E}_{\mathbf{X}}\Biggl[\sum_{S\subseteq [n]:1\leq \vert S\vert\leq k-1}\hat{f}(S)\mathbf{X}^S\Biggr] \Biggr|
    \leq \sum_{i=1}^{k-1} c^i\sum_{S:\vert S\vert=i} \bigl| \hat{f}(S) \bigr| \bigl| \mathbb{E}[\mathbf{Y}^{'S}] \bigr| 
    \leq (\eps/2) \sum_{i=1}^{k-1} (bc)^i\leq \eps/2,
\end{equation*}
because by our choice of $c$ we have $bc \leq 1/2$. By standard constructions, $\mathbf{Y}'$ can be efficiently sampled with seed length $O(\log\log n +\log k+\log(1/\eps))$ \cite{naor1993small}.  Combining these two errors proves the theorem.
\end{proof}

        \subsection{From Fractional PRGs to PRGs}
\label{sec:fprg2prg}
Using \Cref{thm:mkfprg} and \Cref{thm:l1bounds} in tandem with \Cref{thm:amplification}, it is fairly immediate to obtain PRGs that rely only on a bound on some $k$-th Fourier level.  Similarly, bounds on levels up to $k$ can be leveraged to get an improved seed length.

\begin{THM}[\Cref{thm:levelkprg}, restated]
\label{thm:combined}
Let $\mathcal{F}$ be any class of $n$-variate Boolean functions that is closed under restrictions. Suppose that $M_{k}(\mathcal{F})\leq b^{k}$ for some $b\geq 1$ and  $k> 2$. Then for any $\eps>0$, there exists an explicit PRG for $\mathcal{F}$ with error $\eps$ with seed length
\[
    O\left(\frac{b^{2+\frac{4}{k-2}}\cdot k \log n \cdot \log^{1+\frac{2}{k-2}}(n/\eps)}{\eps^{\frac{2}{k-2}}}\right).
\]
If it further holds that $L_{1,i}(\mathcal{F})\leq b^{i}$ for all $1\leq i<k$, then the seed length can be improved to 
\[
    O\left(\frac{b^{2+\frac{4}{k-2}}\cdot (\log\log n +\log k+\log(b/\eps))\cdot \log^{1+\frac{2}{k-2}}(n/\eps)}{\eps^{\frac{2}{k-2}}}\right).
\]
\end{THM}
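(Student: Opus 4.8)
The plan is to obtain both seed‑length bounds by feeding the fractional PRGs of \Cref{thm:mkfprg} and \Cref{thm:l1bounds} into the amplification theorem \Cref{thm:amplification}, after choosing the fractional error parameter so that the amplified error works out to $\eps$. The one wrinkle is that the noticeability of the fractional PRG itself depends on its error, so this choice requires untangling a mild self‑reference; the hypothesis $k > 2$ is exactly what makes that possible.

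For the first bound, I would apply \Cref{thm:mkfprg} with a fractional error $\delta > 0$ to be fixed later, obtaining an explicit fractional PRG that is $p$‑noticeable with $p = \Omega(\delta^{2/k}/b^2)$ and has seed length $s = O(k\log n)$. Plugging into \Cref{thm:amplification} gives an explicit PRG with error $O(\delta\log(n/\delta)/p) = O(b^2\,\delta^{1-2/k}\log(n/\delta))$ and seed length $O(s\log(n/\delta)/p) = O(k\log n\cdot b^2\,\delta^{-2/k}\log(n/\delta))$. Since $k$ is an integer with $k > 2$, we have $k \ge 3$ and $1 - 2/k$ is a positive constant, so I can set
\[
    \delta = \Theta\!\left(\left(\frac{\eps}{b^2\log(n/\eps)}\right)^{\frac{k}{k-2}}\right).
\]
One then checks $\log(1/\delta) = \tfrac{k}{k-2}\log\!\big(b^2\log(n/\eps)/\eps\big) = O\big(\log b + \log(1/\eps) + \log\log(n/\eps)\big)$, so in the usual regime $b \le \mathrm{poly}(n/\eps)$ every occurrence of $\log(n/\delta)$ is $\Theta(\log(n/\eps))$; hence the amplified error is $O(\eps)$ as required.

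It then remains to substitute $\delta$ into the seed length. We have $\delta^{-2/k} = \Theta\!\big((b^2\log(n/\eps)/\eps)^{2/(k-2)}\big)$, so $b^2\,\delta^{-2/k} = \Theta\!\big(b^{2+4/(k-2)}\log^{2/(k-2)}(n/\eps)\,\eps^{-2/(k-2)}\big)$; multiplying by the remaining $O(k\log n)$ and $O(\log(n/\eps))$ factors yields
\[
    O\!\left(\frac{b^{2+\frac{4}{k-2}}\,k\log n\,\log^{1+\frac{2}{k-2}}(n/\eps)}{\eps^{\frac{2}{k-2}}}\right),
\]
the first claimed bound. For the improved bound I would repeat the computation verbatim starting from \Cref{thm:l1bounds}: under the extra hypothesis $L_{1,i}(\mathcal{F})\le b^i$ for $1\le i<k$ it gives a fractional PRG with the same noticeability $\Theta(\delta^{2/k}/b^2)$ but seed length only $s = O(\log\log n + \log k + \log(1/\delta)) = O(\log\log n + \log k + \log(b/\eps))$ (using the estimate for $\log(1/\delta)$ above), and plugging into $O(s\log(n/\eps)/p)$ produces the second displayed bound.

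I expect the only obstacle to be routine bookkeeping: solving the self‑referential inequality for $\delta$ using $k-2\ge 1$ so that the exponent $k/(k-2)\le 3$ stays bounded, and confirming that $\log(1/\delta)$ exceeds $\log(1/\eps)$ only by doubly‑logarithmic and $\log b$ terms, which is what lets every $\log(n/\delta)$ collapse to $\log(n/\eps)$. No genuinely difficult step arises.
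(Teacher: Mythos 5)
Your proposal is correct and matches the paper's own proof essentially line for line: you apply \Cref{thm:mkfprg} (resp.\ \Cref{thm:l1bounds}) with a fractional error $\delta$, feed the resulting $\Omega(\delta^{2/k}/b^2)$-noticeable fractional PRG into \Cref{thm:amplification}, and choose $\delta = \Theta\big((\eps/(b^2\log(n/\eps)))^{k/(k-2)}\big)$ exactly as the paper does. Your extra bookkeeping around $\log(1/\delta)$ and the observation that $k/(k-2)\le 3$ is the same implicit simplification the paper glosses over (where it notes that $b\le n$ is used implicitly).
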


\begin{proof}
By  \Cref{thm:amplification}, given an explicit $p$-noticeable fractional PRG for $\mathcal{F}$ with error $\delta$ and seed length $s$, one immediately obtains an explicit PRG for $\mathcal{F}$ with error $O(\delta \log(n/\delta)/p)$ and seed length $O(s\log(n/\delta)/p)$. 

For the first statement, by our assumption and using the fractional PRG guaranteed by   \Cref{thm:mkfprg}, for any $\delta>0$, we immediately obtain an explicit PRG for $\mathcal{F}$ with error $O(b^2 \delta^{1-2/k}\log(n/\delta))$ and seed length $O(b^2 k\log(n)\log(n/\delta)/\delta^{2/k})$. To get the error below $\eps$, we set
\begin{equation*}
    \delta = \Theta\left(\left(\frac{\eps}{b^2 \log(n/\eps)}\right)^{\frac{k}{k-2}}\right)
\end{equation*}
(the astute reader may notice we implicitly use $b\leq n$ here). This yields a PRG with error $\eps$ and seed length
\begin{equation*}
    O\left(\frac{b^{2+\frac{4}{k-2}}\cdot k \log n \cdot \log^{1+\frac{2}{k-2}}(n/\eps)}{\eps^{\frac{2}{k-2}}}\right).
\end{equation*}
The second statement follows in an identical manner from the improved seed length given in the second part of \Cref{thm:l1bounds} in the case that one has control on the $L_1$ Fourier mass on the lower levels.
\end{proof}
\Cref{cor:polylogerr} is now an immediate consequence of \Cref{thm:combined}; for any desired $\eps > b\cdot \log(n)\cdot 2^{-O(k)}$, one can simply apply \Cref{thm:combined} using level $k = \Theta(\log(b\log(n)/\eps))$ to obtain a PRG for $\mathcal{F}$ with error at most $\eps$ with seed length 
\begin{equation*}
    O(b^2\cdot \log(b\log( n)/\eps)\cdot \log(n/\eps)).
\end{equation*}
Note that for error $\eps = 1/\mathrm{poly}(n)$, one needs bounds only up to level $\Theta(\log n)$ (again, using the fact that $b\leq n$). This also partially answers an open question of \cite{CHLT}, which asks how many levels of Fourier bounds suffice to recover polylogarithmic dependence in $1/\eps$.

\begin{remark} \label{rem:lev-2}
Note that this Taylor's theorem approach does not yield anything nontrivial given bounds just on the second level, unlike the fractional PRG in \cite{CHLT}. This is actually a necessary byproduct of combining this approach with the random walk gadget of \cite{CHHL}. Given only level-two bounds, this approach attempts to use $j$-wise independence for $j<k=2$ and smallness to deal with errors on the high degree terms ($k\geq 2$). However, the trivial random variable that is $\pm \mathbf{1}$ with equal probability is trivially $1$-wise independent, as each component is a uniform random bit, albeit trivially correlated. No matter how we scale them, one can show that composing arbitrarily many independent copies of this random variable via the random walk gadget must necessarily polarize to $\pm \mathbf{1}$ at termination, which clearly cannot fool any nontrivial functions.  
\end{remark}

\section{Low-degree Polynomials over $\mathbb{F}_2$}
\label{sec:apps}

Our analysis recovers all the existing applications of \cite{CHHL} (among them, $\mathbf{AC}^0$ circuits, low-sensitivity functions, and read-once branching programs); indeed, all the classes considered there satisfy $L_1$ Fourier bounds on the entire tail. To our knowledge, our new analysis does not immediately improve the seed lengths obtained there, though it shows that (i) \emph{the seed lengths there can potentially be improved using stronger bounds on $M_k$}, and (ii) \emph{the PRGs there would still have fooled those classes had these Fourier bounds been known only up to some level $k$}.

However, the generality afforded to us by this new analysis allows us to obtain a new PRG for low-degree polynomials over $\mathbb{F}_2$, which addresses an open question of \cite{CHHL} by showing that this framework can handle this class. Indeed, let $\mathcal{F}$ be the set of $n$-variate, degree-$d$ polynomials over $\mathbb{F}_2$. As a preliminary step towards deriving Fourier tail bounds that would imply a nontrivial PRG for this class using their framework, \cite{CHHL} proves the following Fourier bounds:

\begin{proposition}[Theorem 6.1 of \cite{CHHL}]
\label{prop:f2bound}
Let $p\colon\mathbb{F}_2^n\to \mathbb{F}_2$ be a degree-$d$ polynomial, and let $f(\mathbf{x})=(-1)^{p(\mathbf{x})}$. Then $L_{1,k}(f)\leq (k\cdot 2^{3d})^k$.
\end{proposition}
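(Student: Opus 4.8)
The plan is to follow the argument of \cite{CHHL} (their Theorem~6.1), which I would organize around a trivial variable-count bound plus a random-restriction reduction. The trivial ingredient is that for \emph{any} Boolean function $g$ on $m$ variables, Cauchy--Schwarz and Parseval give $L_{1,k}(g)\le\sqrt{\binom{m}{k}\sum_{|S|=k}\hat g(S)^2}\le\sqrt{\binom{m}{k}}\le m^k$. Hence if $n\le k\cdot 2^{3d}$ the bound $(k\cdot 2^{3d})^k$ is already immediate, and all the content lies in reducing the general case to this regime.

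For $n>k\cdot2^{3d}$, I would apply a random restriction $\rho$ that keeps each coordinate alive independently with probability $\beta$ and assigns the remaining coordinates uniformly at random. The relevant identity is the ``recentering'' fact underlying \Cref{lem:conv}: averaging only over the fixed coordinates, $\mathbb{E}[\widehat{f|_\rho}(S)]=\hat f(S)\cdot\mathbf{1}[S\text{ is alive}]$, so by the triangle inequality $|\hat f(S)|\,\mathbf{1}[S\text{ alive}]\le\mathbb{E}[|\widehat{f|_\rho}(S)|]$; summing over $|S|=k$ and averaging over the alive set yields $\beta^{k}L_{1,k}(f)\le\mathbb{E}_\rho\bigl[L_{1,k}(f|_\rho)\bigr]$. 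It therefore suffices to show $\mathbb{E}_\rho[L_{1,k}(f|_\rho)]\lesssim \beta^{k}(k2^{3d})^k$. This is the step that genuinely uses $\deg_{\mathbb{F}_2}p=d$: at the right density $\beta$ a random restriction of a degree-$d$ polynomial collapses --- it becomes constant, or depends on few enough surviving variables that the trivial bound above is already small --- and the density threshold at which this collapse happens is what produces the factor $2^{3d}$ per level with no residual dependence on $n$. The most robust way to set up the collapse is an induction on $d$ using variable peeling: pick a variable $x_i$ occurring in a degree-$d$ monomial and write $p=x_iq+r$ with $\deg q\le d-1$ and $\deg r\le d$; then $f=(-1)^p=(-1)^r\bigl(\mathbf{1}[q=0]+x_i\,\mathbf{1}[q=1]\bigr)$, which splits the Fourier mass as $L_{1,k}(f)=L_{1,k}\bigl((-1)^r\mathbf{1}[q=0]\bigr)+L_{1,k-1}\bigl((-1)^r\mathbf{1}[q=1]\bigr)$, and each piece is controlled either via the inductive hypothesis applied to the lower-degree object $q$ or to the degree-$\le d$ functions $(-1)^r,(-1)^{r+q}$ on one fewer variable; this is then fed back through the restriction bound.

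Combining the restriction inequality with the collapse estimate and tracking the constants through the induction gives $L_{1,k}(f)\le(k\cdot2^{3d})^k$. I expect the genuine difficulty to be the collapse/structural step: quantifying how strongly a random restriction of a degree-$d$ $\mathbb{F}_2$-polynomial suppresses level-$k$ Fourier mass, and calibrating $\beta$ so that the $\beta^{-k}$ loss incurred by the restriction identity is compensated exactly, yielding one clean factor of $2^{3d}$ per Fourier level. The remaining pieces --- the Cauchy--Schwarz base case, the recentering identity, and the $x_i$-peeling decomposition --- are routine bookkeeping.
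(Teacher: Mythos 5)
Your plan breaks down at exactly the step you flag as the ``genuine difficulty,'' and the failure is structural rather than a matter of bookkeeping. Degree-$d$ polynomials over $\mathbb{F}_2$ do \emph{not} collapse under random restrictions the way $\mathbf{AC}^0$ circuits do: if you keep each coordinate alive with probability $\beta$ and fix the rest, the restricted function is still a degree-$\leq d$ $\mathbb{F}_2$-polynomial, now on $\approx\beta n$ surviving variables, and it is every bit as complicated as the original (take $p=x_1x_2\oplus x_3x_4\oplus\cdots$; a typical restriction is again inner-product on a constant fraction of the remaining coordinates). There is no analogue of the switching lemma here, so ``becomes constant, or depends on few enough surviving variables'' is not something a restriction buys you. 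Quantitatively, the class of degree-$\leq d$ polynomials is closed under restrictions, so your inequality $\beta^kL_{1,k}(f)\le\mathbb{E}_\rho[L_{1,k}(f|_\rho)]$ cannot self-improve: plugging in your Cauchy--Schwarz bound on the restricted function gives $L_{1,k}(f)\lesssim\beta^{-k}\,\mathbb{E}\bigl[|\mathrm{alive}|^{k/2}\bigr]\approx\beta^{-k}(\beta n)^{k/2}=n^{k/2}\beta^{-k/2}$, which is \emph{minimized at $\beta=1$} and simply reproduces the trivial $n$-dependent bound. The restriction step and the variable-count base case cancel exactly; they give nothing together.

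The $x_i$-peeling step has a separate problem: writing $p=x_iq+r$ with $\deg q\le d-1$ does not lower the degree of the pieces you recurse on. Your two children are $(-1)^r$ and $(-1)^{r+q}$ (via $g_0=\tfrac{1}{2}\bigl((-1)^r+(-1)^{r+q}\bigr)$ and $g_1=\tfrac{1}{2}\bigl((-1)^r-(-1)^{r+q}\bigr)$), and both $r$ and $r+q$ still have degree up to $d$. So if $B(n,k)$ denotes the maximum of $L_{1,k}$ over degree-$\le d$ polynomials on $n$ variables, the recursion you get is $B(n,k)\le B(n-1,k)+B(n-1,k-1)$, which only unwinds to the useless $\binom{n}{k}$-type bound. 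Nothing in the recursion ever exploits $d$, which is why it cannot produce an $n$-free answer. What the actual argument needs --- and what your outline never supplies --- is a genuine spectral fact about $f=(-1)^p$ that is specific to bounded $\mathbb{F}_2$-degree (e.g., a level-one $L_1$ bound $L_{1,1}(f)\le 2^{O(d)}$ together with a reduction from level $k$ to level $1$, or an anti-concentration/granularity property of the Fourier coefficients of such $f$). Absent such an ingredient, the pieces you assembled are all ``soft'' and do not combine to anything better than $\sqrt{\binom{n}{k}}$.
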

Note that this result cannot be applied to their original analysis, for they require a nontrivial bound at all levels, while this bound is trivial for $k=\Omega(\sqrt{n})$ and any $d$. While  \Cref{thm:chlt} can yield a nontrivial PRG by just applying the level-two bound, the dependence on $1/\eps$ is at least quadratic.\footnote{By applying this Fourier bound at level-two, one can use the fractional PRG of \cite{CHLT} to obtain seed length $2^{O(d)}\mathrm{polylog}(n)/\eps^{2+o(1)}$ using the random walks framework. This gives exponentially worse error dependence compared to our approach.} However, using our new, more flexible analysis, one can obtain a nontrivial PRG with polylogarithmic dependence on the error parameter. Our formal result is the following:

\begin{THM}
Let $\mathcal{F}$ be the class of degree-$d$ polynomials over $\mathbb{F}_2$ on $n$ variables. Then there exists an explicit pseudorandom generator for $\mathcal{F}$ with error $\eps$ and seed length
\begin{equation*}
    2^{O(d)}\cdot \log^3(\log(n)/\eps)\cdot \log(n/\eps).
\end{equation*}
\end{THM}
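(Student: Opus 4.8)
The plan is to feed the Fourier bounds of \Cref{prop:f2bound} into \Cref{thm:combined} (equivalently, \Cref{cor:polylogerr}) at a carefully chosen level $k$.

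First, $\mathcal{F}$ is closed under restrictions, since fixing a subset of the variables of a degree-$d$ polynomial over $\mathbb{F}_2$ yields a polynomial of degree at most $d$. By \Cref{prop:f2bound}, every $f \in \mathcal{F}$ satisfies $L_{1,i}(f) \le (i \cdot 2^{3d})^i$ for all $i \ge 1$; hence, fixing any level $k$, we get $L_{1,i}(\mathcal{F}) \le (k \cdot 2^{3d})^i$ for all $i \le k$ and, in particular, $M_k(\mathcal{F}) \le L_{1,k}(\mathcal{F}) \le (k \cdot 2^{3d})^k$. Thus the hypotheses of \Cref{thm:combined} hold with $b = b(k) := k \cdot 2^{3d}$, and we are in the (better) case where the $L_{1,i}$ bounds hold for every $i < k$.

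I would then set $k := \max\bigl(3,\ C(d + \log(\log(n)/\eps))\bigr)$ for a large absolute constant $C$, and invoke the $L_{1,i}$-aware bound of \Cref{thm:combined}. The key check is that for this $k$, the correction factors $b^{4/(k-2)}$, $\eps^{-2/(k-2)}$, and $\log^{2/(k-2)}(n/\eps)$ are all $O(1)$: this holds because $k - 2 = \Omega(d + \log\log n + \log(1/\eps))$ dominates $\log b$, $\log(1/\eps)$, and $\log\log(n/\eps)$ up to constants, so each exponent of the form $(\text{logarithmic})/(k-2)$ is bounded. Granting this, \Cref{thm:combined} yields an explicit PRG for $\mathcal{F}$ with error $\eps$ and seed length $O\bigl(b^2 \cdot (\log\log n + \log k + \log(b/\eps)) \cdot \log(n/\eps)\bigr)$. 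Substituting $b = k \cdot 2^{3d}$ and $k = \Theta(d + \log(\log(n)/\eps))$: the middle factor is $O(d + \log(\log(n)/\eps))$ (using $\log\log n + \log(1/\eps) = \log(\log(n)/\eps)$), and $b^2 = 2^{6d}(d + \log(\log(n)/\eps))^2$, so the product of the first two factors is $O\bigl(2^{6d}(d + \log(\log(n)/\eps))^3\bigr) = 2^{O(d)}\log^3(\log(n)/\eps)$ after absorbing the polynomial-in-$d$ part into $2^{O(d)}$. Multiplying by $\log(n/\eps)$ gives the stated bound.

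The main obstacle is the mildly self-referential choice of $k$: because $b = b(k)$ grows with $k$, one must pick $k$ large enough that the correction exponents are harmless yet small enough that $b^2 = (k \cdot 2^{3d})^2$ does not inflate the seed length, and $k = \Theta(d + \log(\log(n)/\eps))$ is exactly the balance point (for $d$ and $\log(\log(n)/\eps)$ both $O(1)$ one just takes $k$ a suitable constant and absorbs everything into $2^{O(d)}$). Everything else is routine arithmetic, essentially identical to the derivation of \Cref{cor:polylogerr} from \Cref{thm:combined}.
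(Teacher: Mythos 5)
Your proposal is correct and follows essentially the same route as the paper: apply \Cref{prop:f2bound} to obtain $L_{1,i}$ bounds up to level $k$, set $b$ accordingly, and invoke the second case of \Cref{thm:combined} with $k$ large enough that the correction exponents $4/(k-2)$ and $2/(k-2)$ are harmless. The only cosmetic difference is your choice $k = \Theta(d + \log(\log(n)/\eps))$ versus the paper's $k = \Theta(\log(\log(n)/\eps))$; the paper's smaller $k$ yields $b^{4/(k-2)} = 2^{O(d)}$ rather than $O(1)$, which is still absorbed into the $2^{O(d)}$ in the final bound, so both choices give the stated seed length.
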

\begin{proof}
Fix $\eps>0$ and let $k=\Theta(\log(\log(n)/\eps))$. By   \Cref{prop:f2bound}, we have that for all $j\leq k$,
\begin{equation*}
    L_{1,j}(\mathcal{F})\leq \Theta\bigl(\log(\log(n)/\eps)\cdot 2^{3d} \bigr)^j.
\end{equation*}
By setting $b=\Theta(\log(\log(n)/\eps)\cdot 2^{3d})$, we may apply   \Cref{thm:combined} for $\mathcal{F}$ and error $\eps$. Note that $\eps^{-\Theta(1/\log(1/\eps))}=O(1)$, so plugging in this value of $b$, we immediately obtain the desired pseudorandom generator.
\end{proof}
For comparison, the best known construction by Viola \cite{viola2009sum}, obtained by summing $d$ independent copies of a sufficiently good small-bias space, attains seed length $d\cdot \log n + O(d\cdot 2^d \log(1/\eps))$, which for constant $\eps$ and $d$ is within a constant factor of the optimal seed length.  The generator implied by our analysis recovers this polylogarithmic dependence in $n/\eps$, although with slightly worse dependence on $\log n$ and polynomially worse dependence in $\log(1/\eps)$. Neither generator can handle superlogarithmic degree. While this result clearly falls short of the state-of-the-art, we emphasize that this generator is conceptually distinct from the existing constructions, and yet belongs to this generic random walk framework.

Our analysis allows us to exploit known Fourier bounds that are too weak for the existing analyses to obtain polylogarithmic error dependence. In particular, to get a nontrivial pseudorandom generator for polynomials of superlogarithmic degree with nontrivial seed length, our work shows that the following weaker conjecture would suffice to break the logarithmic degree barrier and still achieve polylogarithmic (in $n$) seed length for $\eps=1/\mathrm{poly}(n)$:
\begin{conjecture}\label{con:mk}
Let $\mathcal{F}$ be the class of degree-$d$ polynomials over $\mathbb{F}_2$ on $n$ variables. Then 
\begin{equation*}
    M_{k}(\mathcal{F})\leq (\mathrm{poly}(k,\log n)\cdot 2^{o(d)})^k
\end{equation*}
for $k\leq O(\log n)$.
\end{conjecture}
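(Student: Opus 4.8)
Since the class $\mathcal{F}$ of degree-$d$ $\mathbb{F}_2$-polynomials is closed under negating variables and under $f\mapsto -f$, \Cref{lem:unsigned} lets me replace $M_k(\mathcal{F})$ by the \emph{unsigned} level-$k$ Fourier sum:
\[
M_k(\mathcal{F}) \;=\; \max_{f\in\mathcal{F}}\ \sum_{|S|=k}\widehat f(S) \;=\; \max_{f\in\mathcal{F}}\ \mathbb{E}_{\mathbf{x}\in\{-1,1\}^n}\!\Big[\,f(\mathbf{x})\!\!\sum_{|S|=k}\!\mathbf{x}^S\Big].
\]
The inner sum $\sum_{|S|=k}\mathbf{x}^S$ depends only on the Hamming weight $w$ of $\mathbf{x}$ and equals, up to normalization, the Krawtchouk polynomial $K_k(w;n)$, i.e.\ a signed combination of threshold (``shifted majority'') functions. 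So proving \Cref{con:mk} amounts to a \emph{correlation bound}: every degree-$d$ $\mathbb{F}_2$-polynomial correlates with this symmetric test function by at most $(\mathrm{poly}(k,\log n)\cdot 2^{o(d)})^k$. The first step of the plan is to invoke the reduction of \Cref{sec:corrsec} (which generalizes the level-two connection of \cite{xorlemma}): it bounds $M_k(\mathcal{F})$ by an explicit combinatorial prefactor $P(n,k)$ times $\max|\mathrm{Cov}(f,\,\mathsf{MAJ}_{t_1}\oplus\cdots\oplus\mathsf{MAJ}_{t_k})|$, the correlation of $f$ with an XOR of $k$ shifted majorities placed on (near-)disjoint blocks of coordinates.

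\textbf{Executing the correlation bound.} I would then bound this covariance in two layers. For the base case, a single shifted majority on a block of $m$ variables: the Razborov--Smolensky approximation method \cite{Razborov1987,Smolensky1,Smolensky2} shows that any degree-$d$ $\mathbb{F}_2$-polynomial correlates with it by at most some $\gamma(m,d)$ that becomes polynomially small in $1/m$ once $m\gg d^2$ (and can be pushed further). To stitch the $k$ blocks into an XOR I would apply a known XOR lemma for low-degree $\mathbb{F}_2$-polynomials, which makes correlation with $\bigoplus_i g_i$ on disjoint blocks decay roughly multiplicatively across blocks (up to a per-block loss depending on $d$). Choosing each block size to be $\mathrm{poly}(d,\log n,k)$ so that $\gamma(m_i,d)\le 1/\mathrm{poly}(n)$, the product over the $k$ blocks should beat the prefactor $P(n,k)$ with room to spare, leaving a per-level factor of the form $\mathrm{poly}(k,\log n)$ times whatever $d$-dependence the base bound and the XOR lemma contribute. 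A sensible first milestone is to carry this out for $k=2$ --- which is exactly the regime the \cite{xorlemma} connection was designed for and is itself open --- and then bootstrap to all $k\le O(\log n)$ via the disjoint-blocks tensoring.

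\textbf{The main obstacle.} The crux is getting the $d$-dependence down to $2^{o(d)}$ rather than the $2^{\Theta(d)}$ that both standard ingredients leak: the Razborov--Smolensky bound $O(d/\sqrt m)$ only drops below $2^{-d}$ once the block size $m$ is exponential in $d$ (so even a $\mathrm{poly}(m)$ factor is already $2^{\Omega(d)}$), and the known XOR lemmas for this class lose a multiplicative $2^{\Theta(d)}$ per application. To do better one would need one of: (i) a sharper correlation bound of degree-$d$ polynomials against majority whose decay does not degrade by a constant factor in the exponent as $d$ grows; (ii) a near-lossless XOR lemma for low-degree $\mathbb{F}_2$-polynomials; or (iii) a way to use the specific \emph{symmetric} (Krawtchouk) structure of the test function --- rather than treating it as a generic XOR of thresholds --- to shrink the decomposition prefactor $P(n,k)$ and thereby afford a weaker base bound. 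An orthogonal route worth trying is to revisit the proof of \Cref{prop:f2bound} (the $L_{1,k}\le(k\cdot 2^{3d})^k$ bound of \cite{CHHL}, which goes through random restrictions and a recursive structure) and ask whether its $2^{3d}$ loss, plausibly essential for the $L_1$ quantity, can be replaced by $2^{o(d)}$ when one only tracks the weaker $M_k$; the extra cancellation available in $M_k$ is precisely what this paper exploits elsewhere.
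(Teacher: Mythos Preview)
The statement you are attempting to prove is a \emph{conjecture} in the paper (\Cref{con:mk}), not a theorem; the paper offers no proof and explicitly presents it as open. So there is no ``paper's own proof'' to compare against, and your write-up is a research plan rather than a proof.

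As a plan, it is coherent and well-aligned with the paper's viewpoint: the reduction to unsigned sums via \Cref{lem:unsigned} is correct, and the strategy of using the \Cref{sec:corrsec} reduction to turn $M_k$ bounds into correlation bounds against XORs of shifted majorities is exactly the route the paper advertises. You also correctly identify the bottleneck. The Razborov--Smolensky bound combined with any known XOR lemma for $\mathbb{F}_2$-polynomials leaks a $2^{\Theta(d)}$ factor per level, which only reproduces something of the shape of \Cref{prop:f2bound} (indeed, via \Cref{lem:corrbounds} one gets at best a $(\mathrm{poly}(\log n)\cdot 2^{\Theta(d)})^k$-type bound), not the conjectured $2^{o(d)}$. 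None of your three suggested fixes---a sharper base correlation bound, a near-lossless XOR lemma for low-degree polynomials, or exploiting the Krawtchouk structure to shrink the prefactor---is currently known, and the paper itself flags that even the $k=2$ case (the level-two conjecture of \cite{CHLT}) is open. So the proposal does not close the gap; it accurately maps where the gap is.
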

In fact, we observe that to break the logarithmic degree barrier, it actually suffices that this holds just at level $k=3$, though with poor dependence on $\eps$. Note that this is a significantly weaker conjecture than positing that the same bounds hold for $L_{1,k}(\mathcal{F})$. Moreover, as we explain in the next section, $M_k(\mathcal{F})$ can be controlled using correlation bounds, which are much better studied than $L_1$ Fourier bounds.

\section{Bounds on $M_k(\mathcal{F})$ via Correlation with Shifted Majorities}
\label{sec:corrsec}
As we have seen, our new analysis lets one construct PRGs from the weaker quantity $M_k(\mathcal{F})$.  In this section, we extend the argument of Chattopadhyay, Hatami, Hosseini, Lovett, and Zuckerman \cite{xorlemma} to show how bounds on $M_k(\mathcal{F})$ follow from covariance bounds with certain resilient functions (in particular, shifted majorities). In their paper, they deal with the case of $k=2$; we rather straightforwardly generalize this argument, but stress that the approach is the same as in Section 6 of their paper. To that end, for convenience and consistency with their argument, we adopt their conventions and requisite definitions just for this section. We will now consider Boolean functions written as $f:\{0,1\}^n\to \{0,1\}$. Translating to this notation, for any such Boolean function $f$, let $e(f)(\mathbf{x})\triangleq (-1)^{f(\mathbf{x})}$. Then, letting $F=e(f)$, we now have $\hat{F}(S)=\mathbb{E}_{\mathbf{x}}[F(\mathbf{x})e(\sum_{i\in S} x_i)]$.

\begin{definition}
The \emph{covariance between $f$ and $g$}, where $f,g$ are Boolean is 
\begin{equation*}
    \mathrm{cov}(f,g)\triangleq 
    \bigl| \mathbb{E}[e(f(\mathbf{x}))e(g(\mathbf{x}))] - \mathbb{E}[e(f(\mathbf{x}))]\mathbb{E}[e(g(\mathbf{x}))] \bigr|.
\end{equation*}
The covariance between a function $f$ and a class $\mathcal{G}$ is defined as $\mathrm{cov}(f,\mathcal{G})\triangleq \max_{g\in \mathcal{G}} \mathrm{cov}(f,g)$.
\end{definition}

For any $\mathbf{x}\in \{0,1\}^n$, we write $\vert \mathbf{x}\vert$ for its Hamming weight, i.e. $\sum_{i=1}^n x_i$. For any $a\in \{0,1,\ldots,n\}$, \cite{xorlemma} defines $\mathrm{Maj}_a$ by
\begin{equation*}
    \mathrm{Maj}_a(\mathbf{x})\triangleq \begin{cases}
    1 & \text{if $\vert \mathbf{x}\vert> a$}\\
    0 & \text{otherwise},
    \end{cases}
\end{equation*}
as well as the following associated functions for any $\theta\in [n/2]$:
\begin{equation*}
    \mathrm{Thr}_{\theta}(x)\triangleq \begin{cases}
    (-1)^{\mathrm{Maj}_{n/2}(\mathbf{x})} & \text{if $\big\vert\vert \mathbf{x}\vert-n/2\big\vert> \theta$}\\
    0 & \text{otherwise.}
    \end{cases}
\end{equation*}
We now prove the following lemma relating $M_k(\mathcal{F})$ with covariance bounds against the $k$-$\mathsf{XOR}$s of these functions:

\begin{lemma}[Lemma 6.1 of \cite{xorlemma}, adapted]
\label{lem:corrbounds}
Let $\mathcal{F}$ be any family of $(kn)$-variate Boolean functions that is closed under relabeling and negation of input variables.  Suppose that for any $a_1,\ldots,a_k$ such that $\vert a_i-n/2\vert = O(\sqrt{kn \log n})$ for all $i\in [k]$, and all $f\in \mathcal{F}$, we have for some $t\geq 1$
\begin{equation*}
    \mathrm{cov} \bigl(f, \oplus_{i=1}^k \mathrm{Maj}_{a_i} \bigr)
    \leq \left(\sqrt{\frac{t}{n}}\right)^k,
\end{equation*}
where $\oplus$ denotes the $\mathsf{XOR}$ function.
Then, 
\begin{equation*}
    M_k(\mathcal{F})\leq O \bigl(\sqrt{tk\log n} \bigr)^k.
\end{equation*}
\end{lemma}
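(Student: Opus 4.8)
The plan is to follow the argument of \cite{xorlemma} for $k=2$ and generalize it in the natural way. By \Cref{lem:unsigned}, since $\mathcal{F}$ is closed under relabeling and negation of variables (and we may assume closure under image flip), it suffices to bound $\max_{f \in \mathcal{F}} f_k(\mathbf{1})$, i.e.\ the unsigned level-$k$ Fourier sum $\sum_{|S|=k} \hat{F}(S)$ where $F = e(f)$. The key idea is to express this unsigned sum (or something comparable to it) as a correlation of $F$ against an explicit function built from XORs of shifted majorities. First I would recall the ``majority trick'': a single Fourier character $e(\sum_{i \in S} x_i)$ on a small coordinate set can be recovered, up to controlled error, by correlating with an appropriate combination of threshold/majority functions on those coordinates. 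Concretely, the univariate identity expressing $(-1)^{x}$-type parities via signed sums of $\mathrm{Maj}_a$'s (the discrete analogue of writing a parity as a telescoping sum of indicators of Hamming-weight thresholds) lets one write $\sum_{|S|=k} \hat{F}(S)$ as a signed combination, over choices of thresholds $a_1,\dots,a_k$, of quantities $\mathbb{E}[F \cdot \oplus_{i=1}^k \mathrm{Maj}_{a_i}]$ applied to disjoint blocks of coordinates.

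The second step is to split the sum over threshold vectors $(a_1,\dots,a_k)$ into a ``bulk'' part, where every $|a_i - n/2| = O(\sqrt{kn\log n})$, and a ``tail'' part. For the tail part, one uses a Chernoff/Hoeffding bound: when some $a_i$ is far from $n/2$, the relevant majority function is essentially constant on a $1-n^{-\omega(1)}$ fraction of inputs, so its contribution to the Fourier sum is negligible (this is where the $\sqrt{kn\log n}$ window comes from — it is the standard deviation scale of a sum of $n$ coins, blown up by $\log n$ to kill the tail). For the bulk part, we invoke the hypothesis: each term $\mathbb{E}[F \cdot \oplus_{i=1}^k \mathrm{Maj}_{a_i}]$, after subtracting off the product of expectations (which either vanishes by symmetry or is itself controlled), is at most $\mathrm{cov}(f, \oplus_i \mathrm{Maj}_{a_i}) \le (t/n)^{k/2}$. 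There are $O(\sqrt{kn\log n})^k$ threshold vectors in the bulk, and one must check that the coefficients in the telescoping representation are bounded appropriately (this is where a factor polynomial in the relevant quantities enters). Multiplying the number of terms by the per-term bound and the coefficient bound yields $M_k(\mathcal{F}) \le O(\sqrt{tk\log n})^k$ after the $n^{k/2}$ factors cancel.

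The main obstacle, and the step requiring the most care, is the bookkeeping in the threshold representation: one must precisely control the coefficients arising when a level-$k$ character is decomposed as a combination of XORs of shifted majorities over $k$ disjoint blocks, and confirm that these coefficients, together with the number of bulk threshold vectors, only contribute the claimed $O(\sqrt{k\log n})^k$ overhead rather than something worse. In the $k=2$ case of \cite{xorlemma} this is a somewhat delicate but ultimately elementary computation; the $k$-fold version is obtained by taking a $k$-fold product of the single-block decomposition across the disjoint blocks, so the per-block bounds simply exponentiate. A secondary subtlety is handling the $\mathbb{E}[e(f)]\mathbb{E}[\oplus_i \mathrm{Maj}_{a_i}]$ correction term in the definition of covariance: one checks that the contribution of these product terms to the Fourier sum is either zero (by parity of the number of blocks, or by the balancedness of $\mathrm{Maj}_{n/2}$) or is itself subsumed by the tail estimate, so that replacing raw correlation by covariance is harmless. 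Modulo these estimates, the proof is a direct lift of Section~6 of \cite{xorlemma}, as claimed.
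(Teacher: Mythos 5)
Your proposal follows the same overall template as the paper's proof: reduce via \Cref{lem:unsigned} to the unsigned level-$k$ Fourier sum, convert characters on each $n$-coordinate block into sums of threshold functions (the paper's \Cref{fact:f2}, i.e.\ Claim 6.3 of \cite{xorlemma}: $\sum_i e(x_i) = 2\sum_{1\le a\le n/2}\mathrm{Thr}_a(\mathbf{x})$), split the resulting threshold vectors into a bulk controlled by the covariance hypothesis and a tail killed by a Chernoff bound (\Cref{fact:f4}), and count. Your secondary concern about the covariance correction term is handled in the paper simply by working with $F'=F-\mathbb{E}[F]$ from the outset, which leaves all nonempty Fourier coefficients unchanged; \Cref{fact:f1} then bounds each $\mathbb{E}[F'\prod_i\mathrm{Thr}_{a_i}]$ directly by the covariance.

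However, there is a genuine gap in your second step. You claim the majority trick ``lets one write $\sum_{|S|=k}\hat F(S)$ as a signed combination, over choices of thresholds $a_1,\dots,a_k$, of quantities $\mathbb{E}[F\cdot\oplus_{i=1}^k\mathrm{Maj}_{a_i}]$ applied to disjoint blocks of coordinates.'' This is not correct as stated. Expanding the $k$-fold product $\prod_{j=1}^k\bigl(\sum_{i\in U_j}e((\mathbf{x}_j)_i)\bigr)$ over $k$ disjoint $n$-element blocks $U_1,\ldots,U_k$ only produces the \emph{transversal} level-$k$ monomials---those choosing exactly one coordinate from each block---so the majority/threshold decomposition bounds only $\bigl|\sum_{i_j\in U_j\,\forall j}\hat f(\{i_1,\ldots,i_k\})\bigr|$, not the full unsigned sum $\bigl|\sum_{|S|=k}\hat f(S)\bigr|$ over all $\binom{kn}{k}$ subsets. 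The paper's \Cref{fact:f3} (Claim 6.4 of \cite{xorlemma}, adapted) supplies exactly this missing bridge: a probabilistic-method argument over a uniformly random $k$-equipartition of $[kn]$ shows that some equipartition has transversal sum within an $e^{O(k)}$ factor of the full sum, since a fixed $k$-subset is transversal to a random equipartition with probability $\Omega(1)^k$. This is where the constant $C^k$ enters, and it is a separate claim needing proof. You instead attribute the main bookkeeping difficulty to the coefficients in the telescoping threshold representation---but those are just the uniform factor $2$ per block, giving $2^k$ overall---so your proposal identifies a phantom difficulty while omitting the real one. Fixing this, the accounting becomes $C^k\cdot 2^k\cdot O(\sqrt{kn\log n})^k\cdot(\sqrt{t/n})^k=O(\sqrt{tk\log n})^k$ as claimed.
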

To prove this lemma, \cite{xorlemma} uses the following sequence of claims. 
\begin{fact}[Claim $6.2$ in \cite{xorlemma}] 
\label{fact:f1}
 For any $f\in \mathcal{F}$, let $F(\mathbf{x}_1,\ldots,\mathbf{x}_k)=e(f(\mathbf{x}_1,\ldots,\mathbf{x}_k))$. Under the hypotheses of \cref{lem:corrbounds}, for any $1\leq a_1,\ldots,a_k\leq O(\sqrt{kn\log n})$, 
\begin{equation*}
    \biggl| \mathbb{E}_{\mathbf{x}_1,\ldots,\mathbf{x}_k} \Bigl[ \bigl(F(\mathbf{x}_1,\ldots,\mathbf{x}_k)-\mathbb{E}[F] \bigr) \prod_{i=1}^k \mathrm{Thr}_{a_i}(\mathbf{x}_i) \Bigr] \biggr| \leq \left(\sqrt{\frac{t}{n}}\right)^k.
\end{equation*}
\end{fact}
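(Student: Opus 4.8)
The plan is to follow the proof of Claim~6.2 of \cite{xorlemma}, which treats $k=2$; the general-$k$ argument is essentially identical. Fix $f\in\mathcal{F}$ and write $F=e(f)$. The crux is a pointwise identity expressing each two-sided threshold $\mathrm{Thr}_{\theta}$ as an average of two one-sided shifted majority sign functions. Using that $e(\mathrm{Maj}_a)(\mathbf{x})$ equals $+1$ when $\vert\mathbf{x}\vert\le a$ and $-1$ when $\vert\mathbf{x}\vert>a$, a short case analysis on $\vert\mathbf{x}\vert$ shows that for every $\mathbf{x}\in\{0,1\}^n$ and every $\theta\in[n/2]$,
\[
\mathrm{Thr}_{\theta}(\mathbf{x})=\tfrac12\Bigl(e(\mathrm{Maj}_{n/2+\theta})(\mathbf{x})+e(\mathrm{Maj}_{n/2-\theta-1})(\mathbf{x})\Bigr).
\]
Indeed, the right-hand side is $+1$ for $\vert\mathbf{x}\vert<n/2-\theta$, is $-1$ for $\vert\mathbf{x}\vert>n/2+\theta$, and is $0$ for $\vert\mathbf{x}\vert$ in between, which matches the definition of $\mathrm{Thr}_{\theta}$. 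The point to keep track of is that both shift parameters lie within distance $\theta+1$ of $n/2$, so for $\theta=a_i=O(\sqrt{kn\log n})$ the resulting majorities satisfy precisely the range required in the hypothesis of \Cref{lem:corrbounds}.

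Next I would expand the product over the $k$ disjoint variable blocks. Writing $b_i^{+}=n/2+a_i$ and $b_i^{-}=n/2-a_i-1$, and using $(-1)^{a}(-1)^{b}=(-1)^{a\oplus b}$ on $\{0,1\}$-valued exponents, the identity above gives
\[
\prod_{i=1}^{k}\mathrm{Thr}_{a_i}(\mathbf{x}_i)=\frac{1}{2^k}\sum_{\sigma\in\{+,-\}^k} e\!\left(\bigoplus_{i=1}^{k}\mathrm{Maj}_{b_i^{\sigma_i}}\right)(\mathbf{x}_1,\dots,\mathbf{x}_k),
\]
where in the XOR on the right the $i$-th majority acts on the $i$-th block of $n$ variables. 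Substituting this into the quantity to be bounded and using linearity of expectation, it remains to control, for each fixed $\sigma$, the term $\mathbb{E}_{\mathbf{x}_1,\dots,\mathbf{x}_k}[(F-\mathbb{E}[F])\,e(g_\sigma)]$ with $g_\sigma=\bigoplus_{i=1}^{k}\mathrm{Maj}_{b_i^{\sigma_i}}$.

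For each $\sigma$ this term equals $\mathbb{E}[e(f)e(g_\sigma)]-\mathbb{E}[e(f)]\,\mathbb{E}[e(g_\sigma)]$, whose absolute value is exactly $\mathrm{cov}(f,g_\sigma)$. Since $\vert b_i^{\sigma_i}-n/2\vert\le a_i+1=O(\sqrt{kn\log n})$ for all $i$, the covariance hypothesis of \Cref{lem:corrbounds} applies to $g_\sigma$ and gives $\mathrm{cov}(f,g_\sigma)\le(\sqrt{t/n})^{k}$. Combining over the $2^k$ sign patterns by the triangle inequality,
\[
\left\vert\mathbb{E}\!\left[(F-\mathbb{E}[F])\prod_{i=1}^{k}\mathrm{Thr}_{a_i}(\mathbf{x}_i)\right]\right\vert\le\frac{1}{2^k}\sum_{\sigma\in\{+,-\}^k}\mathrm{cov}(f,g_\sigma)\le\left(\sqrt{\tfrac{t}{n}}\right)^{k},
\]
which is the claim.

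I do not expect a real obstacle here: the whole argument is the $\mathrm{Thr}$-to-$\mathrm{Maj}$ decomposition followed by a triangle inequality, exactly as in \cite{xorlemma} for $k=2$. The only points needing care are bookkeeping: getting the integer shifts (in particular the $-1$ offset) right in the pointwise identity; checking that the degenerate range $n/2-a_i-1<0$ does not occur in the regime $a_i=O(\sqrt{kn\log n})\ll n$; and confirming that each XOR of shifted majorities produced by the expansion is indeed one of the test functions appearing in the hypothesis of \Cref{lem:corrbounds}, which holds since every shift stays within $O(\sqrt{kn\log n})$ of $n/2$.
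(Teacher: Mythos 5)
The paper does not reprove this fact; it is cited directly from \cite{xorlemma} (Claim 6.2), stated there for $k=2$. Your reconstruction of the general-$k$ argument is correct and is the natural one: the pointwise identity $\mathrm{Thr}_{\theta}=\tfrac12\bigl(e(\mathrm{Maj}_{n/2+\theta})+e(\mathrm{Maj}_{n/2-\theta-1})\bigr)$ checks out on all three weight ranges, the product over the $k$ blocks expands into $2^k$ XORs of shifted majorities with shifts within $a_i+1=O(\sqrt{kn\log n})$ of $n/2$, each centered term is literally $\mathrm{cov}(f,g_\sigma)$, and the $2^{-k}$ prefactor cancels the $2^k$ terms in the triangle inequality. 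The bookkeeping you flag (the $-1$ offset, the shift range, and the degenerate case $n/2-a_i-1<0$) is exactly what needs checking and you handle it correctly.
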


\begin{fact}[Claim 6.3 of \cite{xorlemma}]
\label{fact:f2}
For any $\mathbf{x}\in \{0,1\}^n$, $\sum_{i=1}^n e(\mathbf{x}_i)=2\sum_{1\leq a\leq n/2} \mathrm{Thr}_a(\mathbf{x})$.
\end{fact}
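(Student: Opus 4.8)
The plan is to verify the identity $\sum_{i=1}^n e(\mathbf{x}_i) = 2\sum_{1\le a\le n/2} \mathrm{Thr}_a(\mathbf{x})$ by computing each side as an explicit function of the Hamming weight $w\triangleq \vert\mathbf{x}\vert$ and checking that the two expressions coincide.

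First I would dispatch the left-hand side, which is immediate from the definition $e(\mathbf{x}_i)=(-1)^{\mathbf{x}_i}$: each of the $n-w$ coordinates equal to $0$ contributes $+1$ and each of the $w$ coordinates equal to $1$ contributes $-1$, so $\sum_{i=1}^n e(\mathbf{x}_i)=(n-w)-w=n-2w$. It therefore suffices to prove $\sum_{1\le a\le n/2}\mathrm{Thr}_a(\mathbf{x})=\tfrac n2-w$.

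For the right-hand side, the key observation is that $\mathrm{Thr}_a(\mathbf{x})$ is a \emph{signed} one-sided threshold: the sign factor $(-1)^{\mathrm{Maj}_{n/2}(\mathbf{x})}$ equals $-1$ precisely when $w$ lies on the heavy side of $n/2$ and $+1$ otherwise, while the support condition $\vert w-n/2\vert>a$ selects exactly those indices $a$ strictly below the distance of $w$ from the center. Consequently, summing over $a$ reduces to counting the admissible indices $a$ in the range with $a<\vert w-n/2\vert$ and attaching the single common sign $(-1)^{\mathrm{Maj}_{n/2}(\mathbf{x})}$; since these indices form an initial segment of $\{1,\dots,n/2\}$ and $0\le w\le n$, the count is exactly $\vert n/2-w\vert$, so the sum equals $(-1)^{\mathrm{Maj}_{n/2}(\mathbf{x})}\cdot\vert n/2-w\vert=\tfrac n2-w$ (the sign and absolute value recombine into the signed quantity). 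Multiplying by $2$ yields $n-2w$, matching the left-hand side; the degenerate case $w=n/2$, when it occurs, gives $0=0$ since every $\mathrm{Thr}_a$ then vanishes.

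I expect no substantive obstacle: this is an elementary identity about sums of $\pm 1$'s, and rewriting $\mathrm{Thr}_a$ as a signed threshold makes the cancellation transparent. The only point requiring care is the boundary bookkeeping — whether the inequality defining $\mathrm{Thr}_a$ is strict, the precise range of the summation index $a$, and the parity of $n$ — all of which control the off-by-one in the count of admissible $a$; these are pinned down by following the conventions of \cite{xorlemma} so that the count lands on exactly $\vert n/2-w\vert$.
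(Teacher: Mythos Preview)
The paper does not prove this fact; it is stated without proof as Claim~6.3 of \cite{xorlemma}. Your direct computation is the natural argument and is correct in approach: both sides depend only on the Hamming weight $w$, the left-hand side is $n-2w$, and the right-hand side is $2\cdot(-1)^{\mathrm{Maj}_{n/2}(\mathbf{x})}\cdot\#\{a:\mathrm{Thr}_a(\mathbf{x})\neq 0\}$, which recombines to $n-2w$. Your caveat about boundary bookkeeping is apt: with the paper's transcribed definition using the strict inequality $\lvert w-n/2\rvert>\theta$ and the sum starting at $a=1$, the count of nonzero terms is off by one (e.g.\ $w=0$ gives $n-2$ rather than $n$); the identity requires either $\geq\theta$ in the definition of $\mathrm{Thr}_\theta$ or the sum to start at $a=0$, which is presumably the convention in \cite{xorlemma}. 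You have already anticipated exactly this, so there is no gap in your reasoning.
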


\begin{fact}[Claim 6.4 of \cite{xorlemma}, adapted]
\label{fact:f3}

For any Boolean function $f:\{0,1\}^{kn}\to \{0,1\}$, there exists a $k$-equipartition of $[kn]$ into disjoint sets $S_1,\ldots,S_k$ such that
\begin{equation*}
    \biggl| \sum_{S\subseteq [kn]:\vert S\vert=k} \hat{f}(S) \biggr|
    \leq C^k \biggl| \sum_{i_j\in S_j\, \forall j\in [k]} \hat{f}(\{i_1,\ldots,i_k\}) \biggr|
\end{equation*}
for some absolute constant $C>0$.
\end{fact}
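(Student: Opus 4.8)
The plan is to prove this by a simple averaging argument over a uniformly random equipartition, generalizing the $k=2$ case treated in \cite{xorlemma}. Let $(S_1,\dots,S_k)$ be a uniformly random ordered partition of $[kn]$ into $k$ blocks of size $n$ each, and set
\[
    Q(S_1,\dots,S_k) \triangleq \sum_{i_j\in S_j\ \forall j\in[k]} \hat f(\{i_1,\dots,i_k\}).
\]
Since the blocks are pairwise disjoint, every term in this sum involves $k$ distinct coordinates, so each $\hat f(\{i_1,\dots,i_k\})$ is a genuine level-$k$ Fourier coefficient. The first step is to compute $\mathbb{E}[Q]$ over the random partition: by linearity, $\mathbb{E}[Q]=\sum_{(i_1,\dots,i_k)\text{ distinct}} \hat f(\{i_1,\dots,i_k\})\cdot\Pr[i_j\in S_j\ \forall j]$, and by the symmetry of the random partition this probability equals $q\triangleq n^k/\bigl[(kn)(kn-1)\cdots(kn-k+1)\bigr]$ for every fixed distinct tuple. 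Grouping the $k!$ ordered tuples corresponding to each $k$-set $S$, I get $\mathbb{E}[Q] = k!\,q\cdot\sum_{|S|=k}\hat f(S)$.

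The second step is the only quantitative point: lower bounding $p_k\triangleq k!\,q$. Using $(kn)(kn-1)\cdots(kn-k+1)\le (kn)^k$ together with Stirling's bound $k!\ge (k/e)^k$, one obtains $p_k\ge k!/k^k \ge e^{-k}$, i.e.\ $1/p_k\le e^k$. Combining this with $|\mathbb{E}[Q]|\le \mathbb{E}|Q|\le \max_{(S_1,\dots,S_k)}\bigl|Q(S_1,\dots,S_k)\bigr|$ yields
\[
    \Bigl|\sum_{|S|=k}\hat f(S)\Bigr| = \frac{|\mathbb{E}[Q]|}{p_k} \le \frac{1}{p_k}\max_{(S_1,\dots,S_k)}\bigl|Q(S_1,\dots,S_k)\bigr| \le e^k\max_{(S_1,\dots,S_k)}\bigl|Q(S_1,\dots,S_k)\bigr|.
\]
By the pigeonhole principle, some equipartition $(S_1,\dots,S_k)$ attains $\bigl|\sum_{|S|=k}\hat f(S)\bigr|\le C^k\,|Q(S_1,\dots,S_k)|$ with $C=e$, which is exactly the claim (forgetting the labeling of the blocks leaves $Q$ unchanged).

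I do not expect a genuine obstacle here: the whole content is the elementary observation that a $p_k\ge c^k$ fraction of the $k$-subsets of $[kn]$ are transversals of any fixed $k$-block equipartition, and the rest is linearity of expectation plus the trivial inequality $|\mathbb{E}[Q]|\le\max|Q|$. If anything, the mildly delicate point is making the estimate $p_k\ge c^k$ clean for all $k$ (rather than only asymptotically), but the crude bounds above already suffice. This is the argument of Claim~6.4 in \cite{xorlemma} with the two-block split replaced by a $k$-block split.
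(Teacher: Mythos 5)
Your proof is correct and is essentially the same argument as the paper's: the paper phrases it as a double-counting identity ($\alpha$ vs.\ $\vert\mathcal{P}\vert$), while you phrase it as an averaging bound $\vert\mathbb{E}[Q]\vert\le\max\vert Q\vert$, but these are identical, and your estimate $p_k\ge k!/k^k\ge e^{-k}$ matches the paper's $(k-1)!/k^{k-1}=e^{-O(k)}$. Your version is slightly more explicit in nailing down $C=e$, but there is no substantive difference.
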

As this fact is not quite identical to that in \cite{xorlemma}, we give an argument here:
\begin{proof}
We use the probabilistic method: let $\mathcal{P}$ be the set of $k$-equipartitions of $[kn]$. Let $T\subseteq [kn]$ of size $k$ be arbitrary; without loss of generality, suppose $T=[k]$. Consider a uniformly random $k$-equipartition $P=S_1\sqcup \cdots \sqcup S_k\in \mathcal{P}$. The probability that each $i\in T$ belongs to a distinct $S_j$ is easily seen to be
\begin{equation*}
    \prod_{i=1}^{k-1} \frac{(k-i)\cdot n}{kn-i}\geq \frac{(k-1)!\, n^{k-1}}{(kn)^{k-1}}=\frac{(k-1)!}{k^{k-1}}=e^{-O(k)},
\end{equation*}
where the last equality uses Stirling's approximation.  By symmetry, let $\alpha\in \mathbb{N}$ be the number of $k$-equipartitions that any arbitrary subset $T$ is in. Then we have
\begin{align*}
    \alpha \, \bigg\vert \sum_{S\subseteq [kn]:\vert S\vert=k} \hat{f}(S)\bigg\vert &= \bigg\vert \sum_{P\in \mathcal{P}} \sum_{i_j\in S_j\, \forall j\in [k]} \hat{f}(\{i_1,\ldots,i_k\})\bigg\vert\\
    &\leq \sum_{P\in \mathcal{P}}\bigg\vert \sum_{i_j\in S_j\, \forall j\in [k]} \hat{f}(\{i_1,\ldots,i_k\})\bigg\vert\\
    &\leq \vert \mathcal{P}\vert \max_{P\in \mathcal{P}} \bigg\vert \sum_{i_j\in S_j\, \forall j\in [k]} \hat{f}(\{i_1,\ldots,i_k\})\bigg\vert. \qedhere
\end{align*}
The first line follows from simple counting, while the second is the triangle inequality. Rearranging, we deduce that (writing $T$ as a generic subset of size $k$)
\begin{align*}
    \bigg\vert \sum_{S\subseteq [kn]:\vert S\vert=k} \hat{f}(S)\bigg\vert&\leq \frac{\vert \mathcal{P}\vert}{\alpha} \max_{P\in \mathcal{P}} \bigg\vert \sum_{i_j\in S_j\, \forall j\in [k]} \hat{f}(\{i_1,\ldots,i_k\})\bigg\vert\\
    &= \Pr_{P\sim \mathcal{P}}(T\in P)^{-1} \max_{P\in \mathcal{P}} \bigg\vert \sum_{i_j\in S_j\, \forall j\in [k]} \hat{f}(\{i_1,\ldots,i_k\})\bigg\vert\\
    &\leq e^{O(k)} \max_{P\in \mathcal{P}} \bigg\vert \sum_{i_j\in S_j\, \forall j\in [k]} \hat{f}(\{i_1,\ldots,i_k\})\bigg\vert.
\end{align*}
\end{proof}

The last fact that is needed can be deduced from the Chernoff bound:
\begin{fact}[Claim 6.5 of \cite{xorlemma}, adapted]
\label{fact:f4}
For any $a\geq \Omega(\sqrt{kn\log n})$, $\mathbb{E}[\vert \mathrm{Thr}_a\vert]\leq O(1/n^k)$.
\end{fact}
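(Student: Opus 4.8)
The plan is to unwind the definition of $\mathrm{Thr}_a$ and reduce the claim to a standard binomial tail bound. From the definition, $|\mathrm{Thr}_a(\mathbf{x})|$ equals $1$ exactly when $\bigl|\,|\mathbf{x}| - n/2\,\bigr| > a$ and equals $0$ otherwise; that is, $|\mathrm{Thr}_a|$ is the indicator that the Hamming weight of $\mathbf{x}$ deviates from $n/2$ by more than $a$. Hence, with $\mathbf{x}$ uniform on $\{0,1\}^n$,
\[
    \mathbb{E}_{\mathbf{x}}\bigl[\,|\mathrm{Thr}_a(\mathbf{x})|\,\bigr] = \Pr_{\mathbf{x}}\Bigl[\,\bigl|\,|\mathbf{x}| - n/2\,\bigr| > a\,\Bigr],
\]
where $|\mathbf{x}|$ is a sum of $n$ independent uniform bits, i.e.\ distributed as $\mathrm{Binomial}(n,1/2)$ with mean $n/2$.

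Next I would apply a Chernoff/Hoeffding tail bound to this binomial, namely $\Pr_{\mathbf{x}}\bigl[\,\bigl|\,|\mathbf{x}| - n/2\,\bigr| > a\,\bigr] \leq 2\exp(-2a^2/n)$. Plugging in the hypothesis $a \geq C_0\sqrt{kn\log n}$, where $C_0$ is the (sufficiently large) absolute constant hidden inside the $\Omega(\cdot)$, the exponent satisfies $2a^2/n \geq 2C_0^2\, k\log n$, so the right-hand side is at most $2\,n^{-2C_0^2 k} = O(1/n^k)$ as soon as $C_0 \geq 1$. This is precisely the asserted bound.

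There is essentially no obstacle in this argument; the only genuine choices are which concentration inequality to quote and checking that the constant inside the $\Omega(\cdot)$ is large enough for the exponent to swamp $n^{-k}$, both of which are routine. The one point worth recording is that a single constant $C_0$ must work uniformly over all $k$ and $n$: this is automatic above, since the resulting exponent $2C_0^2\,k\log n$ scales linearly in $k$, exactly matching the target $\log(n^k) = k\log n$. This smallness is what is needed when this fact is invoked in the proof of \Cref{lem:corrbounds}, where the ``large'' thresholds $a_i$ range over up to $(n/2)^k$ tuples and one only gets to exploit the smallness of a single coordinate's $\mathrm{Thr}$ factor.
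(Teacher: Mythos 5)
Your proof is correct and matches the approach the paper intends: the paper only remarks that the fact ``can be deduced from the Chernoff bound,'' and your argument is exactly that deduction, identifying $|\mathrm{Thr}_a|$ as the indicator of a binomial tail event and applying Hoeffding with $a \geq C_0\sqrt{kn\log n}$ to get $2e^{-2a^2/n} \leq 2n^{-2C_0^2 k} = O(n^{-k})$. Your closing remark about how the bound is consumed in the proof of \cref{lem:corrbounds} (one small factor beating $(n/2)^k$-many tuples) is also an accurate reading of that argument.
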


With these facts, we can now prove \cref{lem:corrbounds} in an entirely analogous fashion to \cite{xorlemma}:

\begin{proof}[Proof of \cref{lem:corrbounds}]
Fix $f\in \mathcal{F}$, and again write $F(\mathbf{x}_1,\ldots,\mathbf{x}_k)=e(f(\mathbf{x}_1,\ldots,\mathbf{x}_k))$. Let $F'= F-\mathbb{E}[F]$. 
Let $U_j = \{ i :  (j-1)n+1\leq i\leq jn \}$.
Then, possibly after relabelling variables, we have by \cref{fact:f3} that
\[
    \biggl| \sum_{S\subseteq [kn]:\vert S\vert=k} \hat{f}(S)\biggr|
    \leq C^k \biggl| \sum_{ i_j \in U_j , \forall j\in [k]} \hat{f}(\{i_1,\ldots,i_k\})\biggr|,
\]
so we may turn to bounding this latter term. We have
\begin{align*}
    \biggl| \sum_{ i_j \in U_j , \forall j\in [k]} \hat{f}(\{i_1,\ldots,i_k\}) \biggr|
    &= \biggl| \sum_{ i_j \in U_j, \forall j\in [k]} \mathbb{E}\Bigl[F'(\mathbf{x}_1,\ldots,\mathbf{x}_k)\prod_{j=1}^k e\bigl((\mathbf{x}_j)_{i_j}\bigr)\Bigr]\biggr|\\
    &=\biggl| \mathbb{E}\Bigl[F'(\mathbf{x}_1,\ldots,\mathbf{x}_k)\prod_{j=1}^k \Bigl(\sum_{ i_j \in U_j }e\bigl((\mathbf{x}_j)_{i_j}\bigr)\Bigr)\Bigr]\biggr|\\
    &\leq 2^k \sum_{1\leq a_i\leq n/2, \forall i\in [k]} \biggl| \mathbb{E}\Bigl[F'(\mathbf{x}_1,\ldots,\mathbf{x}_k)\prod_{i=1}^k \mathrm{Thr}_{a_i}(\mathbf{x}_i)\Bigr]\biggr|\\
    &\leq 2^k\Biggl(\sum_{1\leq a_i\leq O(\sqrt{kn\log n}), \forall i\in [k]} \biggl| \mathbb{E}\Bigl[F'(\mathbf{x}_1,\ldots,\mathbf{x}_k)\prod_{i=1}^k \mathrm{Thr}_{a_i}(\mathbf{x}_i)\Bigr]\biggr| + O(1)\Biggr)\\
    &\leq 2^k\cdot O\bigl(\sqrt{kn\log n} \bigr)^k\cdot \left(\sqrt{\frac{t}{n}}\right)^k\\
    &= O \bigl(\sqrt{tk\log n} \bigr)^k.
\end{align*}
The first inequality follows from \cref{fact:f2}, the second from \cref{fact:f4}, and the last from \cref{fact:f1}. Because we assumed that $\mathcal{F}$ is closed under negations of input variables and $f\in \mathcal{F}$ was arbitrary, we obtain the desired claim from \Cref{lem:unsigned} after absorbing the constant $C$ above into the implicit constant in this bound.
\end{proof}

\section{Discussion and Open Questions}
In this work, we have given a nearly complete interpolation between the previous PRGs obtained in the polarizing random walk framework by exploiting level-$k$ bounds on the class of functions, thus answering an open question from \cite{CHLT}. We do so by exploiting an alternate Fourier analysis via Taylor's theorem and utilizing multilinearity and random restrictions. This new analysis enables us to  construct PRGs from bounds on the potentially much smaller and better-understood Fourier quantity $M_k(\mathcal{F})$, for any $k \ge 3$. By generalizing the connection established in \cite{xorlemma}, this reduces the problem of constructing PRGs in this framework to proving correlation bounds. Further, we show how to get a PRG with an improved seed length if we have bounds on $L_{1,i}(\mathcal{F})$, for all $i \le k$, where $k\ge 3$. A natural open question along these lines is to obtain the improved seed length using bounds on $M_i(\mathcal{F})$ (instead of $L_{1,i}(\mathcal{F})$) for all $i \le k$. Another natural question is to construct a PRG using bounds on just $M_2$ (recall that \cite{CHLT} gives such a construction using bounds on $L_{1,2}(\mathcal{F})$ and our analysis only gives a non-trivial PRG from bounds on $M_k(\mathcal{F})$ when $k \ge 3$).

Finally, exploiting known level-$k$ bounds for $\mathbb{F}_2$ polynomials, our approach shows that the polarizing random walk framework can yield  pseudorandom generators for the class of $\mathbb{F}_2$ polynomials that is competitive with the state of the art.  As mentioned, we hope this paper gives evidence that stronger Fourier control (perhaps via proving the required correlation bounds) can give better PRGs using this framework, and can also handle classes that were previously not known to be possible.  In particular, we emphasize that proving \Cref{con:mk} even for the case of $k=3$ will lead to PRGs for $\mathbb{F}_2$-polynomials with degree $\omega(\log n)$, a longstanding problem in complexity theory. 

    \bibliographystyle{alpha}
    \bibliography{References}
    
\end{document}